\renewcommand{\cal}{\mathcal}
\newcommand{\ol}[1]{\overline{#1} \!\,} %overline
\newcommand{\ee}{\mathrm{e}} %\newcommand{\me}{\mathrm{e}}
\newcommand{\ii}{\mathrm{i}} %\newcommand{\mi}{\mathrm{i}}
\newcommand{\dd}{\mathrm{d}}
\newcommand{\deq}{\mathrel{\mathop:}=}
\newcommand{\eqd}{=\mathrel{\mathop:}}
\newcommand{\umat}{\mathbbmss{1}} %unit matrix
\renewcommand{\epsilon}{\varepsilon}
\renewcommand{\leq}{\leqslant}
\renewcommand{\geq}{\geqslant}
\newcommand{\R}{\mathbb{R}}
\newcommand{\C}{\mathbb{C}}
\newcommand{\N}{\mathbb{N}}
\newcommand{\Z}{\mathbb{Z}}
\newcommand{\p}[1]{({#1})}
\newcommand{\pb}[1]{\bigl({#1}\bigr)}
\newcommand{\pB}[1]{\Bigl({#1}\Bigr)}
\newcommand{\pbb}[1]{\biggl({#1}\biggr)}
\newcommand{\qb}[1]{\bigl[{#1}\bigr]}
\newcommand{\qB}[1]{\Bigl[{#1}\Bigr]}
\newcommand{\qbb}[1]{\biggl[{#1}\biggr]}
\newcommand{\abs}[1]{\lvert #1 \rvert}
\newcommand{\absb}[1]{\big\lvert #1 \big\rvert}
\newcommand{\absB}[1]{\Big\lvert #1 \Big\rvert}
\newcommand{\absbb}[1]{\bigg\lvert #1 \bigg\rvert}
\newcommand{\norm}[1]{\lVert #1 \rVert}
\newcommand{\normb}[1]{\big\lVert #1 \big\rVert}
\newcommand{\normB}[1]{\Big\lVert #1 \Big\rVert}
\newcommand{\scalar}[2]{\langle{#1} \mspace{2mu}, {#2}\rangle}
\newcommand{\scalarb}[2]{\big\langle{#1} \mspace{2mu}, {#2}\big\rangle}
\newcommand{\scalarB}[2]{\Big\langle{#1} \,\mspace{2mu},\, {#2}\Big\rangle}
\newcommand{\scalarbb}[2]{\bigg\langle{#1} \,\mspace{2mu},\, {#2}\bigg\rangle}
\newcommand{\com}[2]{[{#1} \mspace{2mu}, {#2}]}
\newcommand{\comb}[2]{\big[{#1} \mspace{2mu}, {#2}\big]}
\newcommand{\combb}[2]{\bigg[{#1} \,\mspace{2mu},\, {#2}\bigg]}
\newcommand{\bra}[1]{\langle #1 |}
\newcommand{\ket}[1]{| #1 \rangle}
\DeclareMathOperator{\tr}{Tr}
\DeclareMathOperator{\re}{Re}
\theoremstyle{plain} %plain, definition, remark
\newtheorem{theorem}{Theorem}[section]
\newtheorem*{theorem*}{Theorem}
\newtheorem{lemma}[theorem]{Lemma}
\newtheorem*{lemma*}{Lemma}
\newtheorem{corollary}[theorem]{Corollary}
\newtheorem*{corollary*}{Corollary}
\newtheorem*{proposition*}{Proposition}
\theoremstyle{definition} %plain, definition, remark
\newtheorem*{definition*}{Definition}
\newtheorem*{example*}{Example}
\newtheorem{remark}[theorem]{Remark}
\newtheorem*{remark*}{Remark}
\newtheorem*{remarks*}{Remarks}
\numberwithin{equation}{section}
\newcommand{\step}[1]{\vspace{1ex} \noindent {\itshape #1}}
\begin{document}
\title{Mean-Field Dynamics: Singular Potentials and Rate of Convergence}
\author{Antti Knowles and Peter Pickl}
\maketitle

\begin{abstract}
We consider the time evolution of a system of $N$ identical bosons whose interaction potential 
is rescaled by $N^{-1}$. We choose the initial wave function to describe a condensate in which 
all particles are in the same one-particle state. It is well known that in the mean-field limit 
$N \to \infty$ the quantum $N$-body dynamics is governed by the nonlinear Hartree equation. 
Using a nonperturbative method, we extend previous results on the mean-field limit in two 
directions. First, we allow a large class of singular interaction potentials as well as strong, 
possibly time-dependent external potentials. Second, we derive bounds on the rate of 
convergence of the quantum $N$-body dynamics to the Hartree dynamics.
\end{abstract}

\section{Introduction}
We consider a system of $N$ identical bosons in $d$ dimensions, described by a wave function 
$\Psi_N \in \cal H^{(N)}$. Here
\begin{equation*}
\cal H^{(N)} \;\deq\; L^2_+(\R^{Nd}, \dd x_1 \cdots \dd x_N)
\end{equation*}
is the subspace of $L^2(\R^{Nd}, \dd x_1 \cdots \dd x_N)$ consisting of wave functions 
$\Psi_N(x_1, \dots, x_N)$ that are symmetric under permutation of their arguments $x_1, \dots, 
x_N \in \R^d$. The Hamiltonian is given by \begin{equation} \label{mean-field Hamiltonian in 
introduction}
H_N \;=\; \sum_{i = 1}^N h_i + \frac{1}{N} \sum_{1 \leq i < j \leq N} w(x_i - x_j)\,,
\end{equation}
where $h_i$ denotes a one-particle Hamiltonian $h$ (to be specified later) acting on the 
coordinate $x_i$, and $w$ is an interaction potential. Note the mean-field scaling $1/N$ in 
front of the interaction potential, which ensures that the free and interacting parts of $H_N$ 
are of the same order. 

The time evolution of $\Psi_N$ is governed by the $N$-body Schr\"odinger equation
\begin{equation} \label{Schrodinger equation for Psi}
\ii \partial_t \Psi_N(t) \;=\; H_N \Psi_N(t) \,, \qquad \Psi_N(0) \;=\; \Psi_{N,0}\,.
\end{equation}
For definiteness, let us consider factorized initial data $\Psi_{N,0} = \varphi_0^{\otimes N}$ 
for some $\varphi_0 \in L^2(\R^d)$ satisfying the normalization condition 
$\norm{\varphi_0}_{L^2(\R^d)} = 1$. Clearly, because of the interaction between the particles, 
the factorization of the wave function is not preserved by the time evolution. However, it 
turns out that for large $N$ the interaction potential experienced by any single particle may 
be approximated by an effective mean-field potential, so that the wave function $\Psi_N(t)$ 
remains approximately factorized for all times. In other words we have that, in a sense to be 
made precise, $\Psi_N(t) \approx \varphi(t)^{\otimes N}$ for some appropriate $\varphi(t)$. A 
simple argument shows that in a product state $\varphi(t)^{\otimes N}$ the interaction 
potential experienced by a particle is approximately $w * \abs{\varphi(t)}^2$, where $*$ 
denotes convolution. This implies that $\varphi(t)$ is a solution of the nonlinear Hartree 
equation
\begin{equation} \label{hartree equation}
\ii \partial_t \varphi(t) \;=\; h \varphi(t) + \pb{w * \abs{\varphi(t)}^2} \varphi(t)\,, \qquad 
\varphi(0) \;=\; \varphi_0\,.
\end{equation}

Let us be a little more precise about what one means with $\Psi_N \approx \varphi^{\otimes N}$ 
(we omit the irrelevant time argument). One does not expect the $L^2$-distance $\normb{\Psi_N - 
\varphi^{\otimes N}}_{L^2(\R^{Nd})}$ to become small as $N \to \infty$. A more useful, weaker, 
indicator of convergence should depend only on a finite, fixed\footnote{In fact, as shown in 
Corollary \ref{corollary of main theorem}, $k$ may be taken to grow like $o(N)$.} number, $k$, 
of particles. To this end we define the reduced $k$-particle density matrix
\begin{equation*}
\gamma_N^{(k)} \;\deq\; \tr_{k+1, \dots, N} \ket{\Psi_N} \bra{\Psi_N}\,,
\end{equation*}
where $\tr_{k+1, \dots, N}$ denotes the partial trace over the coordinates $x_{k+1}, \dots, 
x_N$, and $\ket{\Psi_N} \bra{\Psi_N}$ denotes (in accordance with the usual Dirac notation) the 
orthogonal projector onto $\Psi_N$. In other words, $\gamma_N^{(k)}$ is the positive trace 
class operator on $L^2_+(\R^{kd}, \dd x_1 \cdots \dd x_k)$ with operator kernel
\begin{equation*}
\gamma_N^{(k)}(x_1, \dots, x_k; y_1, \dots, y_k) \;=\; \int \dd x_{k+1} \cdots \dd x_N \; 
\Psi_N(x_1, \dots, x_N) \ol{\Psi_N(y_1, \dots, y_k, x_{k+1}, \dots, x_N)}\,.
\end{equation*}
The reduced $k$-particle density matrix $\gamma^{(k)}_N$ embodies all the information contained 
in the full $N$-particle wave function that pertains to at most $k$ particles.
There are two commonly used indicators of the closeness $\gamma_N^{(k)} \approx (\ket{\varphi} 
\bra{\varphi})^{\otimes k}$: the projection
\begin{equation*}
E^{(k)}_N \;\deq\; 1 - \scalarb{\varphi^{\otimes k}}{\gamma_N^{(k)} \varphi^{\otimes k}}
\end{equation*}
and the trace norm distance
\begin{equation}
R^{(k)}_N \;\deq\; \tr \absB{\gamma_N^{(k)} - (\ket{\varphi} \bra{\varphi})^{\otimes k}}\,.
\end{equation}
It is well known (see e.g.\ \cite{LiebSeiringer2002}) that all of these indicators are 
equivalent in the sense that the vanishing of either $R^{(k)}_N$ or $E^{(k)}_N$ for some $k$ in 
the limit $N \to \infty$ implies that $\lim_N R^{(k')}_N = \lim_N E^{(k')}_N = 0$ for all $k'$.  
However, the rate of convergence may differ from one indicator to another. Thus, when studying 
rates of convergence, they are not equivalent (see Section \ref{measures of convergence} below 
for a full discussion).

The study of the convergence of $\gamma^{(k)}_N(t)$ in the mean-field limit towards 
$(\ket{\varphi(t)} \bra{\varphi(t)})^{\otimes k}$ for all $t$ has a history going back almost 
thirty years. The first result is due to Spohn \cite{Spohn1980}, who showed that $\lim_N 
R_N^{(k)}(t) = 0$ for all $t$ provided that $w$ is bounded. His method is based on the BBGKY 
hierarchy,
\begin{multline} \label{BBGKY hierarchy}
\ii \partial_t \gamma_N^{(k)}(t) \;=\; \sum_{i = 1}^k \comb{h_i}{\gamma_N^{(k)}(t)} + 
\frac{1}{N} \sum_{1 \leq i < j \leq k} \comb{w(x_i - x_j)}{\gamma_N^{(k)}(t)}
\\
+ \frac{N-k}{N} \sum_{i = 1}^k \tr_{k+1} \comb{w(x_i - x_{k+1})}{\gamma_N^{(k+1)}(t)}\,,
\end{multline}
an equation of motion for the family $(\gamma_N^{(k)}(t))_{k \in \N}$ of reduced density 
matrices. It is a simple computation to check that the BBGKY hierarchy is equivalent to the 
Schr\"odinger equation \eqref{Schrodinger equation for Psi} for $\Psi_N(t)$. Using a 
perturbative expansion of the BBGKY hierarchy, Spohn showed that in the limit $N \to \infty$ 
the family $(\gamma_N^{(k)}(t))_{k \in \N}$ converges to a family $(\gamma_\infty^{(k)}(t))_{k 
\in \N}$ that satisfies the limiting BBGKY obtained by formally setting $N = \infty$ in 
\eqref{BBGKY hierarchy}. This limiting hierarchy is easily seen to be equivalent to the Hartree 
equation \eqref{hartree equation} via the identification $\gamma_\infty^{(k)}(t) = 
(\ket{\varphi(t)} \bra{\varphi(t)})^{\otimes k}$. We refer to \cite{ErdosYau2001} for a short 
discussion of some subsequent developments.

In the past few years considerable progress has been made in strengthening such results in 
mainly two directions. First, the convergence $\lim_N R_N^{(k)}(t) = 0$ for all $t$ has been 
proven for singular interaction potentials $w$. It is for instance of special physical interest 
to understand the case of a Coulomb potential, $w(x) = \lambda \abs{x}^{-1}$ where $\lambda \in 
\R$. The proofs for singular interaction potentials are considerably more involved than for 
bounded interaction potentials. The first result for the case $h = -\Delta$ and $w(x) = \lambda 
\abs{x}^{-1}$ is due to Erd\H{o}s and Yau \cite{ErdosYau2001}. Their proof uses the BBGKY 
hierarchy and a weak compactness argument. In \cite{ElgartSchlein2007}, Schlein and Elgart 
extended this result to the technically more demanding case of a semirelativistic kinetic 
energy, $h = \sqrt{\umat - \Delta}$ and $w(x) = \lambda \abs{x}^{-1}$. This is a critical case 
in the sense that the kinetic energy has the same scaling behaviour as the Coulomb potential 
energy, thus requiring quite refined estimates. A different approach, based on operator 
methods, was developed by Fr\"ohlich et al.\ in \cite{FrohlichKnowlesSchwarz2009}, where the 
authors treat the case $h = -\Delta$ and $w(x) = \lambda \abs{x}^{-1}$. Their proof relies on 
dispersive estimates and counting of Feynman graphs. Yet another approach was adopted by 
Rodnianski and Schlein in \cite{RodnianskiSchlein2007}. Using methods inspired by a 
semiclassical argument of Hepp \cite{Hepp1974} focusing on the dynamics of coherent states in 
Fock space, they show convergence to the mean-field limit in the case $h = -\Delta$ and $w(x) = 
\lambda \abs{x}^{-1}$.

The second area of recent progress in understanding the mean-field limit is deriving estimates 
on the rate of convergence to the mean-field limit. Methods based on expansions, as used in 
\cite{Spohn1980} and \cite{FrohlichKnowlesSchwarz2009}, give very weak bounds on the error 
$R_N^{(1)}(t)$, while weak compactness arguments, as used in \cite{ErdosYau2001} and 
\cite{ElgartSchlein2007}, yield no information on the rate of convergence. From a physical 
point of view, where $N$ is large but finite, it is of some interest to have tight error bounds 
in order to be able to address the question whether the mean-field approximation may be 
regarded as valid. The first reasonable estimates on the error were derived for the case $h = 
-\Delta$ and $w(x) = \lambda \abs{x}^{-1}$ by Rodnianski and Schlein in their work 
\cite{RodnianskiSchlein2007} mentioned above. In fact they derive an explicit estimate on the 
error of the form
\begin{equation*}
R_N^{(k)}(t) \;\leq\; \frac{C_1(k)}{\sqrt{N}} \, \ee^{C_2(k) t}
\end{equation*}
for some constants $C_1(k), C_2(k) > 0$. Using a novel approach inspired by Lieb-Robinson 
bounds, Erd\H{o}s and Schlein \cite{ErdosSchlein2008} further improved this estimate under the 
more restrictive assumption that $w$ is bounded and its Fourier transform integrable. Their 
result is
\begin{equation*}
R_N^{(k)}(t) \;\leq\;\frac{C_1}{N} \, \ee^{C_2 k} \ee^{C_3 t}\,,
\end{equation*}
for some constants $C_1, C_2, C_3 > 0$.

In the present article we adopt yet another approach based on a method of Pickl 
\cite{Pickl2009}. We strengthen and generalize many of the results listed above, by treating 
more singular interaction potentials as well as deriving estimates on the rate of convergence.  
Moreover, our approach allows for a large class of (possibly time-dependent) external 
potentials, which might for instance describe a trap confining the particles to a small volume.  
We also show that if the solution $\varphi(\cdot)$ of the Hartree equation satisfies a 
scattering condition, all of the error estimates are uniform in time.

The outline of the article is as follows. Section \ref{measures of convergence} is devoted to a 
short discussion of the indicators of convergence $E^{(k)}_N$ and $R^{(k)}_N$, in which we 
derive estimates relating them to each other.
%; see Lemmas \ref{lemma: relationship between Es} and \ref{lemma: estimates between R and E}.
In Section \ref{L2 potentials} we state and prove our first main result, which concerns the 
mean-field limit in the case of $L^2$-type singularities in $w$; see Theorem \ref{theorem for 
L^2 potentials} and Corollary \ref{corollary of main theorem}. In Section \ref{section: 
singular potentials} we state and prove our second main result, which allows for a larger class 
of singularities such as the nonrelativistic critical case $h = -\Delta$ and $w(x) = \lambda 
\abs{x}^{-2}$; see Theorem \ref{theorem for singular potentials}. For an outline of the methods 
underlying our proofs, see the beginnings of Sections \ref{L2 potentials} and \ref{section: 
singular potentials}.

\subsubsection*{Acknowledgements}
We would like to thank J.\ Fr\"ohlich and E.\ Lenzmann for helpful and stimulating discussions.  
We also gratefully acknowledge discussions with A.\ Michelangeli which led to Lemma \ref{lemma: 
relationship between Es}.

\subsubsection*{Notations}
Except in definitions, in statements of results and where confusion is possible, we refrain 
from indicating the explicit dependence of a quantity $a_N(t)$ on the time $t$ and the particle 
number $N$. When needed, we use the notations $a(t)$ and $a \vert_t$ interchangeably to denote 
the value of the quantity $a$ at time $t$. The symbol $C$ is reserved for a generic positive 
constant that may depend on some fixed parameters. We abbreviate $a \leq C b$ with $a \lesssim 
b$.
To simplify notation, we assume that $t \geq 0$.

We abbreviate $L^p(\R^d, \dd x) \equiv L^p$ and $\norm{\cdot}_{L^p} \equiv \norm{\cdot}_p$. We 
also set $\norm{\cdot}_{L^2(\R^{Nd})} = \norm{\cdot}$. For $s \in \R$ we use $H^s \equiv 
H^s(\R^d)$ to denote the Sobolev space with norm $\norm{f}_{H^s} = \normb{(1 + \abs{k}^2)^{s/2} 
\hat{f}}_2$, where $\hat{f}$ is the Fourier transform of $f$.

Integer indices on operators denote particle number: A $k$-particle operator $A$ (i.e.\ an 
operator on $\mathcal{H}^{(k)}$) acting on the coordinates $x_{i_1}, \dots, x_{i_k}$, where 
$i_1 < \dots < i_k$, is denoted by $A_{i_1 \dots i_k}$. Also, by a slight abuse of notation, we 
identify $k$-particle functions $f(x_1, \dots, x_k)$ with their associated multiplication 
operators on $\mathcal{H}^{(k)}$. The operator norm of the multiplication operator $f$ is equal 
to, and will always be denoted by, $\norm{f}_\infty$.

We use the symbol $\mathcal{Q}(\cdot)$ to denote the form domain of a semibounded operator. We 
denote the space of bounded linear maps from $X_1$ to $X_2$ by $\cal L(X_1; X_2)$, and 
abbreviate $\cal L(X) = \cal L(X;X)$. We abbreviate the operator norm of $\cal 
L\pb{L^2(\R^{Nd})}$ by $\norm{\cdot}$.
For two Banach spaces, $X_1$ and  $X_2$, contained in some larger space, we set
\begin{align*}
\norm{f}_{X_1 + X_2} &\;=\; \inf_{f = f_1 + f_2} \pb{\norm{f_1}_{X_1} + \norm{f_2}_{X_2}}\,,
\\
\norm{f}_{X_1 \cap X_2} &\;=\; \norm{f}_{X_1} + \norm{f}_{X_2}\,,
\end{align*}
and denote by $X_1 + X_2$ and $X_1 \cap X_2$ the corresponding Banach spaces.

\section{Indicators of convergence} \label{measures of convergence}
This section is devoted to a discussion, which might also be of independent interest, of 
quantitative relationships between the indicators $E^{(k)}_N$ and $R^{(k)}_N$. Throughout this 
section we suppress the irrelevant index $N$.

Take a $k$-particle density matrix $\gamma^{(k)} \in \mathcal{L}(\mathcal{H}^{(k)})$ and a 
one-particle condensate wave function $\varphi \in L^2$.
The following lemma gives the relationship between different elements of the sequence $E^{(1)}, 
E^{(2)}, \dots$, where, we recall,
\begin{equation} \label{definition of E}
E^{(k)} \;=\; 1 - \scalarb{\varphi^{\otimes k}}{\gamma^{(k)} \, \varphi^{\otimes k}}\,.
\end{equation}
\begin{lemma} \label{lemma: relationship between Es}
Let $\gamma^{(k)} \in \mathcal{L}(\mathcal{H}^{(k)})$ satisfy
\begin{equation*}
\gamma^{(k)} \;\geq\; 0 \,,\qquad \tr \gamma^{(k)} \;=\; 1\,.
\end{equation*}
Let $\varphi \in L^2$ satisfy $\norm{\varphi} = 1$. Then
\begin{equation} \label{convergence of k-particle density matrix}
E^{(k)} \;\leq\; k \, E^{(1)}\,.
\end{equation}
\end{lemma}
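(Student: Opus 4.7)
The plan is to rewrite $E^{(k)}$ as the expectation of an orthogonal projection in the state $\gamma^{(k)}$, then telescope that projection into a sum of $k$ simpler projections, each of which contributes at most $E^{(1)}$. I would introduce the one-particle projection $p \deq \ket{\varphi} \bra{\varphi}$ and its complement $q \deq \umat - p$, and for each $i \in \h{1, \dots, k}$ denote by $p_i$ and $q_i$ the operators on $\cal H^{(k)}$ that act as $p$ and $q$ respectively on the $i$-th coordinate and as the identity on all others. Then $p^{\otimes k} = p_1 p_2 \cdots p_k$, so $E^{(k)} = \tr\pb{(\umat - p_1 \cdots p_k) \gamma^{(k)}}$.

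The key step is the telescoping identity
\begin{equation*}
\umat - p_1 p_2 \cdots p_k \;=\; \sum_{i=1}^k p_1 \cdots p_{i-1} \, q_i\,,
\end{equation*}
which I would verify by a one-line induction (equivalently, by observing that $p_1 \cdots p_{i-1} q_i = p_1 \cdots p_{i-1} - p_1 \cdots p_i$). Because the factors $p_1, \dots, p_{i-1}, q_i$ act on pairwise distinct tensor factors, they mutually commute, so each summand is a product of commuting orthogonal projections, hence itself an orthogonal projection. In particular, we get the operator inequality $0 \leq p_1 \cdots p_{i-1} q_i \leq q_i$.

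From here the rest is essentially bookkeeping. Using $\gamma^{(k)} \geq 0$ together with the monotonicity $\tr(A \gamma^{(k)}) \leq \tr(B \gamma^{(k)})$ for $0 \leq A \leq B$, each telescoping piece contributes at most $\tr(q_i \gamma^{(k)})$, yielding $E^{(k)} \leq \sum_{i=1}^k \tr(q_i \gamma^{(k)})$. By the bosonic symmetry of $\gamma^{(k)}$ on $\cal H^{(k)}$ the quantity $\tr(q_i \gamma^{(k)})$ is independent of $i$, and by definition of the partial trace equals $\tr(q \gamma^{(1)}) = E^{(1)}$, where $\gamma^{(1)} \deq \tr_{2, \dots, k} \gamma^{(k)}$. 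Summing over the $k$ terms gives $E^{(k)} \leq k E^{(1)}$.

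The only real (if minor) obstacle is the justification of the operator inequality $p_1 \cdots p_{i-1} q_i \leq q_i$; this relies crucially on the commutativity of the $p_j$ and $q_i$, without which the product would not even be selfadjoint, let alone a subprojection of $q_i$. Everything else — the telescoping identity, the use of $\gamma^{(k)} \geq 0$, and the symmetry argument — is routine.
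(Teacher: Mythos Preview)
Your proof is correct and rests on the same idea as the paper's: both exploit the decomposition $\umat - p_1 \cdots p_k$ into pieces each dominated by a single $q_i$, then use $\gamma^{(k)} \geq 0$ and symmetry to bound each contribution by $E^{(1)}$. The only difference is packaging --- the paper carries this out inductively, proving $E^{(k)} \leq E^{(k-1)} + E^{(1)}$ via explicit orthonormal basis expansions, whereas you telescope all at once and invoke the abstract fact that a product of commuting projections is a subprojection of each factor; your route is a bit more direct but mathematically equivalent.
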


\begin{proof}
Let $\pb{\Phi_i^{(k)}}_{i \geq 1}$ be an orthonormal basis of $\mathcal{H}^{(k)}$ with 
$\Phi_1^{(k)} = \varphi^{\otimes k}$. Then
\begin{align*}
\scalarb{\varphi^{\otimes k}}{\gamma^{(k)} \, \varphi^{\otimes k}} &\;=\; \sum_{i\geq 1} 
\scalarb{\varphi \otimes \Phi^{(k-1)}_i}{\gamma^{(k)} \, \varphi \otimes \Phi^{(k-1)}_i} - 
\sum_{i\geq 2} \scalarb{\varphi \otimes \Phi^{(k-1)}_i}{\gamma^{(k)} \, \varphi \otimes 
\Phi^{(k-1)}_i}
\\
&\;=\; \scalar{\varphi}{\gamma^{(1)} \, \varphi} - \sum_{i\geq 2} \scalarb{\varphi \otimes 
\Phi^{(k-1)}_i}{\gamma^{(k)} \, \varphi \otimes \Phi^{(k-1)}_i}\,.
\end{align*}
Therefore,
\begin{align*}
&\qquad \scalar{\varphi}{\gamma^{(1)} \, \varphi} - \scalarb{\varphi^{\otimes k}}{\gamma^{(k)} 
\, \varphi^{\otimes k}} \\
&\;=\; \sum_{i\geq 2} \scalarb{\varphi \otimes \Phi^{(k-1)}_i}{\gamma^{(k)} \, \varphi \otimes 
\Phi^{(k-1)}_i}
\\
&\;\leq\; \sum_{i\geq 2} \sum_{j \geq 1} \scalarb{\Phi^{(1)}_j \otimes 
\Phi^{(k-1)}_i}{\gamma^{(k)} \, \Phi^{(1)}_j \otimes \Phi^{(k-1)}_i}
\\
&\;=\;
\sum_{i\geq 1} \sum_{j \geq 1} \scalarb{\Phi^{(1)}_j \otimes \Phi^{(k-1)}_i}{\gamma^{(k)} \, 
\Phi^{(1)}_j \otimes \Phi^{(k-1)}_i} - \sum_{j \geq 1} \scalarb{\Phi^{(1)}_j \otimes 
\varphi^{\otimes (k-1)}}{\gamma^{(k)} \, \Phi^{(1)}_j \otimes \varphi^{\otimes (k-1)}}
\\
&\;=\; 1 - \scalarb{\varphi^{\otimes (k-1)}}{\gamma^{(k-1)} \, \varphi^{\otimes (k-1)}}\,.
\end{align*}
This yields
\begin{equation*}
E^{(k)} \;\leq\; E^{(k-1)} + E^{(1)}\,,
\end{equation*}
and the claim follows.
\end{proof}

\begin{remark}
The bound in \eqref{convergence of k-particle density matrix} is sharp. Indeed, let us suppose 
that $E^{(k)} \;\leq\; k \,f(k)\,  E^{(1)}$ for some function $f$. Then
\begin{equation*}
f(k) \;\geq\; \sup_{\gamma^{(k)}} \, \frac{E^{(k)}}{k E^{(1)}} \;\geq\; \sup_{0 < \alpha < 1} 
\frac{1 - (1 - \alpha)^k}{k \alpha}\;\geq\; \lim_{\alpha \to 0} \frac{1 - (1 - \alpha)^k}{k 
\alpha} \;=\; 1\,,
\end{equation*}
where the second inequality follows by restricting the supremum to product states $\gamma^{(k)} 
= (\ket{\psi} \bra{\psi})^{\otimes k}$ and writing $\alpha = E^{(1)}$.
\end{remark}

The next lemma describes the relationship between $E^{(k)}$ and $R^{(k)}$, where, we recall,
\begin{equation*}
R^{(k)} \;=\; \tr \absb{\gamma^{(k)} - (\ket{\varphi} \bra{\varphi})^{\otimes k}}\,.
\end{equation*}

\begin{lemma} \label{lemma: estimates between R and E}
Let $\gamma^{(k)} \in \mathcal{L}(\mathcal{H}^{(k)})$ be a density matrix and $\varphi \in L^2$ 
satisfy $\norm{\varphi} = 1$. Then
\begin{subequations} \label{R <> E}
\begin{align}
E^{(k)} &\;\leq\; R^{(k)}\,, \label{E < R}
\\
R^{(k)} &\;\leq\; \sqrt{8 \, E^{(k)}}\,. \label{R < E}
\end{align}
\end{subequations}
\end{lemma}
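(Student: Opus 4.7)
Set $P \deq (\ket{\varphi}\bra{\varphi})^{\otimes k}$. The bound \eqref{E < R} is essentially trace duality, so I would dispose of it first. Since $E^{(k)} = \tr\qb{P\pb{P - \gamma^{(k)}}}$ and $\norm{P} = 1$, the elementary inequality $\absb{\tr(AB)} \leq \norm{A} \tr \abs{B}$ gives $E^{(k)} \leq R^{(k)}$ immediately.

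For \eqref{R < E} my plan is to exploit a structural feature of $A \deq \gamma^{(k)} - P$ that is not shared by differences of arbitrary density matrices: its \emph{negative} part $A_-$ has rank at most one. This is the main conceptual step, and the only place where the rank-one nature of $P$ is used in an essential way. To see it, I would note that for any $w \in \ker P$,
\begin{equation*}
\scalarb{w}{Aw} \;=\; \scalarb{w}{\gamma^{(k)} w} \;\geq\; 0\,,
\end{equation*}
so the negative spectral subspace of $A$ is disjoint from $\ker P$ and therefore injects into $\ran P$, which is one-dimensional.

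Given this, three short moves finish the proof. First, $\tr A = \tr\gamma^{(k)} - \tr P = 0$ forces $\tr A_+ = \tr A_-$, and hence $R^{(k)} = 2 \tr A_-$. Second, because $A_-$ has rank one its trace norm coincides with its Hilbert-Schmidt norm, so using $A_+ A_- = 0$ I would write $\tr A_- = \sqrt{\tr A_-^2} \leq \sqrt{\tr A^2}$. Third, a direct expansion of $A^2$ together with $P^2 = P$ and $\tr\pb{\gamma^{(k)} P} = 1 - E^{(k)}$ gives
\begin{equation*}
\tr A^2 \;=\; \tr\qb{(\gamma^{(k)})^2} - 1 + 2 E^{(k)}\,.
\end{equation*}
Since $\gamma^{(k)} \geq 0$ and $\tr \gamma^{(k)} = 1$ imply $\gamma^{(k)} \leq \umat$, one has $\tr\qb{(\gamma^{(k)})^2} \leq \tr \gamma^{(k)} = 1$, and I conclude $R^{(k)} \leq 2\sqrt{2 E^{(k)}} = \sqrt{8 E^{(k)}}$. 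I expect the rank-one claim for $A_-$ to be the only nontrivial point; everything else is a short rearrangement of trace and Hilbert-Schmidt norms.
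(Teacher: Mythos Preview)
Your proof is correct and follows essentially the same approach as the paper. Both arguments hinge on the observation that $A = \gamma^{(k)} - P$ has at most one negative eigenvalue (your rank-one claim for $A_-$), combine this with $\tr A = 0$ to get $R^{(k)} = 2\abs{\lambda_0}$, pass to the Hilbert--Schmidt norm of $A$, and finish with the same computation $\tr A^2 \leq 2E^{(k)}$; the paper routes through the identity $\tr\abs{A} = 2\norm{A}$ whereas you write $\tr A_- = \sqrt{\tr A_-^2}$, but these are equivalent packagings of the same step.
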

\begin{proof}
It is convenient to introduce the shorthand
\begin{equation*}
p^{(k)} \;\deq\; (\ket{\varphi} \bra{\varphi})^{\otimes k}.
\end{equation*}
Thus,
\begin{equation*}
E^{(k)} \;=\; 1 - \scalarb{\varphi^{\otimes k}}{\gamma^{(k)} \, \varphi^{\otimes k}} \;=\; \tr 
\pb{p^{(k)} -  p^{(k)} \gamma^{(k)}} \;\leq\; \norm{p^{(k)}} \tr \absb{p^{(k)} - \gamma^{(k)}} 
\;=\; R^{(k)}\,,
\end{equation*}
which is \eqref{E < R}. In order to prove \eqref{R < E} it is easiest to use the identity
\begin{equation} \label{Seiringer's estimate}
\tr \absb{p^{(k)} - \gamma^{(k)}} \;=\; 2 \, \normb{p^{(k)} - \gamma^{(k)}}\,,
\end{equation}
valid for any one-dimensional projector $p^{(k)}$ and nonnegative density matrix 
$\gamma^{(k)}$.
This was first observed by Seiringer; see \cite{RodnianskiSchlein2007}. For the convenience of 
the reader we recall the proof of \eqref{Seiringer's estimate}. Let $(\lambda_n)_{n \in \N}$ be 
the sequence of eigenvalues of the trace class operator $A \deq \gamma^{(k)} - p^{(k)}$. Since 
$p^{(k)}$ is a rank one projection, $A$ has at most one negative eigenvalue, say $\lambda_0$.  
Also, $\tr A = 0$ implies that $\sum_{n} \lambda_n = 0$. Thus, $\sum_{n} \abs{\lambda_n} = 2 
\abs{\lambda_0}$, which is \eqref{Seiringer's estimate}.

Now \eqref{Seiringer's estimate} yields
\begin{equation*}
R^{(k)} \;=\; \tr \absb{p^{(k)} - \gamma^{(k)}} \;=\; 2 \, \norm{p^{(k)} - \gamma^{(k)}}  
\;\leq\; 2 \, \sqrt{\tr \pb{p^{(k)} - \gamma^{(k)}}^2}\,.
\end{equation*}
Then \eqref{R < E} follows from
\begin{equation*}
\tr \pb{p^{(k)} - \gamma^{(k)}}^2 \;=\; 1 - 2 \tr \pb{p^{(k)} \gamma^{(k)}} + \tr 
(\gamma^{(k)})^2 \;\leq\; E^{(k)}  - \tr \pb{p^{(k)} \gamma^{(k)}} + 1 \;=\; 2 E^{(k)}\,.
\end{equation*}

Alternatively, one may prove \eqref{R < E} without \eqref{Seiringer's estimate} by using the 
polar decomposition and the Cauchy-Schwarz inequality for Hilbert-Schmidt operators.
\end{proof}

\begin{remark}
Up to constant factors the bounds \eqref{R <> E} are sharp, as the following examples show. 
Here we drop the irrelevant index $k$. Consider first
\begin{equation*}
\varphi \;=\;
\begin{pmatrix}
1 \\0
\end{pmatrix}\,,
\qquad
\gamma \;=\; \begin{pmatrix}
1 - a & 0
\\
0 & a
\end{pmatrix}\,,
\end{equation*}
where $0 \leq a \leq 1$. As above we set $p \deq \ket{\varphi} \bra{\varphi}$.
One finds
\begin{equation*}
E \;=\; 1 - \scalar{\varphi}{\gamma \, \varphi} \;=\; a \,,\qquad R \;=\; \tr \abs{p - \gamma} 
\;=\; 2a\,,
\end{equation*}
so that \eqref{E < R} is sharp up to a constant factor.

It is not hard to see that if $\gamma$ and $p$ commute then \eqref{R < E} can be replaced with 
the stronger bound $R \lesssim E$. In order to show that in general \eqref{R < E} is sharp up 
to a constant factor, consider
\begin{equation*}
\varphi \;=\;
\begin{pmatrix}
1 \\0
\end{pmatrix}\,,
\qquad
\gamma \;=\; \begin{pmatrix}
1 - a & \sqrt{a - a^2}
\\
\sqrt{a - a^2} & a
\end{pmatrix}\,,
\end{equation*}
where $0 \leq a \leq 1$. One readily sees that $\gamma$ is a density matrix (in fact, a 
one-dimensional projector). A short calculation yields
\begin{equation*}
E \;=\; 1 - \scalar{\varphi}{\gamma \, \varphi} \;=\; a
\end{equation*}
as well as
\begin{equation*}
\tr \absb{\gamma (1 - p)} \;=\; \sqrt{a}\,.
\end{equation*}
Using
\begin{equation*}
\tr \absb{\gamma(1 - p)} \;=\; \tr \absb{\gamma - p + p - \gamma p } \;\leq\; 2 \, \tr \abs{p - 
\gamma}
\end{equation*}
we therefore find
\begin{equation*}
R \;=\; \tr \abs{p - \gamma} \;\geq\; \frac{\sqrt{a}}{2} \;=\; \frac{\sqrt{E}}{2}\,,
\end{equation*}
as desired.
\end{remark}

\section{Convergence for $L^2$-type singularities} \label{L2 potentials}
This section is devoted to the case $w \in L^2 + L^\infty$.
\subsection{Outline and main result}
Our method relies on controlling the quantity
\begin{equation}
\alpha_N(t) \;\deq\; E^{(1)}_N(t)\,.
\end{equation}
To this end, we derive an estimate of the form
\begin{equation} \label{main estimate for alpha'}
\dot{\alpha}_N(t) \;\leq\; A_N(t) + B_N(t) \, \alpha_N(t)\,,
\end{equation}
which, by Gr\"onwall's lemma, implies
\begin{equation} \label{Gronwall}
\alpha_N(t) \;\leq\; \alpha_N(0) \, \ee^{\int_0^t B_N} + \int_0^t A_N(s) \, \ee^{\int_s^t B_N} 
\, \dd s\,.
\end{equation}
In order to show \eqref{main estimate for alpha'}, we differentiate $\alpha_N(t)$ and note that 
all terms arising from the one-particle Hamiltonian vanish. We control the remaining terms by 
introducing the time-dependent orthogonal projections
\begin{equation*}
p(t) \;\deq\; \ket{\varphi(t)} \bra{\varphi(t)} \,,\qquad q(t) \;\deq\; \umat - p(t)\,.
\end{equation*}
We then partition $\umat = p(t) + q(t)$ appropriately and use the following heuristics for 
controlling the terms that arise in this manner. Factors $p(t)$ are used to control 
singularities of $w$ by exploiting the smoothness of the Hartree wave function $\varphi(t)$.  
Factors $q(t)$ are expected to yield something small, i.e.\ proportional to $\alpha_N(t)$, in 
accordance with the identity $\alpha_N(t) = \scalar{\Psi_N(t)}{q_1(t)\Psi_N(t)}$.

For the following it is convenient to rewrite the Hamiltonian \eqref{mean-field Hamiltonian in 
introduction} as
\begin{equation} \label{mean-field Hamiltonian}
H_N \;=\; \sum_{i = 1}^N h_i + \frac{1}{N} \sum_{1 \leq i < j \leq N} W_{ij} \;\eqd\; H_N^0 + 
H_N^W\,,
\end{equation}
where $W_{ij} \;\deq\; w(x_i - x_j)$. We may now list our assumptions.
%TODO: in thesis explain about forms and self-adjointness
\begin{itemize}
\item[(A1)] The one-particle Hamiltonian $h$ is self-adjoint and bounded from below. Without 
loss of generality we assume that $h \geq 0$. We define the Hilbert space $X_N \;=\; 
\mathcal{Q}(H_N^0)$ as the form domain of $H_N^0$ with norm
\begin{equation*}
\norm{\Psi}_{X_N} \;\deq\; \normb{(1 + H_N^0)^{1/2} \Psi}\,.
%\scalar{\Psi}{\Psi'}_{X_N} \;\deq\; \scalar{\Psi}{\Psi'} + \scalar{\Psi}{H_N^0 \,\Psi'}\,.
\end{equation*}
\item[(A2)] The Hamiltonian \eqref{mean-field Hamiltonian} is self-adjoint and bounded from 
below. We also assume that $\mathcal{Q}(H_N) \subset X_N$.
\item[(A3)] The interaction potential $w$ is a real and even function satisfying $w \in L^{p_1} 
+ L^{p_2}$, where $2 \leq p_1 \leq p_2 \leq \infty$.
\item[(A4)] The solution $\varphi(\cdot)$ of \eqref{hartree equation} satisfies 
\begin{equation*}
\varphi(\cdot) \in C(\R; X_1 \cap L^{q_1}) \cap C^1(\R; X_1^*)\,,
\end{equation*}
where $2 \leq q_2 \leq q_1 \leq \infty$ are defined through
\begin{equation}
\frac{1}{2} \;=\; \frac{1}{p_i} + \frac{1}{q_i}\,, \qquad i = 1,2\,.
\end{equation}
Here $X_1^*$ denotes the dual space of $X_1$, i.e.\ the closure of $L^2$ under the norm 
$\norm{\varphi}_{X_1^*} \deq \norm{(\umat + h)^{-1/2} \varphi}$.
\end{itemize}

We now state our main result.  \begin{theorem} \label{theorem for L^2 potentials}
Let $\Psi_{N,0} \in \mathcal{Q}(H_N)$ satisfy $\norm{\Psi_{N,0}} = 1$, and $\varphi_0 \in X_1 
\cap L^{q_1}$ satisfy $\norm{\varphi_0} = 1$. Assume that Assumptions {\rm (A1)} -- {\rm (A4)} 
hold. Then
\begin{equation*}
\alpha_N(t) \;\leq\; \pbb{\alpha_N(0) + \frac{1}{N}} \,\ee^{\phi(t)}\,,
\end{equation*}
where
\begin{equation*}
\phi(t) \;\deq\; 32 \norm{w}_{L^{p_1} + L^{p_2}} \int_0^t \dd s \; \pb{\norm{\varphi(s)}_{q_1} 
+ \norm{\varphi(s)}_{q_2}}\,.
\end{equation*}
\end{theorem}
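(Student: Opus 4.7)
The plan is to follow the method of Pickl~\cite{Pickl2009}. First, using the definition of $\gamma_N^{(1)}$, I would write $\alpha_N(t) = \scalar{\Psi_N(t)}{q_1(t)\Psi_N(t)}$, where $q_1(t) = \umat - \ket{\varphi(t)}\bra{\varphi(t)}$ acts on the first particle. Differentiating in time and using the Schr\"odinger equation for $\Psi_N$ together with the identity $\dot q_1 = -\ii[h^{\text{eff}}_1, q_1]$ derived from the Hartree equation (where $h^{\text{eff}} = h + w*\abs{\varphi}^2$), the operators $h_2, \ldots, h_N$ commute with $q_1$ and the result reduces to $\dot\alpha_N = \ii\scalar{\Psi_N}{[H_N^W - (w*\abs{\varphi}^2)_1, q_1]\Psi_N}$. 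Exploiting the permutation symmetry of $\Psi_N$ to collapse the interaction sum and separating the $1/N$-correction, this further simplifies to
\[
\dot\alpha_N \;=\; \ii\scalarb{\Psi_N}{\qb{W_{12} - (w*\abs\varphi^2)_1, q_1}\Psi_N} - \frac{\ii}{N}\scalarb{\Psi_N}{\qb{W_{12},q_1}\Psi_N}.
\]

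The main step is to insert $\umat = p_2 + q_2$ on both sides of $W_{12}$ in the first commutator. Using the identity $p_2 W_{12} p_2 = (w*\abs\varphi^2)_1 p_2$, which encodes the origin of the Hartree nonlinearity, together with $p_2 q_2 = 0$, I would obtain the clean expansion
\[
W_{12} - (w*\abs\varphi^2)_1 \;=\; p_2 W_{12} q_2 + q_2 W_{12} p_2 + q_2\pb{W_{12} - (w*\abs\varphi^2)_1} q_2,
\]
yielding three contributions to estimate. The key analytic input is the operator bound $\norm{W_{12} p_j}_{\cal L} \lesssim \norm{w}_{L^{p_1}+L^{p_2}}\pb{\norm{\varphi}_{q_1}+\norm{\varphi}_{q_2}}$ for $j \in \h{1,2}$, obtained from Young's inequality applied to $\norm{\abs{w}^2 * \abs{\varphi}^2}_\infty$ with the H\"older pairing $1/p_i + 1/q_i = 1/2$. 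By Cauchy--Schwarz this also yields $\norm{w*\abs{\varphi}^2}_\infty$ of the same order, which controls the bounded multiplication operator $(w*\abs{\varphi}^2)_1$.

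Using these ingredients together with the symmetry identity $\norm{q_i\Psi_N}^2 = \alpha_N$, careful placement of additional projectors $p_1, q_1$ around $W_{12}$, and the orthogonality $p_1 q_1 = 0$, I would bound each of the three terms from the expansion by $C \norm{w}_{L^{p_1}+L^{p_2}}\pb{\norm{\varphi}_{q_1}+\norm{\varphi}_{q_2}}\alpha_N$; the $1/N$-correction term is bounded similarly, contributing a source term of order $1/N$. Summing gives a differential inequality of the form $\dot\alpha_N(t) \leq \phi'(t)\pb{\alpha_N(t) + 1/N}$, and Gronwall's lemma applied to $\beta_N(t) \deq \alpha_N(t) + 1/N$ then yields the claimed estimate.

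The hardest part will be bounding the cross terms $\scalar{\Psi_N}{p_2[W_{12}, q_1] q_2\Psi_N}$ and its Hermitian conjugate. A direct Cauchy--Schwarz yields only a $C\sqrt{\alpha_N}$ bound, since only one side of the inner product carries a $q$-projector on particle $2$. To recover the full factor of $\alpha_N$, I expect one must further decompose $[W_{12}, q_1] = p_1 W_{12} - W_{12} p_1$ using $q_1 = \umat - p_1$, and exploit the particle exchange symmetry of $\Psi_N$ together with $p_1 q_1 = 0$ to migrate an additional $q$-factor onto the opposite side of the inner product, thereby furnishing the missing factor $\norm{q\Psi_N} = \sqrt{\alpha_N}$.
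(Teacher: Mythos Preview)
Your overall strategy is sound and agrees with the paper at a high level: differentiate $\alpha_N$, reduce to a commutator involving $W_{12}-(w*|\varphi|^2)_1$, insert the projections $p_i,q_i$, and use the identity $p_2 W_{12} p_2=(w*|\varphi|^2)_1 p_2$ together with the operator bound $\|W_{12}p_j\|\lesssim\|w\|_{L^{p_1}+L^{p_2}}(\|\varphi\|_{q_1}+\|\varphi\|_{q_2})$. After fully decomposing with $p_1,q_1,p_2,q_2$, your ``cross term'' indeed contains the only genuinely difficult piece,
\[
\scalarb{\Psi_N}{p_1 p_2\,W_{12}\,q_1 q_2\,\Psi_N},
\]
which is exactly the paper's term (III). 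You are right that naive Cauchy--Schwarz gives only $C\sqrt{\alpha_N}$ here.

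The gap is in your proposed resolution of this term. The mechanism you describe --- ``migrate an additional $q$-factor onto the opposite side of the inner product'' using $p_1q_1=0$ --- cannot work as stated: the left side carries $p_1p_2$ and the right side $q_1q_2$, and no amount of commuting or inserting $p_i+q_i$ will produce a $q$ on the $pp$ side, since $p_1p_2$ and $q_1q_2$ live on orthogonal ranges in both particles. What is actually needed is a redistribution of weight between the two factors of the Cauchy--Schwarz estimate. The paper does this with the operators $\widehat m=\frac{1}{N}\sum_i q_i$ and $\widehat n=\widehat m^{1/2}$ built from the full family of projectors $P_k$: one writes $q_1q_2=\widehat n\,\widehat n^{-1}q_1q_2$, pulls $\widehat n$ through $W_{12}$ onto the $p_1p_2$ side via a shift identity (their Lemma~3.3), and then uses $\|\widehat n^{-1}q_1q_2\Psi\|^2\lesssim\alpha_N$ together with $\|\widehat{\tau_2 n}\,\Psi\|^2=\alpha_N+2/N$ to obtain $\sqrt{\alpha_N}\sqrt{\alpha_N+2/N}\lesssim\alpha_N+1/N$. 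This weight-operator machinery is the missing idea in your outline.

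If by ``particle exchange symmetry'' you instead had in mind symmetrising over the second slot, that also works but requires more than you wrote: one rewrites the term as $\frac{1}{N-1}\sum_{j\geq2}\scalar{\Psi}{p_1p_jW_{1j}q_jq_1\Psi}$, pulls out $\|q_1\Psi\|=\sqrt{\alpha_N}$ by Cauchy--Schwarz, and then bounds $\bigl\|\sum_{j\geq2}q_jW_{1j}p_jp_1\Psi\bigr\|^2$ by splitting into diagonal terms ($\lesssim N$) and off-diagonal terms ($\lesssim N^2\alpha_N$, using that each off-diagonal summand carries an extra $q_k$ on each side). This diagonal/off-diagonal splitting is the essential step, and it is absent from your sketch.
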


We may combine this result with the observations of Section \ref{measures of convergence}.
\begin{corollary} \label{corollary of main theorem}
Let the sequence $\Psi_{N,0} \in \mathcal{Q}(H_N)$, $N \in \N$, satisfy the assumptions of 
Theorem \ref{theorem for L^2 potentials} as well as
\begin{equation*}
E_N^{(1)}(0) \;\lesssim\; \frac{1}{N}\,.
\end{equation*}
Then  we have
\begin{equation*}
E^{(k)}_N(t) \;\lesssim\; \frac{k}{N} \, \ee^{\phi(t)}\,, \qquad R^{(k)}_N(t) \;\lesssim\; 
\sqrt{\frac{k}{N}} \, \ee^{\phi(t)/2}\,.
\end{equation*}
\end{corollary}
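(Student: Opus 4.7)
The corollary is a direct chaining of three ingredients already in place: Theorem \ref{theorem for L^2 potentials}, Lemma \ref{lemma: relationship between Es}, and Lemma \ref{lemma: estimates between R and E}. My plan is simply to propagate the one-particle estimate through these two abstract lemmas; there is no real obstacle, only bookkeeping.

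First, I would apply Theorem \ref{theorem for L^2 potentials} to the sequence $\Psi_{N,0}$ at hand, whose hypotheses are satisfied by assumption. This gives
\begin{equation*}
\alpha_N(t) \;\leq\; \pbb{\alpha_N(0) + \frac{1}{N}} \ee^{\phi(t)}.
\end{equation*}
Combined with the extra hypothesis $\alpha_N(0) = E_N^{(1)}(0) \lesssim 1/N$, this yields $E_N^{(1)}(t) \lesssim N^{-1} \ee^{\phi(t)}$, with an implicit constant depending only on the implicit constant in the hypothesis on $\alpha_N(0)$.

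Second, I would pass from $k=1$ to arbitrary $k$ via Lemma \ref{lemma: relationship between Es}, applied pointwise in time to the reduced density matrix $\gamma_N^{(k)}(t)$ and the Hartree wave function $\varphi(t)$. Its hypotheses hold because $\gamma_N^{(k)}(t)$ is a nonnegative trace class operator of unit trace (being a reduced density matrix of a normalized state) and because $\norm{\varphi(t)}_{L^2} = 1$ is preserved by the Hartree flow \eqref{hartree equation}. The conclusion $E^{(k)} \leq k E^{(1)}$ then gives
\begin{equation*}
E_N^{(k)}(t) \;\leq\; k\, E_N^{(1)}(t) \;\lesssim\; \frac{k}{N}\, \ee^{\phi(t)},
\end{equation*}
which is the first claimed bound.

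Third, I would obtain the trace norm estimate by applying the inequality $R^{(k)} \leq \sqrt{8 E^{(k)}}$ from Lemma \ref{lemma: estimates between R and E} to the same $\gamma_N^{(k)}(t)$ and $\varphi(t)$. Substituting the bound on $E_N^{(k)}(t)$ just established yields
\begin{equation*}
R_N^{(k)}(t) \;\leq\; \sqrt{8\, E_N^{(k)}(t)} \;\lesssim\; \sqrt{\frac{k}{N}}\, \ee^{\phi(t)/2},
\end{equation*}
which is the second claim. The whole argument is a one-line assembly; the substance is in the theorem and the two lemmas of Section \ref{measures of convergence}.
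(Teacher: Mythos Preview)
Your proposal is correct and follows exactly the approach indicated in the paper, which simply states that the corollary arises by combining Theorem \ref{theorem for L^2 potentials} with the observations of Section \ref{measures of convergence}. There is nothing to add.
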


\begin{remark}
Corollary \ref{corollary of main theorem} implies that we can control the condensation of $k = 
o(N)$ particles.
\end{remark}
\begin{remark}
Assumption (A3) allows for singularities in $w$ up to, but not including, the type 
$\abs{x}^{-3/2}$ in three dimensions. In the next section we treat a larger class of 
interaction potentials.
\end{remark}
\begin{remark}
Assumption (A4) is typically verified by solving the Hartree equation in a Sobolev space of 
high index (see e.g.\ Section \ref{example: semirelativistic kinetic energy}). Instead of 
requiring a global-in-time solution $\varphi(\cdot)$, it is enough to have a local-in-time 
solution on $[0,T)$ for some $T > 0$.  \end{remark}
\begin{remark} \label{remark: time-dependent potentials}
If $\sup_t \phi(t) < \infty$, or in other words if $\norm{\varphi(t)}_{q_1}$ and 
$\norm{\varphi(t)}_{q_2}$ are integrable in $t$ over $\R$, then all estimates are uniform in 
time.
This describes a scattering regime where the time evolution is asymptotically free for large 
times. Such an integrability condition requires large exponents $q_i$, which translates to 
small exponents $p_i$, i.e.\ an interaction potential with strong decay.
\end{remark}
\begin{remark}
The result easily extends to time-dependent one-particle Hamiltonians $h \equiv h(t)$. Replace 
(A1) and (A2) with
\begin{itemize}
\item[(A1')]
The Hamiltonian $h(t)$ is self-adjoint and bounded from below. We assume that there is an 
operator $h_0 \geq 0$ that  such that $0 \leq h(t) \leq h_0$ for all $t$. Define the Hilbert 
space $X_N \;=\; \mathcal{Q}\pb{\sum_i (h_0)_i}$ as in (A1).
\item[(A2')]
The Hamiltonian $H_N(t)$ is self-adjoint and bounded from below.
We assume that $\mathcal{Q}(H_N(t)) \subset X_N$ for all $t$.
We also assume that the $N$-body propagator $U_N(t,s)$, defined by
\begin{equation*}
\ii \partial_t U_N(t,s) = H_N(t) U_N(t,s) \,, \qquad U_N(s,s) = \umat\,,
\end{equation*}
exists and satisfies $U_N(t,0)\Psi_{N,0} \in \mathcal{Q}(H_N(t))$ for all $t$.
\end{itemize}
It is then straightforward that Theorem \ref{theorem for L^2 potentials} holds with the same 
proof.
\end{remark}
\begin{remark} \label{remark: alternative assumptions for Hardy}
In some cases (see e.g.\ Section \ref{example: nonrelativistic kinetic energy} below) it is 
convenient to modify the assumptions as follows. Replace (A3) and (A4) with
\begin{itemize}
\item[(A3')]
The interaction potential $w$ is a real and even function satisfying
\begin{equation} \label{bound on square of w with Hardy}
\normb{w^2 * \abs{\varphi}^2}_\infty \;\leq\; K \, \norm{\varphi}_{X_1}^2
\end{equation}
for some constant $K > 0$. Without loss of generality we assume that $K \geq 1$.
\item[(A4')]
The solution $\varphi(\cdot)$ of \eqref{hartree equation} satisfies \begin{equation*}
\varphi(\cdot) \in C(\R; X_1) \cap C^1(\R; X_1^*)\,.
\end{equation*}
\end{itemize}
Then Theorem \ref{theorem for L^2 potentials} and Corollary \ref{corollary of main theorem} 
hold with
\begin{equation*}
\phi(t) \;=\; 32 K \int_0^t \dd s \; \norm{\varphi(s)}^2_{X_1}\,.
\end{equation*}
The proof remains virtually unchanged. One replaces \eqref{bound on Wtilde} with \eqref{bound 
on square of w with Hardy}, as well as \eqref{bound of convolution with w} with
\begin{equation*}
\normb{w * \abs{\varphi}^2}_\infty \;\leq\; 2 K \, \norm{\varphi}_{X_1}^2\,,
\end{equation*}
which is an easy consequence of \eqref{bound on square of w with Hardy}.
\end{remark}

\subsection{Examples}
We list two examples of systems satisfying the assumptions of Theorem \ref{theorem for L^2 
potentials}.

\subsubsection{Particles in a trap} \label{example: nonrelativistic kinetic energy}
Consider nonrelativistic particles in $\R^3$ confined by a strong trapping potential. The 
particles interact by means of the Coulomb potential: $w(x) = \lambda \abs{x}^{-1}$, where 
$\lambda \in \R$. The one-particle Hamiltonian is of the form $h = -\Delta + v$, where $v$ is a 
measurable function on $\R^3$. Decompose $v$ into its positive and negative parts: $v = v_+ - 
v_-$, where $v_+, v_- \geq 0$. We assume that $v_+ \in L^1_{\mathrm{loc}}$ and that $v_-$ is 
$-\Delta$-form bounded with relative bound less than one, i.e.\ there are constants $0 \leq a < 
1$ and $0 \leq b < \infty$ such that
\begin{equation} \label{bound on negative part of interaction potential}
\scalar{\varphi}{v_- \varphi} \;\leq\; a \scalar{\varphi}{-\Delta \varphi} + b 
\scalar{\varphi}{\varphi}\,.
\end{equation}
Thus $h + b\umat$ is positive, and it is not hard to see that $h$ is essentially self-adjoint 
on $C^\infty_c(\R^3)$. This follows by density and a standard argument using Riesz's 
representation theorem to show that the equation $(h + (b + 1)\umat) \varphi = f$ has a unique 
solution $\varphi \in \{\varphi \in L^2 \,:\, h \varphi \in L^2\}$ for each $f \in L^2$.
%Then $h + b \umat$ is positive, and gives rise to a positive self-adjoint operator via its 
%Friedrichs extension. 

It is now easy to see that Assumptions (A1) and (A2) hold with the one-particle Hamiltonian $h 
+ c \umat$ for some $c > 0$. Let us assume without loss of generality that $c = 0$. Next, we 
verify Assumptions (A3') and (A4') (see Remark \ref{remark: alternative assumptions for 
Hardy}). We find
\begin{equation*}
\normb{w^2 * \abs{\varphi}^2}_\infty \;=\; \sup_{x} \absbb{\int \dd y \; 
\frac{\lambda^2}{\abs{x - y}^2} \abs{\varphi(y)}^2} \;\lesssim\; \scalar{\varphi}{-\Delta 
\varphi} \;\lesssim\; \scalar{\varphi}{h \varphi} + \scalar{\varphi}{\varphi} \;=\; 
\norm{\varphi}_{X_1}^2\,,
\end{equation*}
where the second step follows from Hardy's inequality and translation invariance of $\Delta$, 
and the third step is a simple consequence of \eqref{bound on negative part of interaction 
potential}. This proves (A3').

Next, take $\varphi_0 \in X_1$. By standard methods (see e.g.\ the presentation of 
\cite{Lenzmann2007}) one finds that (A4') holds. Moreover, the mass $\norm{\varphi(t)}^2$ and 
the energy
\begin{equation*}
E^\varphi(t) \;=\; \qbb{\scalar{\varphi}{h \varphi} + \frac{1}{2} \int \dd x \, \dd y \; w(x - 
y) \abs{\varphi(x)}^2 \abs{\varphi(y)}^2} \biggr|_t
\end{equation*}
are conserved under time evolution. Using the identity $\abs{x}^{-1} \leq \umat_{\{\abs{x} \leq 
\epsilon\}} \epsilon \abs{x}^{-2} + \umat_{\{\abs{x} > \epsilon\}} \epsilon^{-1}$ and Hardy's 
inequality one sees that
\begin{equation*}
\norm{\varphi(t)}_{X_1}^2 \;\lesssim\; E^\varphi(t) + \norm{\varphi(t)}^2\,,
\end{equation*}
and therefore $\norm{\varphi(t)}_{X_1} \leq C$ for all $t$. We conclude: Theorem \ref{theorem 
for L^2 potentials} holds with $\phi(t) = C t$. More generally, the preceding discussion holds 
for interaction potentials $w \in L^3_w + L^\infty$, where $L^p_w$ denotes the weak $L^p$ space 
(see e.g.\ \cite{ReedSimonII}). This follows from a short computation using 
symmetric-decreasing rearrangements; we omit further details. This example generalizes the 
results of \cite{ErdosYau2001}, \cite{RodnianskiSchlein2007} and 
\cite{FrohlichKnowlesSchwarz2009}.

\subsubsection{A boson star} \label{example: semirelativistic kinetic energy}
Consider semirelativistic particles in $\R^3$ whose one-particle Hamiltonian is given by $h = 
\sqrt{\umat - \Delta}$. The particles interact by means of a Coulomb potential: $w(x) = \lambda 
\abs{x}^{-1}$. We impose the condition $\lambda > -4/\pi$. This condition is necessary for both 
the stability of the $N$-body problem (i.e.\ Assumption (A2)) and the global well-posedness of 
the Hartree equation. See \cite{LiebYau1987, Lenzmann2007} for details. It is well known that 
Assumptions (A1) and (A2) hold in this case.

In order to show (A4) we need some regularity of $\varphi(\cdot)$. To this end, let $s>1$ and 
take $\varphi_0 \in H^s$. Theorem 3 of \cite{Lenzmann2007} implies that \eqref{hartree 
equation} has a unique global solution in $H^s$. Therefore Sobolev's inequality implies that 
(A4) holds with
\begin{equation*}
\frac{1}{q_1} \;=\; \frac{1}{2} - \frac{s}{3}\,.
\end{equation*}
Thus $q_1 > 6$, and (A3) holds with appropriately chosen values of $p_1, p_2$. We conclude: 
Theorem \ref{theorem for L^2 potentials} holds for some continuous function $\phi(t)$. (In 
fact, as shown in \cite{Lenzmann2007}, one has the bound $\phi(t) \lesssim \ee^{Ct}$.) This 
example generalizes the result of \cite{ElgartSchlein2007}.

\subsection{Proof of Theorem \ref{theorem for L^2 potentials}}
\subsubsection{A family of projectors}
Define the time-dependent projectors
\begin{equation*}
p(t) \;\deq\; \ket{\varphi(t)} \bra{\varphi(t)} \,,\qquad q(t) \;\deq\; \umat - p(t)\,.
\end{equation*}
Write
\begin{equation} \label{partition of 1}
\umat \;=\; (p_1 + q_1) \cdots (p_N + q_N)
\end{equation}
and define $P_k$, for $k = 0, \dots, N$, as the term obtained by multiplying out 
\eqref{partition of 1} and selecting all summands containing $k$ factors $q$. In other words,
\begin{equation}
P_k \;=\; \sum_{\substack{a \in \{0,1\}^N \,:\\ \sum_i a_i = k}} \, \prod_{i = 1}^N p_i^{1 - 
a_i} q_i^{a_i}\,.
\end{equation}
If $k \neq \{0, \dots, N\}$ we set $P_k \;=\; 0$.  It is easy to see that the following 
properties hold:
\begin{enumerate}
\item $P_k$ is an orthogonal projector,
\item $P_k P_l \;=\; \delta_{kl} P_k$\,,
\item $\sum_k P_k \;=\; \umat$\,.
\end{enumerate}

Next, for any function $f : \{0, \dots, N\} \to \C$ we define the operator
\begin{equation}
\widehat{f} \;\deq\; \sum_{k} f(k) P_k\,.
\end{equation}
It follows immediately that
\begin{equation*}
\widehat{f} \widehat{g} \;=\; \widehat{fg}\,,
\end{equation*}
and that $\widehat{f}$ commutes with $p_i$ and $P_k$.
We shall often make use of the functions
\begin{equation*}
m(k) \;\deq\; \frac{k}{N} \,,\qquad n(k) \;\deq\; \sqrt{\frac{k}{N}}\,.
\end{equation*}
We have the relation
\begin{equation} \label{replacing q with m 0}
\frac{1}{N} \sum_i q_i \;=\; \frac{1}{N} \sum_k \sum_i q_i P_k \;=\; \frac{1}{N} \sum_k k P_k 
\;=\; \widehat{m}\,.
\end{equation}
Thus, by symmetry of $\Psi$, we get
\begin{equation} \label{two forms of alpha}
\alpha \;=\; \scalar{\Psi}{q_1 \, \Psi} \;=\; \scalar{\Psi}{\widehat{m} \, \Psi}\,.
\end{equation}
The correspondence $q_1 \sim \widehat{m}$ of \eqref{replacing q with m 0} yields the following 
useful bounds.
\begin{lemma} \label{replacing q's with n's}
For any nonnegative function $f : \{0,\dots,N\} \to [0,\infty)$ we have
\begin{align}
\scalarb{\Psi}{\widehat{f} q_1 \Psi} &\;=\; \scalarb{\Psi}{\widehat{f} \, \widehat{m} \Psi}\,, 
\label{replacing q with m 1}
\\
\scalarb{\Psi}{\widehat{f} q_1 q_2 \Psi} &\;\leq\; \frac{N}{N-1} \scalarb{\Psi}{\widehat{f} 
\,\widehat{m}^2 \Psi}\,.\label{replacing q with m 2}
\end{align}
\end{lemma}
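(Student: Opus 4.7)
The plan is to exploit two properties of the projectors $P_k$ and the operators $\widehat{f}$: their invariance under permutations of the particle labels (so that they preserve the symmetric subspace and commute with the symmetric group action), and the fact that each $\widehat{f}$ commutes with each $q_i$. The first property holds because $P_k$ is defined as a sum over \emph{all} subsets of size $k$, so for any permutation $\pi$ one has $\pi P_k \pi^{-1} = P_k$; equivalently $\pi q_i \pi^{-1} = q_{\pi(i)}$ while $\pi$ commutes with $\widehat{f}$. The second property is a one-line check: each summand in $P_k$ contains either $p_1$ or $q_1$ in position $1$, and $q_1 p_1 = p_1 q_1 = 0$ while $q_1 q_1 = q_1$, so $q_1 P_k = P_k q_1$ equals the partial sum over multi-indices with $a_1 = 1$.

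For the first identity, I would use permutation symmetry of $\Psi$ together with the commutation of $\widehat{f}$ with permutations to write
\begin{equation*}
\scalarb{\Psi}{\widehat{f} q_1 \Psi} \;=\; \frac{1}{N}\sum_{i=1}^{N}\scalarb{\Psi}{\widehat{f} q_i \Psi} \;=\; \scalarBB{\Psi}{\widehat{f} \, \pbb{\frac{1}{N}\sum_{i=1}^{N} q_i} \Psi},
\end{equation*}
and then apply the key identity $\frac{1}{N}\sum_i q_i = \widehat{m}$ already established in \eqref{replacing q with m 0}.

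For the second identity, the same symmetrization gives
\begin{equation*}
\scalarb{\Psi}{\widehat{f} q_1 q_2 \Psi} \;=\; \frac{1}{N(N-1)} \sum_{i \ne j} \scalarb{\Psi}{\widehat{f} q_i q_j \Psi}.
\end{equation*}
I would then use $q_i^2 = q_i$ to write $\sum_{i \ne j} q_i q_j = \pb{\sum_i q_i}^2 - \sum_i q_i = N^2 \widehat{m}^2 - N\widehat{m}$, which turns the display into
\begin{equation*}
\scalarb{\Psi}{\widehat{f} q_1 q_2 \Psi} \;=\; \frac{N}{N-1}\scalarb{\Psi}{\widehat{f} \widehat{m}^2 \Psi} - \frac{1}{N-1}\scalarb{\Psi}{\widehat{f} \widehat{m} \Psi}.
\end{equation*}
Since $f \geq 0$ and $m \geq 0$, the operator $\widehat{f}\widehat{m} = \widehat{fm}$ is a nonnegative combination of the mutually orthogonal projectors $P_k$, hence nonnegative; dropping the subtracted term gives the desired inequality.

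There is no real obstacle here — this is a clean symmetrization plus the algebraic identities $q_i^2 = q_i$ and $\frac{1}{N}\sum_i q_i = \widehat{m}$. The only point that needs a moment's care is noting that $\widehat{f}\widehat{m}$ is nonnegative, which is what lets the exact identity in the second step be replaced by an inequality rather than requiring an additional correction term.
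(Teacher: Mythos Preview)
Your proof is correct and follows essentially the same approach as the paper: symmetrize using permutation invariance of $\Psi$ and $\widehat{f}$, then apply the identity $\frac{1}{N}\sum_i q_i = \widehat{m}$. For the second inequality you even sharpen the paper's argument slightly by first deriving the exact identity $\scalarb{\Psi}{\widehat{f} q_1 q_2 \Psi} = \frac{N}{N-1}\scalarb{\Psi}{\widehat{f}\widehat{m}^2\Psi} - \frac{1}{N-1}\scalarb{\Psi}{\widehat{f}\widehat{m}\Psi}$ before dropping the nonnegative correction, whereas the paper simply bounds $\sum_{i\neq j}$ by $\sum_{i,j}$.
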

\begin{proof}
The proof of \eqref{replacing q with m 1} is an immediate consequence of \eqref{replacing q 
with m 0}. In order to prove \eqref{replacing q with m 2} we write, using symmetry of $\Psi$ as 
well as \eqref{replacing q with m 0},
\begin{multline*}
\scalarb{\Psi}{\widehat{f} \, q_1 q_2 \Psi} \;=\; \frac{1}{N(N-1)} \sum_{i \neq j} 
\scalarb{\Psi}{\widehat{f} \, q_i q_j \Psi}
\\
\;\leq\; \frac{1}{N(N-1)} \sum_{i, j} \scalarb{\Psi}{\widehat{f} \, q_i q_j \Psi} \;=\; 
\frac{N}{N-1} \scalarb{\Psi}{\widehat{f} \, \widehat{m}^2 \Psi}\,,
\end{multline*}
which is the claim.
\end{proof}

Next, we introduce the shift operation $\tau_n$, $n \in \Z$, defined on functions $f$ through
\begin{equation}
(\tau_n f)(k) \;\deq\; f(k + n)\,.
\end{equation}
Its usefulness for our purposes is encapsulated by the following lemma.
\begin{lemma} \label{pulling projectors through}
Let $r \geq 1$ and $A$ be an operator on $\mathcal{H}^{(r)}$. Let $Q_i$, $i = 1,2$, be two 
projectors of the form
\begin{equation*}
Q_i \;=\; \#_1 \cdots \#_r\,,
\end{equation*}
where each $\#$ stands for either $p$ or $q$. Then
\begin{equation*}
Q_1 A_{1\dots r} \widehat{f} Q_2 \;=\; Q_1 \widehat{\tau_n f} A_{1 \dots r}  Q_2\,,
\end{equation*}
where $n = n_2 - n_1$ and $n_i$ is the number of factors $q$ in $Q_i$.
\end{lemma}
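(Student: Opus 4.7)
My plan is to prove the identity by a bookkeeping argument that splits the projector $P_k$ into a part acting on the coordinates $1,\dots,r$ (where $A_{1\dots r}$ and $Q_1, Q_2$ live) and a part acting on the remaining coordinates (where it commutes with everything in sight).

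First, by linearity of both sides in $f$, it suffices to treat the case $f = \ind{k_0}$ for a single $k_0 \in \{0,\dots,N\}$, so that $\widehat{f} = P_{k_0}$ and $\widehat{\tau_n f} = P_{k_0 - n}$. For each $k$, I would decompose
\begin{equation*}
P_k \;=\; \sum_{l = 0}^{r} P_l^{\leq r} \, P_{k-l}^{>r}\,,
\end{equation*}
where $P_l^{\leq r}$ is the sum of all products $\prod_{i=1}^r p_i^{1-a_i} q_i^{a_i}$ with $\sum_{i \leq r} a_i = l$, and $P_{k-l}^{>r}$ is defined analogously for the coordinates $i > r$.

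The two key observations are then as follows. Since $Q_2$ is a fixed product of $p_i$'s and $q_i$'s over $i = 1,\dots, r$ containing exactly $n_2$ factors $q$, and since distinct such products are mutually orthogonal, one has $P_l^{\leq r} Q_2 = \delta_{l,n_2} Q_2$; symmetrically, $Q_1 P_l^{\leq r} = \delta_{l,n_1} Q_1$. Moreover, $A_{1\dots r}$, $Q_1$, and $Q_2$ all act only on coordinates $\leq r$, so they commute with $P_j^{>r}$.

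Combining these facts,
\begin{equation*}
Q_1 A_{1\dots r} P_{k_0} Q_2 \;=\; \sum_l Q_1 A_{1 \dots r} P_l^{\leq r} P_{k_0 - l}^{>r} Q_2 \;=\; Q_1 A_{1\dots r} Q_2 \, P_{k_0 - n_2}^{> r}\,,
\end{equation*}
while the analogous manipulation on the right-hand side gives
\begin{equation*}
Q_1 P_{k_0 - n} A_{1\dots r} Q_2 \;=\; Q_1 A_{1\dots r} Q_2 \, P_{k_0 - n - n_1}^{> r}\,.
\end{equation*}
Since $n = n_2 - n_1$, the two exponents $k_0 - n_2$ and $k_0 - n - n_1$ coincide, and summing over $k_0$ against $f(k_0)$ finishes the argument. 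There is no real obstacle here: the only thing to be careful about is verifying the absorption rules $Q_i P_l^{\leq r} = \delta_{l,n_i} Q_i$, which just uses orthogonality of the $2^r$ product projectors obtained by choosing $p$ or $q$ at each of the first $r$ positions.
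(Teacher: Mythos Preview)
Your proof is correct and follows essentially the same route as the paper: the paper defines $P_k^r$ (your $P_k^{>r}$), observes $Q_i P_k = Q_i P_{k-n_i}^r$ (your absorption rule), and then uses that $P_k^r$ commutes with $A_{1\dots r}$ to shift the index by $n_i$ on each side. The only cosmetic difference is that the paper keeps the general $f$ and reindexes the sum, whereas you first reduce to an indicator $f=\ind{k_0}$; the underlying decomposition and orthogonality are identical.
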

\begin{proof}
Define
\begin{equation*}
P^r_k \;\deq\; \sum_{\substack{a \in \{0,1\}^{N - r} \\ \sum_i a_i = k}} \, \prod_{i = r+1}^N 
p_i^{1 - a_i} q_i^{a_i}\,.
\end{equation*}
Then,
\begin{equation*}
Q_i \widehat{f} \;=\; \sum_k f(k) \, Q_i P_k \;=\; \sum_k f(k) \, Q_i P^r_{k - n_i} \;=\; 
\sum_k f(k + n_i) \, Q_i P_k^r\,.
\end{equation*}
The claim follows from the fact that $P_k^r$ commutes with $A_{1\dots r}$.
\end{proof}

\subsubsection{A bound on $\dot{\alpha}$}
Let us abbreviate
\begin{equation*}
W^\varphi \;\deq\; w * \abs{\varphi}^2\,.
\end{equation*}
From (A3) and (A4) we find $W^\varphi \in L^\infty$ (see \eqref{bound of convolution with w} 
below).
Then $\ii \partial_t \varphi \;=\; (h + W^\varphi) \varphi$, where $h + W^\varphi \in 
\mathcal{L}(X_1; X_1^*)$. Thus, for any $\psi \in X_1$ independent of $t$ we have
\begin{equation*}
\ii \partial_t \scalar{\psi}{p \, \psi} \;=\; \scalar{\psi}{\com{h + W^\varphi}{p} \psi}\,.
\end{equation*}
On the other hand, it is easy to see from (A3) and (A4) that $\widehat{m} \Psi \in 
\mathcal{Q}(H)$.
Combining these observations, and noting that $\Psi \in \mathcal{Q}(H) \subset X$ by (A2),
we see that $\alpha$ is differentiable in $t$ with derivative
\begin{equation*}
\dot{\alpha} \;=\; \ii \scalarb{\Psi}{\comb{H - H^\varphi}{\widehat{m}} \Psi}\,,
\end{equation*}
where $H^\varphi \deq \sum_i \p{h_i + W_i^\varphi}$. Thus,
\begin{equation*}
\dot{\alpha} \;=\; \ii \scalarbb{\Psi}{\combb{\frac{1}{N}\sum_{i<j} W_{ij} - \sum_i 
W_i^\varphi}{\widehat{m}}\Psi}\,.
\end{equation*}
By symmetry of $\Psi$ and $\widehat{m}$ we get
\begin{equation} \label{derivative of alpha}
\dot{\alpha} \;=\; \frac{\ii}{2} \scalarb{\Psi}{\comb{(N-1) W_{12} - N W^\varphi_1 - N 
W^\varphi_2}{\widehat{m}}\Psi}\,.
\end{equation}
In order to estimate the right-hand side, we introduce
\begin{equation*}
\umat \;=\; (p_1 + q_1)(p_2 + q_2)
\end{equation*}
on both sides of the commutator in \eqref{derivative of alpha}. Of the sixteen resulting terms 
only three different types survive:
\begin{align*}
&\frac{\ii}{2} \scalarb{\Psi}{p_1 p_2 \comb{(N-1) W_{12} - N W^\varphi_1 - N 
W^\varphi_2}{\widehat{m}} q_1 p_2\Psi} & &\mathrm{(I)}
\\
&\frac{\ii}{2} \scalarb{\Psi}{q_1 p_2 \comb{(N-1) W_{12} - N W^\varphi_1 - N 
W^\varphi_2}{\widehat{m}} q_1 q_2\Psi} & &\mathrm{(II)}
\\
&\frac{\ii}{2} \scalarb{\Psi}{p_1 p_2 \comb{(N-1) W_{12} - N W^\varphi_1 - N 
W^\varphi_2}{\widehat{m}} q_1 q_2\Psi} & &\mathrm{(III)}\,.
\end{align*}
Indeed, Lemma \ref{pulling projectors through} implies that terms with the same number of 
factors $q$ on the left and on the right vanish. What remains is
\begin{equation*}
\dot{\alpha} \;=\; 2 \mathrm{(I)} + 2 \mathrm{(II)} + \mathrm{(III)} + \text{ complex 
conjugate}\,.
\end{equation*}
The remainder of the proof consists in estimating each term.

\step{Term $\mathrm{(I)}$.}
First, we remark that
\begin{equation} \label{W sandwiched between p's}
p_2 W_{12} p_2 \;=\; p_2 W_1^\varphi\,.
\end{equation}
This is easiest to see using operator kernels (we drop the trivial indices $x_3, y_3, \dots, 
x_N, y_N$):
\begin{align*}
(p_2 W_{12} p_2)(x_1, x_2; y_1, y_2) &\;=\; \int \dd z \; \varphi(x_2) \, \ol{\varphi}(z) \, 
w(x_1 - z) \, \delta(x_1 - y_1) \, \varphi(z) \, \ol{\varphi}(y_2)
\\
&\;=\; \varphi(x_2) \, \ol{\varphi}(y_2) \, \delta(x_1 - y_1) \, (w * \abs{\varphi}^2)(x_1)\,.
\end{align*}
Therefore,
\begin{equation*}
\mathrm{(I)} \;=\; \frac{\ii}{2} \scalarb{\Psi}{p_1 p_2 \comb{(N-1) W_1^\varphi - N 
W^\varphi_1}{\widehat{m}} q_1 p_2\Psi} \;=\; \frac{- \ii}{2} \scalarb{\Psi}{p_1 p_2 
\comb{W^\varphi_1}{\widehat{m}} q_1 p_2\Psi}\,.
\end{equation*}
Using Lemma \ref{pulling projectors through} we find
\begin{equation*}
\mathrm{(I)} \;=\; \frac{-\ii}{2} \scalarb{\Psi}{p_1 p_2 W^\varphi_1 \pb{\widehat{m} - 
\widehat{\tau_{-1} m}} q_1 p_2\Psi}
\;=\; \frac{-\ii}{2N} \scalarb{\Psi}{p_1 p_2 W^\varphi_1 q_1 p_2\Psi}\,.
\end{equation*}
This gives
\begin{equation*}
\absb{\mathrm{(I)}} \;\leq\; \frac{1}{2N} \, \norm{W^\varphi}_\infty \;=\; \frac{1}{2N} 
\normb{w * \abs{\varphi}^2}_\infty\,.  \end{equation*}
By (A3), we may write
\begin{equation} \label{decomposition of w}
w \;=\; w^{(1)} + w^{(2)}\,,\qquad w^{(i)} \;\in\; L^{p_i}\,.
\end{equation}
By Young's inequality,
\begin{equation*}
\normb{w^{(i)} * \abs{\varphi}^2}_\infty \;\leq\; \norm{w^{(i)}}_{p_i} 
\norm{\varphi}_{r_i}^2\,,
\end{equation*}
where $r_1, r_2$ are defined through
\begin{equation}
1 \;=\; \frac{1}{p_i} + \frac{2}{r_i}\,.
\end{equation}
Therefore,
\begin{equation*}
\normb{w * \abs{\varphi}^2}_\infty \;\leq\; \norm{w^{(1)}}_{p_1} \norm{\varphi}^2_{r_1} + 
\norm{w^{(1)}}_{p_2} \norm{\varphi}^2_{r_2} \;\leq\; \pb{\norm{w^{(1)}}_{p_1} + 
\norm{w^{(2)}}_{p_2}} \pb{\norm{\varphi}_{r_1} + \norm{\varphi}_{r_2}}^2\,.
\end{equation*}
Taking the infimum over all decompositions \eqref{decomposition of w} yields
\begin{equation} \label{bound of convolution with w}
\norm{W^\varphi}_\infty \;=\; \normb{w * \abs{\varphi}^2}_\infty \;\leq\; \norm{w}_{L^{p_1} + 
L^{p_2}} \pb{\norm{\varphi}_{r_1} + \norm{\varphi}_{r_2}}^2\,.
\end{equation}
Note that (A3) and (A4) imply
\begin{equation} \label{r < q}
2 \;\leq\; r_i \;\leq\; q_1\,,
\end{equation}
so that the right-hand side of \eqref{bound of convolution with w} is finite.
Summarizing,
\begin{equation}
\absb{\mathrm{(I)}} \;\leq\; \frac{1}{2N} \norm{w}_{L^{p_1} + L^{p_2}} \pb{\norm{\varphi}_{r_1} 
+ \norm{\varphi}_{r_2}}^2\,.
\end{equation}

\step{Term $\mathrm{(II)}$.}
Applying Lemma \ref{pulling projectors through} to (II) yields
\begin{align*}
\mathrm{(II)} &\;=\; \frac{\ii}{2} \scalarb{\Psi}{q_1 p_2 \pb{(N-1) W_{12} - N W^\varphi_2} 
\pb{\widehat{m} - \widehat{\tau_{-1} m}} q_1 q_2\Psi}
\\
&\;=\; \frac{\ii}{2} \scalarbb{\Psi}{q_1 p_2 \pbb{\frac{N-1}{N} W_{12} -  W^\varphi_2} q_1 
q_2\Psi}\,,
\end{align*}
so that
\begin{equation} \label{splitting of (II)}
\absb{\mathrm{(II)}} \;\leq\; \frac{1}{2} \absb{\scalarb{\Psi}{q_1 p_2 W_{12} q_1 q_2 \Psi}} + 
\frac{1}{2} \absb{\scalarb{\Psi}{q_1 p_2 W_2^\varphi q_1 q_2 \Psi}}\,.
\end{equation}
The second term of \eqref{splitting of (II)} is bounded by
\begin{equation*}
\frac{1}{2} \norm{W^\varphi}_\infty \, \norm{q_1 \Psi}^2 \;\leq\; \frac{1}{2} \norm{w}_{L^{p_1} 
+ L^{p_2}} \pb{\norm{\varphi}_{r_1} + \norm{\varphi}_{r_2}}^2 \, \alpha\,,
\end{equation*}
where we used the bound \eqref{bound of convolution with w} as well as \eqref{two forms of 
alpha}.

The first term of \eqref{splitting of (II)} is bounded using Cauchy-Schwarz by
\begin{equation*}
\frac{1}{2} \sqrt{\scalarb{\Psi}{q_1 p_2 W_{12}^2 p_2 q_1 \Psi}} \sqrt{\scalar{\Psi}{q_1 q_2 
\Psi}} \;=\; \frac{1}{2} \sqrt{\scalarb{\Psi}{q_1 p_2 \pb{w^2 * \abs{\varphi}^2}_1 p_2 q_1 
\Psi}} \sqrt{\scalar{\Psi}{q_1 q_2 \Psi}}\,.
\end{equation*}
This follows by applying \eqref{W sandwiched between p's} to $W^2$. Thus we get the bound
\begin{equation*}
\frac{1}{2} \norm{q_1 \Psi}^2 \sqrt{\normb{w^2 * \abs{\varphi}^2}_\infty} \;=\; \frac{1}{2} 
\alpha \sqrt{\normb{w^2 * \abs{\varphi}^2}_\infty}\,.
\end{equation*}
We now proceed as above. Using the decomposition \eqref{decomposition of w} we get
\begin{equation*}
\normb{w^2 * \abs{\varphi}^2}_\infty \;\leq\; 2 \normb{(w^{(1)})^2 * \abs{\varphi}^2}_\infty + 
2 \normb{(w^{(2)})^2 * \abs{\varphi}^2}_\infty\,.
\end{equation*}
Then Young's inequality gives
\begin{equation*}
\normb{(w^{(i)})^2 * \abs{\varphi}^2}_\infty \;\leq\; \normb{w^{(i)}}^2_{p_i} 
\norm{\varphi}^2_{q_i}\,,
\end{equation*}
which implies that
\begin{equation} \label{bound on Wtilde}
\normb{w^2 * \abs{\varphi}^2}_\infty \;\leq\; 2 \norm{w}_{L^{p_1} + L^{p_2}}^2 
\pb{\norm{\varphi}_{q_1} + \norm{\varphi}_{q_2}}^2\,.
\end{equation}
Putting all of this together we get
\begin{equation*}
\absb{\mathrm{(II)}} \;\leq\; \frac{1}{2} \norm{w}_{L^{p_1} + L^{p_2}} \qB{\sqrt{2} 
\pb{\norm{\varphi}_{q_1} + \norm{\varphi}_{q_2}} + \pb{\norm{\varphi}_{r_1} + 
\norm{\varphi}_{r_2}}^2} \, \alpha\,.
\end{equation*}

\step{Term $\mathrm{(III)}$.}
The final term (III) is equal to
\begin{multline*}
\frac{\ii}{2} \scalarb{\Psi}{p_1 p_2 \comb{(N-1) W_{12}}{\widehat{m}} q_1 q_2\Psi} \;=\; 
\frac{\ii}{2} \scalarb{\Psi}{p_1 p_2 (N-1) W_{12}\pb{\widehat{m} - \widehat{\tau_{-2} m}} q_1 
q_2\Psi}
\\
\;=\; \ii \frac{N-1}{N} \scalarb{\Psi}{p_1 p_2 W_{12} q_1 q_2\Psi}\,,
\end{multline*}
where we used Lemma \ref{pulling projectors through}. Next, we note that, on the range of 
$q_1$, the operator $\widehat{n}^{-1}$ is well-defined and bounded. Thus (III) is equal to
\begin{equation*}
\ii \frac{N-1}{N} \scalarb{\Psi}{p_1 p_2 W_{12} \, \widehat{n} \, \widehat{n}^{-1} q_1 q_2\Psi} 
\;=\; \ii \frac{N-1}{N} \scalarb{\Psi}{p_1 p_2 \, \widehat{\tau_2 n} \, W_{12} \, 
\widehat{n}^{-1} q_1 q_2\Psi}\,,
\end{equation*}
where we used Lemma \ref{pulling projectors through} again. We now use Cauchy-Schwarz to get
\begin{align*}
\absb{\mathrm{(III)}} &\;\leq\; \sqrt{\scalarb{\Psi}{p_1 p_2 \,\widehat{\tau_2 n} \,W_{12}^2 
\,\widehat{\tau_2 n} \,p_1 p_2 \Psi}} \sqrt{\scalarb{\Psi}{\widehat{n}^{-2} q_1 q_2 \Psi}}
\\
&\;=\; \sqrt{\scalarb{\Psi}{p_1 p_2 \,\widehat{\tau_2 n} \, \pb{w^2 * \abs{\varphi}^2}_1 
\,\widehat{\tau_2 n} \,p_1 p_2 \Psi}} \sqrt{\scalarb{\Psi}{\widehat{m}^{-1} q_1 q_2 \Psi}}
\\
&\;\leq\; \sqrt{\normb{w^2 * \abs{\varphi}^2}_\infty} \, \norm{\widehat{\tau_2 n} \Psi} 
\sqrt{\frac{N}{N-1}} \sqrt{\scalar{\Psi}{\widehat{m} \Psi}}
\\
&\;=\; \sqrt{\normb{w^2 * \abs{\varphi}^2}_\infty}  \sqrt{\frac{N}{N-1}} 
\sqrt{\scalarb{\Psi}{\widehat{\tau_2 m} \Psi}} \sqrt{\alpha}
\\
&\;=\; \sqrt{\normb{w^2 * \abs{\varphi}^2}_\infty}  \sqrt{\frac{N}{N-1}} 
\sqrt{\scalarb{\Psi}{\widehat{m} \Psi} + \frac{2}{N}} \, \sqrt{\alpha}
\\
&\;\leq\; \sqrt{\normb{w^2 * \abs{\varphi}^2}_\infty}  \sqrt{\frac{N}{N-1}} \pbb{\alpha + 
\sqrt{\frac{2 \alpha}{N}}}
\\
&\;\leq\; \sqrt{\normb{w^2 * \abs{\varphi}^2}_\infty}  \sqrt{\frac{N}{N-1}} \, 2 \pbb{\alpha + 
\frac{1}{N}}\,.
\end{align*}
%where we used the inequalities $\sqrt{x + y} \leq \sqrt{x} + \sqrt{y}$ and $2 xy \leq x^2 + 
%y^2$ in the two last steps. 
Using the estimate \eqref{bound on Wtilde} we get finally
\begin{equation*}
\absb{\mathrm{(III)}} \;\leq\; 2 \sqrt{2} \norm{w}_{L^{p_1} + L^{p_2}} \pb{\norm{\varphi}_{q_1} 
+ \norm{\varphi}_{q_2}} \sqrt{\frac{N}{N-1}} \, \pbb{\alpha + \frac{1}{N}}\,.
\end{equation*}

\step{Conclusion of the proof.}
We have shown that the estimate \eqref{main estimate for alpha'} holds with
\begin{align*}
B_N(t) &\;=\; 2 \norm{w}_{L^{p_1} + L^{p_2}} \qB{\pb{\norm{\varphi(t)}_{r_1} + 
\norm{\varphi(t)}_{r_2}}^2 + 6 \pb{\norm{\varphi(t)}_{q_1} + \norm{\varphi(t)}_{q_2}}}\,,
\\
A_N(t) &\;=\; \frac{B_N(t)}{N}\,.
\end{align*}
Using $L^2$-norm conservation $\norm{\varphi(t)} = 1$ and interpolation we find 
$\norm{\varphi(t)}_{r_i}^2 \leq \norm{\varphi(t)}_{q_i}$. Thus,
\begin{equation*}
B_N(t) \;\leq\; 16 \norm{w}_{L^{p_1} + L^{p_2}} \pb{\norm{\varphi(t)}_{q_1} + 
\norm{\varphi(t)}_{q_2}}\,.
\end{equation*}
The claim now follows from the Gr\"onwall estimate \eqref{Gronwall}.

\section{Convergence for stronger singularities} \label{section: singular potentials}
In this section we extend the results of the Section \ref{L2 potentials} to more singular 
interaction potentials. We consider the case $w \in L^{p_0} + L^\infty$, where
\begin{equation} \label{definition of p_0}
\frac{1}{p_0} \;=\; \frac{1}{2} + \frac{1}{d}\,.
\end{equation}
For example in three dimensions $p_0 = 6/5$, which corresponds to singularities up to, but not 
including, the type $\abs{x}^{-5/2}$. Of course, there are other restrictions on the 
interaction potential which ensure the stability of the $N$-body Hamiltonian and the 
well-posedness of the Hartree equation. In practice, it is often these latter restrictions that 
determine the class of allowed singularities.

In the words of \cite{ReedSimonII} (p.\ 169), it is ``venerable physical folklore'' that an 
$N$-body Hamiltonian of the form \eqref{mean-field Hamiltonian}, with $h = -\Delta$ and $w(x) = 
\abs{x}^{-\zeta}$ for $\zeta < 2$, produces reasonable quantum dynamics in three dimensions.  
Mathematically, this means that such a Hamiltonian is self-adjoint; this is a well-known result 
(see e.g.\ \cite{ReedSimonII}). The corresponding Hartree equation is known to be globally 
well-posed (see \cite{GinibreVelo1980}). This section answers (affirmatively) the question 
whether, in the case of such singular interaction potentials, the mean-field limit of the 
$N$-body dynamics is governed by the Hartree equation.

\subsection{Outline and main result}
As in Section \ref{L2 potentials}, we need to control expressions of the form $\norm{w^2 * 
\abs{\varphi}^2}_\infty$. The situation is considerably more involved when $w^2$ is not locally 
integrable. An important step in dealing with such potentials in our proof is to express $w$ as 
the divergence of a vector field $\xi \in L^2$. This approach requires the control of not only 
$\alpha = \norm{q_1 \Psi}^2$ but also $\norm{\nabla_1 q_1 \Psi}^2$, which arises from 
integrating by parts in expressions containing the factor $\nabla \cdot \xi$. As it turns out, 
$\beta$, defined through
\begin{equation}
\beta_N(t) \;\deq\; \scalar{\Psi_N}{\widehat{n} \, \Psi_N} \big\vert_t\,,
\end{equation}
does the trick. This follows from an estimate exploiting conservation of energy (see Lemma 
\ref{lemma: energy estimate} below).
The inequality $m \leq n$ and the representation \eqref{two forms of alpha} yield
\begin{equation} \label{alpha < beta}
\alpha \;\leq\; \beta\,.
\end{equation}

We consider a Hamiltonian of the form \eqref{mean-field Hamiltonian} and make the following 
assumptions.
\begin{itemize}
\item[(B1)] The one-particle Hamiltonian $h$ is self-adjoint and bounded from below. Without 
loss of generality we assume that $h \geq 0$. We also assume that there are constants 
$\kappa_1, \kappa_2 > 0$ such that
\begin{equation*}
- \Delta \;\leq\; \kappa_1 \, h + \kappa_2\,,
\end{equation*}
as an inequality of forms on $\mathcal{H}^{(1)}$.
\item[(B2)]
The Hamiltonian \eqref{mean-field Hamiltonian} is self-adjoint and bounded from below. We also 
assume that $\mathcal{Q}(H_N) \subset X_N$, where $X_N$ is defined as in Assumption (A1).
\item[(B3)]
There is a constant $\kappa_3 \in (0,1)$ such that
\begin{equation*}
0 \;\leq\; (1 - \kappa_3) (h_1 + h_2) + W_{12}\,,
\end{equation*}
as an inequality of forms on $\mathcal{H}^{(2)}$.
\item[(B4)] The interaction potential $w$ is a real and even function satisfying $w \in L^{p} + 
L^\infty$, where $p_0 < p \leq 2$.
\item[(B5)] The solution $\varphi(\cdot)$ of \eqref{hartree equation} satisfies 
\begin{equation*}
\varphi(\cdot) \;\in\; C(\R; X_1^2 \cap L^\infty) \cap C^1(\R; L^2)\,,
\end{equation*}
where $X_1^2 \;\deq\; \mathcal{Q}(h^2) \subset L^2$ is equipped with the norm
\begin{equation*}
\norm{\varphi}_{X_1^2} \;\deq\; \normb{(1 + h^2)^{1/2} \varphi}\,.
\end{equation*}
%where $X_1^2 \;\deq\; \mathcal{Q}(h^2) \subset L^2$ is equipped with the scalar product
%\begin{equation*}
%\scalar{\varphi}{\varphi'}_{X_1^2} \;\deq\; \scalar{\varphi}{\varphi'} + \scalar{\varphi}{h^2 
%\, \varphi'}\,.
%\end{equation*}
\end{itemize}

Next, we define the microscopic energy per particle
\begin{equation*}
E^\Psi_N(t) \;\deq\; \frac{1}{N} \scalar{\Psi_N}{H_N \, \Psi_N} \big\vert_t\,,
\end{equation*}
as well as the Hartree energy
\begin{equation*}
E^\varphi(t) \;\deq\; \qbb{\scalar{\varphi}{h \, \varphi} + \frac{1}{2} \int \dd x \, \dd y \; 
w(x - y) \abs{\varphi(x)}^2 \abs{\varphi(y)}^2} \biggr\vert_t\,.
\end{equation*}
By spectral calculus, $E_N^\Psi(t)$ is independent of $t$. Also, invoking Assumption (B5) to 
differentiate $E^\varphi(t)$ with respect to $t$ shows that $E^\varphi(t)$ is conserved as 
well. Summarizing,
\begin{equation*}
E^\Psi_N(t) \;=\; E^\Psi_N(0)\,,\qquad E^\varphi(t) \;=\; E^\varphi(0)\,, \qquad t \in \R\,.
\end{equation*}
We may now state the main result of this section.
\begin{theorem} \label{theorem for singular potentials}
Let $\Psi_{N,0} \in \mathcal{Q}(H_N)$ and assume that Assumptions (B1) -- (B5) hold. Then there 
is a constant $K$, depending only on $d$, $h$, $w$ and $p$, such that
\begin{equation*}
\beta_N(t) \;\leq\; \pbb{\beta_N(0) + E^\Psi_N - E^\varphi + \frac{1}{N^\eta}} \, \ee^{K 
\phi(t)}\,,
\end{equation*}
where
\begin{equation} \label{definition of eta}
\eta \;\deq\; \frac{p/p_0 - 1}{2 p/p_0 - p/2 - 1}
\end{equation}
and
\begin{equation*}
\phi(t) \;\deq\; \int_0^t \dd s \; \pB{1 + \norm{\varphi(s)}^3_{X_1^2 \cap L^\infty}}\,.
\end{equation*}
\end{theorem}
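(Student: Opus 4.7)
The strategy mirrors that of Theorem \ref{theorem for L^2 potentials}, with $\beta_N(t)$ playing the role previously played by $\alpha_N(t)$. Since $m(k) \leq n(k)$, the representation \eqref{two forms of alpha} gives $\alpha \leq \beta$, so it suffices to derive a Grönwall-type bound of the form
\[
\dot{\beta}_N(t) \;\lesssim\; \pb{1 + \norm{\varphi(t)}_{X_1^2 \cap L^\infty}^3}\pbb{\beta_N(t) + E_N^\Psi - E^\varphi + \frac{1}{N^\eta}}.
\]
Both $E_N^\Psi$ and $E^\varphi$ are conserved (the former by spectral calculus, the latter by an explicit computation justified by (B5)), so they can be treated as time-independent constants absorbed into the initial data.

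Differentiating $\beta_N$, the one-body terms cancel and we obtain $\dot{\beta} = \tfrac{\ii}{2}\scalar{\Psi}{[(N-1)W_{12} - N W_1^\varphi - N W_2^\varphi, \widehat{n}]\Psi}$, in direct analogy with \eqref{derivative of alpha}. Inserting $\umat = (p_1+q_1)(p_2+q_2)$ on both sides of the commutator and applying Lemma \ref{pulling projectors through} to discard terms with equal numbers of $q$'s on each side, we reduce to three types of surviving terms (I), (II), (III). The structural change from Section \ref{L2 potentials} is that the discrete differences $\widehat{n} - \widehat{\tau_{-j} n}$ are of order $1/\sqrt{kN}$ rather than $1/N$; this forces us to write them as $\widehat{n}^{-1}\cdot\widehat{f}$ (on the range of the appropriate $q$'s, where $\widehat{n}^{-1}$ is bounded, as in the $\alpha$-treatment of term $\mathrm{(III)}$) and to apply Cauchy--Schwarz, but the combinatorial structure is preserved.

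The crux is a new treatment of the singular contributions from $W_{12}$, where the estimate \eqref{bound on Wtilde} is no longer available. For any radius $R > 0$, I split $w = w^{(s)} + w^{(b)}$, with $w^{(s)} \in L^p$ supported in the ball of radius $R$ and $w^{(b)} \in L^\infty$; the piece $w^{(b)}$ is handled by the bounds of Section \ref{L2 potentials}, with $\norm{w^{(b)}}_\infty \lesssim R^{-d/p'}$. For $w^{(s)}$, the strict inequality $p > p_0$ together with the Riesz/Sobolev embedding $L^{p_0} \hookrightarrow \dot{H}^{-1}$ allows us to write $w^{(s)} = \nabla\cdot\xi$ with $\xi \in L^2$ and $\norm{\xi}_2 \lesssim R^{d(1/p_0 - 1/p)}$. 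In the bilinear forms arising from (I)--(III), I integrate the gradient by parts in $x_1$: it lands either on a projector $p_1$ or $q_1$ (producing $\nabla\varphi \in L^2$, controlled via (B1) and $\norm{\varphi}_{X_1^2}$), or on $\Psi$ itself, yielding the genuinely new factor $\nabla_1 q_1 \Psi$.

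The remaining ingredient is an energy estimate of the form
\[
\normb{\nabla_1 q_1 \Psi}^2 \;\lesssim\; \beta_N + (E_N^\Psi - E^\varphi) + \tfrac{1}{N}\pb{1 + \norm{\varphi}_{X_1^2}^2}.
\]
To prove it, I write $E_N^\Psi = \scalar{\Psi}{h_1\Psi} + \tfrac{N-1}{2N}\scalar{\Psi}{W_{12}\Psi}$ by permutation symmetry, insert $\umat = p_1 + q_1$ around $h_1$, use the identity $p_1 h_1 p_1 = \scalar{\varphi}{h\varphi} p_1$ to isolate $\scalar{\Psi}{q_1 h_1 q_1 \Psi}$, apply (B1) to bound $\norm{\nabla_1 q_1 \Psi}^2$ in terms of this, and finally use (B3) to extract positivity from the residual pair interaction, with cross terms reabsorbed via Cauchy--Schwarz and Lemma \ref{replacing q's with n's}. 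Once this is in hand, we optimize $R$ to balance the $R^{-d/p'}$ growth of $\norm{w^{(b)}}_\infty$ against the $R^{2d(1/p_0 - 1/p)}$ growth of $\norm{\xi}_2^2$; the resulting exponent is precisely $\eta$ as defined in \eqref{definition of eta}, and Grönwall's lemma concludes the proof. The main obstacle is the bookkeeping in the energy estimate: every cross term must be bounded by $\beta_N$, by $E_N^\Psi - E^\varphi$, or by a negative power of $N$, without ever producing a factor of $N$, which is precisely what forces the choice of the weight $\widehat{n}$ over $\widehat{m}$.
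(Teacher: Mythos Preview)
Your overall architecture is right --- differentiate $\beta$, reduce to terms (I)--(III), write the singular part of $w$ as $\nabla\cdot\xi$, integrate by parts, and close with an energy estimate for $\norm{\nabla_1 q_1\Psi}^2$ --- and this is exactly what the paper does. But there are two genuine gaps.

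First, your splitting of $w$ by spatial radius does not work for general $w\in L^p+L^\infty$: the tail of an $L^p$ function outside a ball is still only $L^p$, not $L^\infty$, so the claimed bound $\norm{w^{(b)}}_\infty\lesssim R^{-d/p'}$ is false without additional pointwise structure. The paper instead splits the $L^p$ part by \emph{value}, setting $w^{(p,1)}=w^{(p)}\ind{\abs{w^{(p)}}>a}$ and $w^{(p,2)}=w^{(p)}\ind{\abs{w^{(p)}}\leq a}$, which guarantees $\norm{w^{(p,2)}}_\infty\leq a$, $\norm{w^{(p,2)}}_2\leq a^{1-p/2}\norm{w^{(p)}}_p^{p/2}$, and $\norm{w^{(p,1)}}_{p_0}\leq a^{1-p/p_0}\norm{w^{(p)}}_p^{p/p_0}$.

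Second, and more seriously, a single cutoff parameter cannot produce the exponent $\eta$. If you treat the bounded piece of $w$ in term (III) by the Section~\ref{L2 potentials} method, you get an error of order $\norm{w^{(p,2)}}_\infty\,N^{-1/2}$, and balancing this against the $\norm{\xi}_2^2$ error from the singular piece gives a strictly smaller exponent (for $d=3$, $p=2$ one gets $2/7$ instead of $1/2$). The paper recovers the full $\eta$ by introducing a \emph{second} cutoff: a partition $\umat=\widehat{\chi^{(1)}}+\widehat{\chi^{(2)}}$ with $\chi^{(1)}(k)=\ind{k\leq N^{1-\delta}}$. On the range of $\widehat{\chi^{(1)}}$ one symmetrises $p_1p_2 W_{12}^{(p,2)}q_1q_2$ over particles $2,\dots,N$, exploiting that its operator norm on the symmetric subspace is much smaller than on the full space; on the range of $\widehat{\chi^{(2)}}$ one uses $\widehat{\nu}\leq N^{\delta/2}$. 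Optimising jointly over $a=N^\zeta$ and $\delta$ then yields the two relations $\zeta(2-p)+\delta=1$ and $-\delta/2=2\zeta(1-p/p_0)$, whose solution gives exactly $\eta=\delta/2$. This symmetrisation step is the genuinely new idea in the singular case and is absent from your outline.

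A minor point: your energy estimate should give $N^{-1/2}$ rather than $N^{-1}$ (the cross terms such as $\scalar{\Psi}{(\umat-p_1p_2)h_1 p_1p_2\Psi}$ are order $\sqrt{\beta/\sqrt N}$), and the insertion needed is $\umat=p_1p_2+(\umat-p_1p_2)$ rather than $p_1+q_1$, so that (B3) can absorb the possibly negative $(\umat-p_1p_2)W_{12}(\umat-p_1p_2)$ term. Since $\eta\leq 1/2$ this does not affect the final statement, but the two-particle structure is essential for the argument to go through.
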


\begin{remark}
We have convergence to the mean-field limit whenever $\lim_N E_N^\Psi = E^\varphi$ and $\lim_N 
\beta_N(0) = 0$. For instance if we start in a fully factorized state, $\Psi_{N,0} = 
\varphi_0^{\otimes N}$, then $\beta_N(0) = 0$ and
\begin{equation*}
E^\Psi_N - E^\varphi \;=\; \frac{1}{N} \scalar{\varphi_0 \otimes \varphi_0}{W_{12} \, \varphi_0 
\otimes \varphi_0}\,,
\end{equation*}
so that the Theorem \ref{theorem for singular potentials} yields
\begin{equation*}
E^{(1)}_N(t) \;\leq\; \beta_N(t) \;\lesssim\; \frac{1}{N^{\eta}} \ee^{K \phi(t)}\,,
\end{equation*}
and the analogue of Corollary \ref{corollary of main theorem} holds.
\end{remark}
\begin{remark}
The following graph shows the dependence of $\eta$ on $p$ for $d = 3$, i.e.\ $p_0 = 6/5$.
\vspace{0.3cm}
\begin{center}
\psfrag{p}[t][b]{$p$}
\psfrag{e}[][]{$\eta$}
\includegraphics[width=6cm]{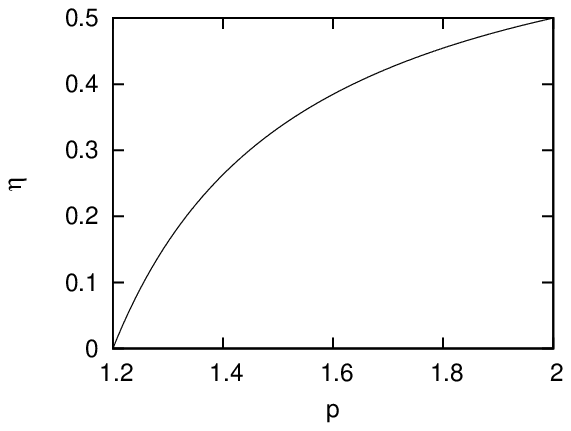}
\end{center}
\end{remark}
\begin{remark}
Theorem \ref{theorem for singular potentials} remains valid for a large class of time-dependent 
one-particle Hamiltonians $h(t)$. See Section \ref{section: remark on time-dependent 
potentials} below for a full discussion.
\end{remark}
\begin{remark}
In three dimensions Assumption (B1) and Sobolev's inequality imply that $\norm{\varphi}_\infty 
\lesssim \norm{\varphi}_{X_1^2}$, so that Assumption (B5) is equivalent to $\varphi \in 
C(\R; X_1^2) \cap C^1(\R; L^2)$.
\end{remark}

\subsection{Example: nonrelativistic particles with interaction potential of critical type}
Consider nonrelativistic particles in $\R^3$ with one-particle Hamiltonian $h = -\Delta$. The 
interaction potential is given by $w(x) = \lambda \abs{x}^{-2}$. This corresponds to a critical 
nonlinearity of the Hartree equation. We require that $\lambda > -1/2$, which ensures that the 
$N$-body Hamiltonian is stable and the Hartree equation has global solutions. To see this, 
recall Hardy's inequality in three dimensions,
\begin{equation} \label{Hardy in three dimensions}
\scalar{\varphi}{\abs{x}^{-2} \varphi} \;\leq\; 4 \scalar{\varphi}{-\Delta \varphi}\,.
\end{equation}
One easily infers that Assumptions (B1) -- (B3) hold. Moreover, Assumption (B4) holds for any 
$p < 3/2$.

In order to verify Assumption (B5) we refer to \cite{GinibreVelo1980}, where local 
well-posedness is proven. Global existence follows by standard methods using conservation of 
the mass $\norm{\varphi}^2$, conservation of the energy $E^\varphi$, and Hardy's inequality 
\eqref{Hardy in three dimensions}. Together they yield an a-priori bound on 
$\norm{\varphi}_{X_1}$, from which an a-priori bound for $\norm{\varphi}_{X_1^2}$ may be 
inferred; see \cite{GinibreVelo1980} for details.

We conclude: For any $\eta < 1/3$ there is a continuous function $\phi(t)$ such that Theorem 
\ref{theorem for singular potentials} holds.

\subsection{Proof of Theorem \ref{theorem for singular potentials}}
\subsubsection{An energy estimate}
In the first step of our proof we exploit conservation of energy to derive an estimate on 
$\norm{\nabla_1 q_1 \Psi}$.

\begin{lemma} \label{lemma: energy estimate}
Assume that Assumptions (B1) -- (B5) hold. Then \begin{equation*}
\norm{\nabla_1 q_1 \Psi}^2 \;\lesssim\; E^\Psi - E^\varphi + \pb{1 + \norm{\varphi}_{X_1^2 \cap 
L^\infty}^2} \pbb{\beta + \frac{1}{\sqrt{N}}}\,.
\end{equation*}
	\end{lemma}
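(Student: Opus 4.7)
My plan is to begin from Assumption (B1), which gives $\norm{\nabla_1 q_1 \Psi}^2 \leq \kappa_1 \langle q_1\Psi, h_1 q_1\Psi\rangle + \kappa_2 \alpha \leq \kappa_1 \langle q_1\Psi, h_1 q_1\Psi\rangle + \kappa_2 \beta$, reducing matters to estimating $T \deq \langle q_1\Psi, h_1 q_1\Psi\rangle$. Expanding $h_1$ via $1 = p_1 + q_1$ and using $p_1 h_1 p_1 = \langle\varphi, h\varphi\rangle p_1$, together with the identities $E_N^\Psi = \langle\Psi, h_1\Psi\rangle + \tfrac{N-1}{2N}\langle\Psi, W_{12}\Psi\rangle$ (from symmetry of $\Psi$) and $E^\varphi = \langle\varphi, h\varphi\rangle + \tfrac{1}{2}\langle\varphi^{\otimes 2}, W_{12}\,\varphi^{\otimes 2}\rangle$, I arrive at the master identity
\begin{equation*}
T \;=\; (E_N^\Psi - E^\varphi) + \alpha \langle\varphi, h\varphi\rangle + \mathcal{I} - 2\Re\langle\Psi, p_1 h_1 q_1\Psi\rangle,
\end{equation*}
where $\mathcal{I} \deq \tfrac{1-\alpha}{2}\langle\varphi^{\otimes 2}, W_{12}\,\varphi^{\otimes 2}\rangle - \tfrac{N-1}{2N}\langle\Psi, W_{12}\Psi\rangle$ collects the interaction discrepancy. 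The elementary term is controlled by $\alpha \langle\varphi, h\varphi\rangle \leq \beta \norm{\varphi}^2_{X_1^2}$.

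I would bound $\mathcal{I}$ by adapting the expansion technique from Section \ref{L2 potentials}: insert $(p_1+q_1)(p_2+q_2)$ on both sides of $W_{12}$ and use $p_1 p_2 W_{12} p_1 p_2 = \langle\varphi^{\otimes 2}, W_{12}\,\varphi^{\otimes 2}\rangle p_1 p_2$ to cancel the Hartree piece modulo an $O(1/N)$ discrepancy arising both from the $\tfrac{N-1}{2N}$ versus $\tfrac{1}{2}$ mismatch and from $\langle\Psi, p_1 p_2\Psi\rangle = 1 - 2\alpha + \langle\Psi, q_1 q_2\Psi\rangle$. Each remaining term carries at least one $q_i$-factor and is handled by Cauchy--Schwarz together with Lemma \ref{replacing q's with n's} (which converts $q_i$-factors into $\widehat m$ or $\widehat m^2$, hence into $\alpha$ or $\beta$-type quantities) and Lemma \ref{pulling projectors through} (for $\widehat n$-weight shifts through $W_{12}$). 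Young's inequality applied to $w \ast \abs{\varphi}^2$ and $w^2 \ast \abs{\varphi}^2$, combined with Assumption (B4), dominates both by multiples of $\norm{\varphi}^2_{X_1^2 \cap L^\infty}$, yielding $\abs{\mathcal{I}} \lesssim \norm{\varphi}^2_{X_1^2 \cap L^\infty}(\beta + 1/\sqrt{N})$. The cross kinetic term admits the Cauchy--Schwarz estimate $|\langle\Psi, p_1 h_1 q_1\Psi\rangle| \leq \norm{h_1^{1/2} p_1 \Psi}\norm{h_1^{1/2} q_1 \Psi} \leq \sqrt{\langle\varphi, h\varphi\rangle \cdot T}$, after which Young's inequality with parameter $\epsilon = 1/2$ absorbs the $\sqrt{T}$-factor into the left-hand side of the master identity at the cost of an additive $\langle\varphi, h\varphi\rangle$-contribution.

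The main obstacle is precisely this last step: a naive absorption leaves behind a bare $\langle\varphi, h\varphi\rangle$-term, while the claim requires $\norm{\varphi}^2$ to appear only as a prefactor of $\beta + 1/\sqrt{N}$. My plan to overcome this is to invoke the Hartree equation $(h + W^\varphi)\varphi = i\dot\varphi$ and rewrite $p_1 h_1 q_1 = -i\ket{\varphi}\bra{\dot\varphi}_1 q_1 - p_1 W_1^\varphi q_1 = -i\ket{\varphi}\bra{q\dot\varphi}_1 - p_1 W_1^\varphi q_1$, where $q\dot\varphi \perp \varphi$. The second term is an interaction-type piece that folds into the $\mathcal{I}$-bound by the scheme above; the first is handled by the shift trick, applying Lemma \ref{pulling projectors through} with an $\widehat{n}$-weight to $p_1 \ket{\varphi}\bra{q\dot\varphi}_1 q_1$, which produces a factor $\widehat{\tau_1 n}$ satisfying $\norm{\widehat{\tau_1 n}\Psi}^2 = \alpha + 1/N$. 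This is precisely the mechanism that injects the explicit $1/\sqrt{N}$ into the final estimate. Assumption (B5) guarantees $\dot\varphi \in L^2$ with $\norm{\dot\varphi}_2 \leq \norm{h\varphi}_2 + \norm{w \ast \abs{\varphi}^2}_\infty \lesssim 1 + \norm{\varphi}^2_{X_1^2 \cap L^\infty}$, so that the overall prefactor has the right form. Combining the estimates, absorbing the self-referential $T$-term back into the left-hand side, and a final AM-GM step to balance $(1+\norm{\varphi}^2)\sqrt{\beta + 1/\sqrt{N}}$ against $(1+\norm{\varphi}^2)(\beta + 1/\sqrt{N}) + (E_N^\Psi - E^\varphi)$ yields the claim.
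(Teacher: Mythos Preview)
Your master identity and the handling of the cross kinetic term via the Hartree equation are sound in spirit, but there is a genuine gap in your treatment of the interaction discrepancy $\mathcal I$, and it is precisely the gap that Section~\ref{section: singular potentials} is designed to fill. When you expand $\langle\Psi,W_{12}\Psi\rangle$ through $(p_1+q_1)(p_2+q_2)$, two types of terms arise that your scheme does not cover. First, for any term in which $W_{12}$ is flanked by at most one $p$-factor on a given side (e.g.\ $\langle\Psi,p_1p_2W_{12}q_1q_2\Psi\rangle$ or $\langle\Psi,q_1p_2W_{12}q_1q_2\Psi\rangle$), Cauchy--Schwarz produces $p_2W_{12}^2p_2=(w^2*|\varphi|^2)_1\,p_2$. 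Under (B4) one has $w\in L^{p}+L^\infty$ with $p_0<p\le 2$; for $p<2$ the square $w^2$ lies only in $L^{p/2}$ with $p/2<1$, and $\|w^2*|\varphi|^2\|_\infty$ is in general \emph{infinite} no matter how smooth $\varphi$ is. Your appeal to ``Young's inequality applied to $w^2*|\varphi|^2$'' therefore fails. The paper's cure is to write $w^{(p)}=\nabla\cdot\xi$ with $\xi\in L^2$ (Hardy--Littlewood--Sobolev) and integrate by parts; this trades the ill-defined $w^2*|\varphi|^2$ for the finite $\xi^2*|\varphi|^2$ at the price of a factor $\|\nabla_1q_1\Psi\|$ on the right-hand side, which is then absorbed by the quadratic-inequality trick at the end.

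Second, and more structurally, your expansion produces the fully projected term $\langle\Psi,q_1q_2W_{12}q_1q_2\Psi\rangle$, which enters $\mathcal I$ with a minus sign. Since (B3) and (B4) allow $w$ to be negative and singular, this term admits no lower bound from Young-type estimates alone; it can only be controlled by \emph{borrowing kinetic energy}. The paper does this by working with the decomposition $\umat=p_1p_2+(\umat-p_1p_2)$ rather than $(p_1+q_1)(p_2+q_2)$, and then combining the dangerous piece $-\tfrac{N-1}{2N}\langle\Psi,(\umat-p_1p_2)W_{12}(\umat-p_1p_2)\Psi\rangle$ with $(1-\kappa_3)\langle\Psi,(\umat-p_1p_2)h_1(\umat-p_1p_2)\Psi\rangle$ taken from the left-hand side; Assumption (B3) then makes the sum nonpositive. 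You never invoke (B3), so this mechanism is absent from your argument. In short: your outline would work in the setting of Section~\ref{L2 potentials} (where $w^2*|\varphi|^2$ is bounded), but not here; the two missing ingredients are the electrostatic representation $w^{(p)}=\nabla\cdot\xi$ and the stability bound (B3).
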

\begin{proof}
Write
\begin{equation} \label{E^varphi written out}
E^\varphi \;=\; \scalar{\varphi}{h \varphi} + \frac{1}{2} \scalar{\varphi}{W^\varphi 
\varphi}\,,
\end{equation}
as well as
\begin{equation} \label{E^Psi written out}
E^\Psi \;=\; \scalar{\Psi}{h_1 \Psi} + \frac{N-1}{2N} \scalar{\Psi}{W_{12} \Psi}\,.
\end{equation}
Inserting
\begin{equation*}
\umat \;=\; p_1 p_2 + (\umat - p_1 p_2)
\end{equation*}
in front of every $\Psi$ in \eqref{E^Psi written out} and multiplying everything out yields
\begin{align*}
&\mspace{-60mu} \scalarb{\Psi}{(\umat - p_1 p_2) h_1 (\umat - p_1 p_2) \Psi} \\
\;=\; &\; E^\Psi - \scalar{\Psi}{p_1 p_2 h_1 p_1 p_2 \Psi}
\\
&{}-{} \frac{N-1}{2N} \scalar{\Psi}{p_1 p_2 W_{12} p_1 p_2 \Psi}
\\
&{}-{} \scalarb{\Psi}{(\umat - p_1 p_2) h_1 p_1 p_2 \Psi} - \scalarb{\Psi}{p_1 p_2 h_1 (\umat - 
p_1 p_2) \Psi}
\\
&{}-{} \frac{N-1}{2N} \scalarb{\Psi}{(\umat - p_1 p_2) W_{12} p_1 p_2 \Psi} - \frac{N-1}{2N} 
\scalarb{\Psi}{p_1 p_2 W_{12} (\umat - p_1 p_2) \Psi}
\\
&{}-{}\frac{N-1}{2N} \scalarb{\Psi}{(\umat - p_1 p_2) W_{12}(\umat - p_1 p_2) \Psi}\,.
\end{align*}
We want to find an upper bound for the left-hand side. In order to control the last term on the 
right-hand side for negative interaction potentials, we need to use some of the kinetic energy 
on the left-hand side. To this end, we split the left-hand side by multiplying it with $1 = 
\kappa_3 + (1 - \kappa_3)$. Thus, using \eqref{E^varphi written out}, we get
\begin{align}
&\mspace{-30mu} \kappa_3 \scalarb{\Psi}{(\umat - p_1 p_2) h_1 (\umat - p_1 p_2) \Psi} \notag \\
\;=\; &\; E^\Psi - E^\varphi
\notag \\
&{}-{} \scalar{\Psi}{p_1 p_2 h_1 p_1 p_2 \Psi} + \scalar{\varphi}{h \varphi}
\notag \\
&{}-{} \frac{N-1}{2N} \scalar{\Psi}{p_1 p_2 W_{12} p_1 p_2 \Psi} + \frac{1}{2} 
\scalar{\varphi}{W^\varphi \varphi}
\notag \\
&{}-{} \scalarb{\Psi}{(\umat - p_1 p_2) h_1 p_1 p_2 \Psi} - \scalarb{\Psi}{p_1 p_2 h_1 (\umat - 
p_1 p_2) \Psi}
\notag \\
&{}-{} \frac{N-1}{2N} \scalarb{\Psi}{(\umat - p_1 p_2) W_{12} p_1 p_2 \Psi} - \frac{N-1}{2N} 
\scalarb{\Psi}{p_1 p_2 W_{12} (\umat - p_1 p_2) \Psi}
\notag \\ \label{main formula in energy estimate}
&{}-{}\frac{N-1}{2N} \scalarb{\Psi}{(\umat - p_1 p_2) W_{12}(\umat - p_1 p_2) \Psi} - (1- 
\kappa_3) \scalarb{\Psi}{(\umat - p_1 p_2) h_1 (\umat - p_1 p_2) \Psi} \,.
\end{align}
The rest of the proof consists in estimating each line on the right-hand side of \eqref{main 
formula in energy estimate} separately. There is nothing to be done with the first line.

\step{Line 6.} The last line of \eqref{main formula in energy estimate} is equal to
\begin{multline*}
-\frac{N-1}{2N} \scalarB{\Psi}{(\umat - p_1 p_2) W_{12}(\umat - p_1 p_2) \Psi} - \frac{1}{2} 
(1- \kappa_3) \scalarb{\Psi}{(\umat - p_1 p_2) (h_1 + h_2) (\umat - p_1 p_2) \Psi} \\
\leq\; -\frac{N-1}{2N} \scalarB{\Psi}{(\umat - p_1p_2) \qb{(1 - \kappa_3)(h_1 + h_2) + W_{12}} 
(\umat - p_1 p_2) \Psi} \;\leq\; 0\,,
\end{multline*}
where in the last step we used Assumption (B3).

\step{Line 2.} The second line on the right-hand side of \eqref{main formula in energy 
estimate} is bounded in absolute value by
\begin{align*}
\absb{\scalar{\varphi}{h \varphi} - \scalar{\Psi}{p_1 p_2 h_1 p_1 p_2 \Psi}} &\;=\; 
\scalar{\varphi}{h \varphi} \absb{\scalar{\Psi}{(\umat - p_1 p_2)\Psi}}
\\
&\;=\; \scalar{\varphi}{h \varphi} \absb{\scalar{\Psi}{(q_1 p_2 + p_1 q_2 + q_1 q_2)\Psi}} \\
&\;\leq\; 3 \, \alpha \, \scalar{\varphi}{h \varphi}
\\
&\;\leq\; 3 \, \beta \, \scalar{\varphi}{h \varphi}\,,
\end{align*}
where in the last step we used \eqref{alpha < beta}.

\step{Line 3.} The third line on the right-hand side of \eqref{main formula in energy estimate} 
is bounded in absolute value by
\begin{align*}
\absbb{\frac{1}{2}\scalar{\varphi}{W^\varphi \varphi} - \frac{N-1}{2N} \scalar{\Psi}{p_1 p_2 
W_{12} p_1 p_2 \Psi}}
&\;=\; \frac{1}{2}\absb{\scalar{\varphi}{W^\varphi \varphi}}\, \absbb{1 - \frac{N-1}{N} 
\scalar{\Psi}{p_1 p_2 \Psi}}
\\
&\;\leq\; \frac{1}{2}\norm{W^\varphi}_\infty \absbb{\scalarb{\Psi}{(q_1 p_2 + p_1 q_2 + q_1 
q_2) \Psi} + \frac{1}{N} \scalar{\Psi}{p_1 p_2 \Psi}}
\\
&\;\leq\; \frac{3}{2} \norm{W^\varphi}_\infty \pbb{\alpha + \frac{1}{N}}
\\
&\;\leq \frac{3}{2} \norm{W^\varphi}_\infty \pbb{\beta + \frac{1}{N}}\,.
\end{align*}
As in \eqref{bound of convolution with w}, one finds that
\begin{equation*}
\norm{W^\varphi}_\infty \;\leq\; \norm{w}_{L^1 + L^\infty} \norm{\varphi}_{L^2 \cap 
L^\infty}^2\,.
\end{equation*}

\step{Line 4.} The fourth line on the right-hand side of \eqref{main formula in energy 
estimate} is bounded in absolute value by
\begin{align*}
\absb{\scalarb{\Psi}{(\umat - p_1 p_2) h_1 p_1 p_2 \Psi}} &\;=\; \absb{\scalarb{\Psi}{(q_1 p_2 
+ p_1 q_2 + q_1 q_2) h_1 p_1 p_2 \Psi}}
\\
&\;=\; \absb{\scalarb{\Psi}{q_1 h_1 p_1 p_2 \Psi}}
\\
&\;=\; \absb{\scalarb{\Psi}{q_1 \, \widehat{n}^{-1/2} \, \widehat{n}^{1/2} \, h_1 p_1 p_2 
\Psi}}
\\
&\;=\; \absb{\scalarb{\Psi}{q_1 \, \widehat{n}^{-1/2} \, h_1 \, \widehat{\tau_1 n}^{1/2} \, p_1 
p_2 \Psi}}\,,
\end{align*}
where in the last step we used Lemma \ref{pulling projectors through}. Using Cauchy-Schwarz, we 
thus get
\begin{align*}
\absb{\scalarb{\Psi}{(\umat - p_1 p_2) h_1 p_1 p_2 \Psi}} &\;\leq\; \sqrt{\scalarb{\Psi}{q_1 
\widehat{n}^{-1} \Psi}} \, \sqrt{\scalarb{\Psi}{p_1 p_2 \,\widehat{\tau_1 n}^{1/2}\, h_1^2 \, 
\widehat{\tau_1 n}^{1/2} \, p_1 p_2 \Psi}}
\\
&\;=\; \sqrt{\scalar{\Psi}{\widehat{n} \,\Psi}} \sqrt{\scalar{\varphi}{h^2 \varphi}} 
\sqrt{\scalarb{\Psi}{\widehat{\tau_1 n} \, p_1 p_2 \Psi}}\,,
\end{align*}
where in the second step we used Lemma \ref{replacing q's with n's}. Using
\begin{equation*}
(\tau_1 n)(k) \;=\; \sqrt{\frac{k + 1}{N}} \;\leq\; n(k) + \frac{1}{\sqrt{N}}
\end{equation*}
we find
\begin{align*}
\absb{\scalarb{\Psi}{(\umat - p_1 p_2) h_1 p_1 p_2 \Psi}} &\;\leq\; \sqrt{\beta} 
\sqrt{\scalar{\varphi}{h^2 \varphi}} \sqrt{\scalar{\Psi}{\widehat{n} \, \Psi} + 
\frac{1}{\sqrt{N}}}
\\
&\;=\; \sqrt{\scalar{\varphi}{h^2 \varphi}} \sqrt{\beta} \pbb{\sqrt{\beta} + \frac{1}{N^{1/4}}}
\\
&\;\leq\; 2 \sqrt{\scalar{\varphi}{h^2 \varphi}} \pbb{\beta + \frac{1}{\sqrt{N}}}\,.
\end{align*}

\step{Line 5.} Finally, we turn our attention to the fifth line on the right-hand side of 
\eqref{main formula in energy estimate}, which is bounded in absolute value by
\begin{equation*}
\absb{\scalarb{\Psi}{p_1 p_2 W_{12} (\umat - p_1 p_2) \Psi}} \;=\; \absb{\scalarb{\Psi}{p_1 p_2 
W_{12} (p_1 q_2 + q_1 p_2 + q_1 q_2\Psi}} \;\leq\; 2 \mathrm{(a)} + \mathrm{(b)}\,,
\end{equation*}
where
\begin{equation*}
\mathrm{(a)} \;\deq\; \absb{\scalarb{\Psi}{p_1 p_2 W_{12} q_1 p_2 \Psi}}\,, \qquad \mathrm{(b)} 
\;\deq\; \absb{\scalarb{\Psi}{p_1 p_2 W_{12} q_1 q_2 \Psi}}\,.
\end{equation*}
One finds, using \eqref{W sandwiched between p's}, Lemma \ref{pulling projectors through} and 
Lemma \ref{replacing q's with n's},
\begin{align*}
\mathrm{(a)} &\;=\; \absb{\scalarb{\Psi}{p_1 p_2 W_1^\varphi q_1 \Psi}}
\\
&\;=\; \absb{\scalarb{\Psi}{p_1 p_2 W_1^\varphi \, \widehat{n}^{1/2} \, \widehat{n}^{-1/2} \, 
q_1 \Psi}}
\\
&\;=\; \absb{\scalarb{\Psi}{p_1 p_2 \, \widehat{\tau_1 n}^{1/2} \, W_1^\varphi \, 
\widehat{n}^{-1/2} \, q_1 \Psi}}
\\
&\;\leq\; \norm{W^\varphi}_\infty \, \sqrt{\scalarb{\Psi}{\widehat{\tau_1 n} \, \Psi}} \, 
\sqrt{\scalarb{\Psi}{\widehat{n}^{-1} \, q_1 \Psi}}
\\ &\;\leq\; \norm{W^\varphi}_\infty \, \sqrt{\scalarb{\Psi}{\widehat{n} \, \Psi} + 
\frac{1}{\sqrt{N}}} \, \sqrt{\scalarb{\Psi}{\widehat{n}\,  \Psi}}
\\
&\;\leq\; 2 \norm{W^\varphi}_\infty \pbb{\beta + \frac{1}{\sqrt{N}}}\,.
\end{align*}

The estimation of (b) requires a little more effort. We start by splitting
\begin{equation*}
w \;=\; w^{(p)} + w^{(\infty)}\,, \qquad w^{(p)} \in L^p \,,\; w^{(\infty)} \in L^\infty\,.
\end{equation*}
This yields $\mathrm{(b)} \leq \mathrm{(b)}^{(p)} + \mathrm{(b)}^{(\infty)}$ in 
self-explanatory notation. Let us first concentrate on $\mathrm{(b)}^{(\infty)}$:
\begin{align*}
\mathrm{(b)}^{(\infty)} &\;=\; \absb{\scalarb{\Psi}{p_1 p_2 W_{12}^{(\infty)} q_1 q_2 \Psi}}
\\
&\;=\; \absb{\scalarb{\Psi}{p_1 p_2 W_{12}^{(\infty)} \, \widehat{n} \, \widehat{n}^{-1} \, q_1 
q_2 \Psi}}
\\
&\;=\; \absb{\scalarb{\Psi}{p_1 p_2 \, \widehat{\tau_2 n} \, W_{12}^{(\infty)} \, 
\widehat{n}^{-1} \, q_1 q_2 \Psi}}
\\
&\;\leq\; \norm{W^{(\infty)}}_\infty \sqrt{\scalarb{\Psi}{\widehat{\tau_2 n}^2 \Psi}} \, 
\sqrt{\scalarb{\Psi}{\widehat{n}^{-2} \, q_1 q_2 \Psi}}
\\
&\;\leq\; \norm{w^{(\infty)}}_\infty \, \sqrt{\alpha + \frac{2}{N}} \, \sqrt{\alpha}
\\
&\;\leq\; 2 \norm{w^{(\infty)}}_\infty \, \pbb{\beta + \frac{2}{N}}\,.
\end{align*}

Let us now consider $\mathrm{(b)}^{(p)}$. In order to deal with the singularities in $w^{(p)}$, 
we write it as the divergence of a vector field $\xi$,
\begin{equation} \label{w as divergence of xi}
w^{(p)} \;=\; \nabla \cdot \xi\,.
\end{equation}
This is nothing but a problem of electrostatics, which is solved by
\begin{equation*}
\xi \;=\; C \, \frac{x}{\abs{x}^d} * w^{(p)}\,,
\end{equation*}
with some constant $C$ depending on $d$. By the Hardy-Littlewood-Sobolev inequality, we find
\begin{equation} \label{bound on xi}
\norm{\xi}_q \;\lesssim\; \normb{w^{(p)}}_p\,, \qquad \frac{1}{q} \;=\; \frac{1}{p} - 
\frac{1}{d}\,.
\end{equation}
Thus if $p \geq p_0$ then $q \geq 2$. Denote by $X_{12}$ multiplication by $\xi(x_1 - x_2)$.  
For the following it is convenient to write $\nabla \cdot \xi = \nabla^\rho \xi^\rho$, where a 
summation over $\rho = 1, \dots, d$ is implied.

Recalling Lemma \ref{pulling projectors through}, we therefore get
\begin{align*}
\mathrm{(b)}^{(p)} &\;=\; \absb{\scalarb{\Psi}{p_1 p_2 W_{12}^{(p)} \, \widehat{n} \, 
\widehat{n}^{-1} q_1 q_2 \Psi}}
\\
&\;=\; \absb{\scalarb{\Psi}{p_1 p_2\, \widehat{\tau_2 n}\, W_{12}^{(p)} \, \widehat{n}^{-1} q_1 
q_2 \Psi}} \\
&\;=\; \absb{\scalarb{\Psi}{p_1 p_2\, \widehat{\tau_2 n}\, (\nabla_1^\rho X^\rho)_{12} \, 
\widehat{n}^{-1} q_1 q_2 \Psi}}\,.
\end{align*}
Integrating by parts yields
\begin{equation} \label{b^p in energy estimate}
\mathrm{(b)}^{(p)} \;\leq\; \absb{\scalarb{\nabla_1^\rho \widehat{\tau_2 n}\,p_1 p_2 
\Psi}{X^\rho_{12} \, \widehat{n}^{-1} q_1 q_2 \Psi}} + \absb{\scalarb{\widehat{\tau_2 n}\,p_1 
p_2 \Psi}{X^\rho_{12} \nabla_1^\rho\, \widehat{n}^{-1} q_1 q_2 \Psi}}\,.
\end{equation}
Let us begin by estimating the first term. Recalling that $p = \ket{\varphi}\bra{\varphi}$, we 
find that the first term on the right-hand side of \eqref{b^p in energy estimate} is equal to
\begin{align*}
\absb{\scalarb{X^\rho_{12} \, p_2 (\nabla^\rho p)_1 \widehat{\tau_2 n}\, \Psi}{\widehat{n}^{-1} 
q_1 q_2 \Psi}} &\;\leq\; \sqrt{\scalarb{(\nabla^\rho p)_1 \widehat{\tau_2 n}\, \Psi}{p_2 
X^\rho_{12} X^\sigma_{12} \, p_2 (\nabla^\sigma p)_1 \widehat{\tau_2 n}\, \Psi}} \, 
\normb{\widehat{n}^{-1} q_1 q_2 \Psi}
\\
&\;\leq\; \sqrt{\normb{\abs{\varphi}^2 * \xi^2}_\infty} \, \norm{\nabla \varphi} \, 
\normb{\widehat{\tau_2 n}\, \Psi}\, \normb{\widehat{n}^{-1} q_1 q_2 \Psi}
\\
&\;\lesssim\; \norm{\xi}_q \, \norm{\varphi}_{L^2 \cap L^\infty} \, \norm{\varphi}_{X_1} 
\sqrt{\alpha + \frac{2}{N}} \, \sqrt{\alpha}\,,
\end{align*}
where we used Young's inequality, Assumption (B1), and Lemma \ref{replacing q's with n's}. 
Recalling that $\beta \leq \alpha$, we conclude that the first term on the right-hand side of 
\eqref{b^p in energy estimate} is bounded by
\begin{equation*}
C \, \norm{\varphi}_{X_1\cap L^\infty}^2 \pbb{\beta + \frac{1}{N}}\,.
\end{equation*}

Next, we estimate the second term on the right-hand side of \eqref{b^p in energy estimate}. It 
is equal to
\begin{align*}
\absb{\scalarb{X_{12}^\rho\, p_1 p_2 \,\widehat{\tau_2 n}\, \Psi}{\nabla_1^\rho\, 
\widehat{n}^{-1} q_1 q_2 \Psi}} &\;\leq\; \sqrt{\scalarb{\widehat{\tau_2 n} \, \Psi}{ p_1 p_2 
X_{12}^2 p_1 p_2 \, \widehat{\tau_2 n} \, \Psi}} \, \normb{\nabla_1 \, \widehat{n}^{-1} \, q_1 
q_2 \Psi}
\\
&\;\leq\; \sqrt{\normb{\abs{\varphi}^2 * \xi^2}_\infty} \, \norm{\widehat{\tau_2 n} \, \Psi}\, 
\normb{\nabla_1 \, \widehat{n}^{-1} \, q_1 q_2 \Psi}
\\
&\;\leq\; \norm{\xi}_q \, \norm{\varphi}_{L^2 \cap L^\infty} \sqrt{\alpha + \frac{2}{N}} \, 
\normb{\nabla_1 \, \widehat{n}^{-1} \, q_1 q_2 \Psi}\,.
\end{align*}
We estimate $\normb{\nabla_1 \, \widehat{n}^{-1} \, q_1 q_2 \Psi}$ by introducing $\umat = p_1 
+ q_1$ on the left. The term arising from $p_1$ is bounded by
\begin{align*}
\normb{p_1 \nabla_1 \, \widehat{n}^{-1} \, q_1 q_2 \Psi} &\;=\; \normb{p_1 q_2 \, 
\widehat{\tau_1 n}^{-1}\, \nabla_1  q_1 \Psi}
\\
&\;\leq\; \sqrt{\scalarb{\nabla_1  q_1 \Psi}{q_2 \, \widehat{\tau_1 n}^{-2}\, \nabla_1  q_1 
\Psi}}
\\
&\;=\; \sqrt{\scalarbb{\nabla_1  q_1 \Psi}{\frac{1}{N-1} \sum_{i = 2}^N q_i \, \widehat{\tau_1 
n}^{-2}\, \nabla_1  q_1 \Psi}}
\\
&\;\leq\; \sqrt{\scalarbb{\nabla_1  q_1 \Psi}{\frac{1}{N} \sum_{i = 1}^N q_i \, \widehat{\tau_1 
n}^{-2}\, \nabla_1  q_1 \Psi}}
\\
&\;=\; \sqrt{\scalarb{\nabla_1  q_1 \Psi}{\widehat{n}^2 \, \widehat{\tau_1 n}^{-2}\, \nabla_1  
q_1 \Psi}}
\\
&\;\leq\; \norm{\nabla_1 q_1 \Psi}\,.
\end{align*}
The term arising from $q_1$ in the above splitting is dealt with in exactly the same way. Thus 
we have proven that the second term on the right-hand side of \eqref{b^p in energy estimate} is 
bounded by
\begin{equation*}
C \norm{\varphi}_{L^2 \cap L^\infty} \sqrt{\beta + \frac{1}{N}} \, \norm{\nabla_1 q_1 \Psi}\,.
\end{equation*}

Summarizing, we have
\begin{equation*}
\mathrm{(b)}^{(p)} \;\lesssim\; \norm{\varphi}_{X_1 \cap L^\infty}^2 \pbb{\beta + \frac{1}{N}} 
+ \norm{\varphi}_{L^2 \cap L^\infty} \sqrt{\beta + \frac{1}{N}} \, \norm{\nabla_1 q_1 \Psi}\,.
\end{equation*}

\step{Conclusion of the proof.} Putting all the estimates of the right-hand side of \eqref{main 
formula in energy estimate} together, we find
\begin{multline}
\scalarb{\Psi}{(\umat - p_1 p_2) h_1 (\umat - p_1 p_2) \Psi}  \\
\lesssim\; E^\Psi - E^\varphi + \pb{1 + \norm{\varphi}_{X_1^2 \cap L^\infty}^2} \pbb{\beta + 
\frac{1}{\sqrt{N}}}
+ \norm{\varphi}_{L^2 \cap L^\infty} \sqrt{\beta + \frac{1}{N}} \, \norm{\nabla_1 q_1 \Psi}\,.
\label{energy bound with two p's}
\end{multline}
Next, from $\umat - p_1 p_2 = p_1 q_2 + q_1$
we deduce
\begin{equation*}
\norm{\sqrt{h_1} q_1 \Psi} \;=\; \normb{\sqrt{h_1} (\umat - p_1 p_2) \Psi - \sqrt{h_1} p_1 q_2 
\Psi}
\;\leq\; \normb{\sqrt{h_1} (\umat - p_1 p_2) \Psi} + \norm{\sqrt{h_1} p_1 q_2 \Psi}\,.
\end{equation*}
Now, recalling that $p = \ket{\varphi} \bra{\varphi}$, we find
\begin{equation*}
\norm{\sqrt{h_1} p_1 q_2 \Psi} \;\leq\; \norm{\sqrt{h_1} p_1} \norm{q_2 \Psi} \;\leq\; 
\norm{\varphi}_{X_1} \sqrt{\beta}\,.
\end{equation*}
Therefore,
\begin{equation*}
\norm{\sqrt{h_1} q_1 \Psi}^2 \;\lesssim\; \normb{\sqrt{h_1} (\umat - p_1 p_2) \Psi}^2 + 
\norm{\varphi}_{X_1}^2\beta\,.
\end{equation*}
Plugging in \eqref{energy bound with two p's} yields
\begin{equation*}
\normb{\sqrt{h_1} q_1 \Psi}^2 \;\lesssim\; E^\Psi - E^\varphi + \pb{1 + \norm{\varphi}_{X_1^2 
\cap L^\infty}^2} \pbb{\beta + \frac{1}{\sqrt{N}}}
+ \norm{\varphi}_{L^2 \cap L^\infty} \sqrt{\beta + \frac{1}{N}} \, \norm{\nabla_1 q_1 \Psi}\,.
\end{equation*}
Next, we observe that Assumption (B1) implies
\begin{equation*}
\norm{\nabla_1 q_1 \Psi} \;\lesssim\; \normb{\sqrt{h_1} q_1 \Psi} + \sqrt{\beta}\,,
\end{equation*}
so that we get
\begin{equation*}
\normb{\sqrt{h_1} q_1 \Psi}^2 \;\lesssim\; E^\Psi - E^\varphi + \pb{1 + \norm{\varphi}_{X_1^2 
\cap L^\infty}^2} \pbb{\beta + \frac{1}{\sqrt{N}}}
+ \norm{\varphi}_{L^2 \cap L^\infty} \sqrt{\beta + \frac{1}{N}} \, \norm{\sqrt{h_1} q_1 
\Psi}\,.
\end{equation*}
Now we claim that
\begin{equation} \label{energy estimate for h}
\normb{\sqrt{h_1} q_1 \Psi}^2 \;\lesssim\; E^\Psi - E^\varphi + \pb{1 + \norm{\varphi}_{X_1^2 
\cap L^\infty}^2} \pbb{\beta + \frac{1}{\sqrt{N}}}\,.
\end{equation}
This follows from the general estimate
\begin{equation*}
x^2 \;\leq\; C (R + a x) \quad \Longrightarrow \quad x^2 \;\leq\; 2 CR + C^2 a^2\,,
\end{equation*}
which itself follows from the elementary inequality
\begin{equation*}
C (R + a x) \;\leq\; CR + \frac{1}{2} C^2 a^2 + \frac{1}{2} x^2\,.
\end{equation*}
The claim of the Lemma now follows from \eqref{energy estimate for h} by using Assumption (B1).
\end{proof}

\subsubsection{A bound on $\dot{\beta}$}
We start exactly as in Section \ref{L2 potentials}. Assumptions (B1) -- (B5) imply that $\beta$ 
is differentiable in $t$ with derivative
\begin{align}
\dot{\beta} &\;=\; \frac{\ii}{2} \scalarb{\Psi}{\comb{(N-1) W_{12} - N W^\varphi_1 - N 
W^\varphi_2}{\widehat{n}}\Psi}
\notag \\ &\;=\; 2 \mathrm{(I)} + 2 \mathrm{(II)} + \mathrm{(III)} + \text{ complex 
conjugate}\,,
\end{align}
where
\begin{align*}
\mathrm{(I)} &\;\deq\; \frac{\ii}{2} \scalarb{\Psi}{p_1 p_2 \comb{(N-1) W_{12} - N W^\varphi_1 
- N W^\varphi_2}{\widehat{n}} q_1 p_2\Psi}\,,
\\
\mathrm{(II)} &\;\deq\; \frac{\ii}{2} \scalarb{\Psi}{q_1 p_2 \comb{(N-1) W_{12} - N W^\varphi_1 
- N W^\varphi_2}{\widehat{n}} q_1 q_2\Psi}\,,
\\
\mathrm{(III)} &\;\deq\; \frac{\ii}{2} \scalarb{\Psi}{p_1 p_2 \comb{(N-1) W_{12} - N 
W^\varphi_1 - N W^\varphi_2}{\widehat{n}} q_1 q_2\Psi}\,.
\end{align*}

\step{Term $\mathrm{(I)}$.} Using \eqref{W sandwiched between p's} we find
\begin{align*}
2 \absb{\mathrm{(I)}} &\;=\; \absb{\scalarb{\Psi}{p_1 p_2 \comb{(N-1) W_{12} - N W^\varphi_1 - 
N W^\varphi_2}{\widehat{n}} q_1 p_2\Psi}}
\\
&\;=\; \absb{\scalarb{\Psi}{p_1 p_2 \comb{W^\varphi_1}{\widehat{n}} q_1 p_2\Psi}}
\\
&\;=\; \absb{\scalarb{\Psi}{p_1 p_2 W^\varphi_1 \pb{\widehat{n} - \widehat{\tau_{-1} n}} q_1 
p_2\Psi}}\,,
\end{align*}
where we used Lemma \ref{pulling projectors through}. Define
\begin{equation} \label{derivative of n}
\mu(k) \;\deq\; N\pb{n(k) - (\tau_{-1}n)(k)} \;=\;  \frac{\sqrt{N}}{\sqrt{k} + \sqrt{k - 1}} 
\;\leq\; n^{-1}(k)\,, \qquad k = 1, \dots, N\,.
\end{equation}
Thus,
\begin{align*}
\absb{\mathrm{(I)}} &\;=\; \frac{1}{N}\absb{\scalarb{\Psi}{p_1 p_2 W^\varphi_1 
\,\widehat{\mu}\, q_1 p_2\Psi}}
\\
&\;\leq\; \frac{1}{N}\norm{W^\varphi}_\infty \sqrt{\scalarb{\Psi}{\widehat{\mu}^2 \, q_1 \Psi}}
\\
&\;\leq\; \frac{1}{N} \norm{W^\varphi}_\infty \sqrt{\scalarb{\Psi}{\widehat{n}^{-2} \, q_1 
\Psi}}
\\
&\;\lesssim\; \frac{1}{N} \norm{\varphi}^2_{L^2 \cap L^\infty}\,,
\end{align*}
by \eqref{replacing q with m 1}.

\step{Term $\mathrm{(II)}$.} Using Lemma \ref{pulling projectors through} we find
\begin{align}
2 \abs{\mathrm{(II)}} &\;=\; \absb{\scalarb{\Psi}{q_1 p_2 \comb{(N-1) W_{12} - N 
W^\varphi_2}{\widehat{n}} q_1 q_2\Psi}}
\\
&\;=\; \absbb{\scalarbb{\Psi}{q_1 p_2 \pbb{\frac{N-1}{N} W_{12} - W^\varphi_2} \, \widehat{\mu} 
\, q_1 q_2\Psi}}
\\ \label{term II for beta}
&\;\leq\; \underbrace{\absb{\scalarb{\Psi}{q_1 p_2 W_{12} \,\widehat{\mu}\, q_1 
q_2\Psi}}}_{\eqd \; \mathrm{(a)}} + \underbrace{\absb{\scalarb{\Psi}{q_1 p_2 W^\varphi_2 
\,\widehat{\mu}\, q_1 q_2\Psi}}}_{\eqd \; \mathrm{(b)}}\,.
\end{align}
One immediately finds
\begin{equation*}
\mathrm{(b)} \;\leq\;\norm{W^\varphi}_\infty \, \norm{q_1 \Psi} 
\sqrt{\scalar{\Psi}{\widehat{\mu}^2 q_1 q_2 \Psi}} \;\lesssim\; \norm{\varphi}^2_{L^2 \cap 
L^\infty} \beta\,.
\end{equation*}
In (a) we split
\begin{equation*}
w \;=\; w^{(p)} + w^{(\infty)}\,, \qquad w^{(p)} \in L^p \,,\; w^{(\infty)} \in L^\infty\,,
\end{equation*}
with a resulting splitting $\mathrm{(a)} \leq \mathrm{(a)}^{(p)} + \mathrm{(a)}^{(\infty)}$. 
The easy part is
\begin{align*}
\mathrm{(a)}^{(\infty)} \;\leq\; \norm{w^{(\infty)}}_\infty \, \norm{q_1 \Psi}^2 \;\lesssim\; 
\beta\,.
\end{align*}
In order to deal with $\mathrm{(a)}^{(p)}$ we write $w^{(p)} = \nabla \cdot \xi$ as the 
divergence of a vector field $\xi$, exactly as in the proof of Lemma \ref{lemma: energy 
estimate}; see \eqref{w as divergence of xi} and the remarks after it. We integrate by parts to 
find
\begin{align}
\mathrm{(a)}^{(p)} &\;=\; \absb{\scalarb{\Psi}{q_1 p_2 (\nabla_1^\rho X^\rho)_{12} 
\,\widehat{\mu}\, q_1 q_2\Psi}}
\notag \\ \label{a^p split in two}
&\;\leq\; \absb{\scalarb{\nabla_1^\rho q_1 p_2 \Psi}{X_{12}^\rho \, \widehat{\mu} \, q_1 q_2 
\Psi}} + \absb{\scalarb{q_1 p_2 \Psi}{X_{12}^\rho \nabla_1^\rho \, \widehat{\mu} \, q_1 q_2 
\Psi}}\,.
\end{align}
The first term of \eqref{a^p split in two} is equal to
\begin{align*}
\absb{\scalarb{X_{12}^\rho p_2 \nabla_1^\rho q_1 \Psi}{\widehat{\mu} \, q_1 q_2 \Psi}} 
&\;\leq\; \sqrt{\scalarb{\nabla_1^\rho q_1 \Psi}{p_2 X_{12}^\rho X_{12}^\sigma p_2 
\nabla_1^\sigma q_1 \Psi}} \, \sqrt{\scalarb{\Psi}{\widehat{\mu}^2\, q_1 q_2 \Psi}}
\\
&\;\lesssim\; \sqrt{\norm{\xi^2 * \abs{\varphi}^2}_\infty} \, \norm{\nabla_1 q_1 \Psi} \, 
\sqrt{\scalarb{\Psi}{\widehat{n}^{-2}\, q_1 q_2 \Psi}}
\\
&\;\leq\; \sqrt{\norm{\xi^2 * \abs{\varphi}^2}_\infty} \, \norm{\nabla_1 q_1 \Psi} \, 
\sqrt{\frac{N}{N-1}\scalarb{\Psi}{\widehat{n}^2 \Psi}}
\\
&\;\lesssim\; \norm{\xi}_q \, \norm{\varphi}_{L^2 \cap L^\infty} \, \norm{\nabla_1 q_1 \Psi} \, 
\sqrt{\beta}
\\
&\;\lesssim\; \norm{\nabla_1 q_1 \Psi}^2\, \norm{\varphi}_{L^2 \cap L^\infty} + \beta\, 
\norm{\varphi}_{L^2 \cap L^\infty}\,,
\end{align*}
where in the second step we used \eqref{derivative of n}, in the third Lemma \ref{replacing q's 
with n's}, and in the last \eqref{alpha < beta}, Young's inequality, and \eqref{bound on xi}.
The second term of \eqref{a^p split in two} is equal to
\begin{multline} \absb{\scalarb{q_1 p_2 \Psi}{X_{12}^\rho (p_1 + q_1) \nabla_1^\rho \, 
\widehat{\mu} \, q_1 q_2 \Psi}}
\\ \label{second term of a^p split in two}
\;\leq\;
\absb{\scalarb{q_1 p_2 \Psi}{X_{12}^\rho p_1 \, \widehat{\tau_1 \mu} \, \nabla_1^\rho q_1 q_2 
\Psi}} + \absb{\scalarb{q_1 p_2 \Psi}{X_{12}^\rho q_1 \, \widehat{\mu} \, \nabla_1^\rho  q_1 
q_2 \Psi}}\,,
\end{multline}
where we used Lemma \ref{pulling projectors through}. We estimate the first term of 
\eqref{second term of a^p split in two}. The second term is dealt with in exactly the same way. 
We find
\begin{align*}
\absb{\scalarb{p_1 X_{12}^\rho q_1 p_2 \Psi}{\widehat{\tau_1 \mu} \, \nabla_1^\rho q_1 q_2 
\Psi}} &\;\leq\; \sqrt{\scalarb{\Psi}{q_1 p_2 X_{12}^2 p_2 q_1 \Psi}} \, 
\sqrt{\scalarb{\nabla_1 q_1 \Psi}{q_2 \, \widehat{\tau_1 \mu}^2 \, q_2 \nabla_1 q_1 \Psi}}
\\
&\;\leq\; \sqrt{\norm{\xi^2 * \abs{\varphi}^2}_\infty} \, \norm{q_1 \Psi} \, 
\sqrt{\scalarb{\nabla_1 q_1 \Psi}{\widehat{n}^{-2} \, q_2 \nabla_1 q_1 \Psi}}
\\
&\;\lesssim \norm{\xi}_q \,\norm{\varphi}_{L^2 \cap L^\infty}\, \sqrt{\alpha} \, 
\sqrt{\frac{1}{N-1} \sum_{i = 2}^N \scalarb{\nabla_1 q_1 \Psi}{\widehat{n}^{-2} \, q_i \nabla_1 
q_1 \Psi}}
\\
&\;\lesssim\; \norm{\varphi}_{L^2 \cap L^\infty} \sqrt{\beta} \sqrt{\frac{1}{N-1} \sum_{i = 
1}^N \scalarb{\nabla_1 q_1 \Psi}{\widehat{n}^{-2} \, q_i \nabla_1 q_1 \Psi}}
\\
&\;=\; \norm{\varphi}_{L^2 \cap L^\infty} \sqrt{\beta} \sqrt{\frac{N}{N-1} \scalarb{\nabla_1 
q_1 \Psi}{\widehat{n}^{-2} \, \widehat{n}^2 \, \nabla_1 q_1 \Psi}}
\\
&\;\lesssim\; \norm{\varphi}_{L^2 \cap L^\infty} \sqrt{\beta} \, \norm{\nabla_1 q_1 \Psi}
\\
&\;\leq\; \beta \, \norm{\varphi}_{L^2 \cap L^\infty} + \norm{\nabla_1 q_1 \Psi}^2 \, 
\norm{\varphi}_{L^2 \cap L^\infty}\,.
\end{align*}
In summary, we have proven that
\begin{equation*}
\absb{\mathrm{(II)}} \;\lesssim\; \beta \,\norm{\varphi}_{L^2 \cap L^\infty}+ \norm{\nabla_1 
q_1 \Psi}^2\,\norm{\varphi}_{L^2 \cap L^\infty}\,.
\end{equation*}

\step{Term $\mathrm{(III)}$.} Using Lemma \ref{pulling projectors through} we find
\begin{equation*}
2 \abs{\mathrm{(III)}} \;=\; (N-1) \absb{ \scalarb{\Psi}{p_1 p_2 \comb{W_{12} }{\widehat{n}} 
q_1 q_2\Psi}} \;=\; (N-1) \absb{ \scalarb{\Psi}{p_1 p_2 W_{12} \pb{\widehat{n} - 
\widehat{\tau_{-2} n}} q_1 q_2\Psi}} \,.
\end{equation*}
Defining
\begin{equation}
\nu(k) \;\deq\; N\pb{n(k) - (\tau_{-2}n)(k)} \;=\; \frac{\sqrt{N}}{\sqrt{k} + \sqrt{k - 2}} 
\;\leq\; n^{-1}(k)\,, \qquad k = 2, \dots, N\,,
\end{equation}
we have
\begin{equation*}
2 \absb{\mathrm{(III)}} \;\leq\; \absb{ \scalarb{\Psi}{p_1 p_2 W_{12} \, \widehat{\nu}\, q_1 
q_2\Psi}}
\end{equation*}
As usual we start by splitting \begin{equation*}
w \;=\; w^{(p)} + w^{(\infty)}\,, \qquad w^{(p)} \in L^p \,,\; w^{(\infty)} \in L^\infty\,,
\end{equation*}
with the induced splitting $\mathrm{(III)} = \mathrm{(III)}^{(p)} + \mathrm{(III)}^{(\infty)}$. 
Thus, using Lemma \ref{pulling projectors through}, we find
\begin{align*}
2 \absb{\mathrm{(III)}^{(\infty)}} &\;=\; \absb{ \scalarb{\Psi}{p_1 p_2 W^{(\infty)}_{12} \, 
\widehat{n}^{1/2} \, \widehat{n}^{-1/2} \, \widehat{\nu}\, q_1 q_2\Psi}}
\\
&\;=\; \absb{ \scalarb{\Psi}{p_1 p_2 \, \widehat{\tau_2 n}^{1/2} \, W^{(\infty)}_{12} \, 
\widehat{n}^{-1/2} \, \widehat{\nu}\, q_1 q_2\Psi}}
\\
&\;\leq\; \norm{w^{(\infty)}}_\infty \sqrt{\scalarb{\Psi}{\widehat{\tau_2 n} \, \Psi}} \, 
\sqrt{\scalarb{\Psi}{\widehat{n}^{-1} \, \widehat{\nu}^2 \, q_1 q_2 \Psi}}
\\
&\;\lesssim \sqrt{\beta + \sqrt{\frac{2}{N}}} \, \sqrt{\scalarb{\Psi}{\widehat{n}^{-3} \, q_1 
q_2 \Psi}}
\\
&\;\leq\; \sqrt{\beta + \sqrt{\frac{2}{N}}} \, \sqrt{\frac{N}{N-1} \beta}
\\
&\;\lesssim \beta + \frac{1}{\sqrt{N}}\,,
\end{align*}
where in the fifth step we used Lemma \ref{replacing q's with n's}.

In order to estimate $\mathrm{(III)}^{(p)}$ we introduce a splitting of $w^{(p)}$ into 
``singular'' and ``regular'' parts,
\begin{equation} \label{splitting of w^p}
w^{(p)} \;=\; w^{(p,1)} + w^{(p,2)} \;\deq\; w^{(p)} \, \umat_{\{\abs{w^{(p)}} > a\}} + w^{(p)} 
\, \umat_{\{\abs{w^{(p)}} \leq a\}}\,,
\end{equation}
where $a$ is a positive ($N$-dependent) constant we choose later.
For future reference we record the estimates
\begin{subequations} \label{estimate on w^{(1,2)} for p_0}
\begin{align} \norm{w^{(p,1)}}_{p_0} &\;\leq\; a^{1-p/p_0} \, \norm{w^{(p)}}_p^{p/p_0}\,, 
\label{estimate on w^{(1)} for p_0} \\
\norm{w^{(p,2)}}_2 &\;\leq\; a^{1-p/2} \, \norm{w^{(p)}}_p^{p/2}\,. \label{estimate on w^{(2)} 
for p_0}
\end{align}
\end{subequations}
The proof of \eqref{estimate on w^{(1,2)} for p_0} is elementary; for instance \eqref{estimate 
on w^{(1)} for p_0} follows from
\begin{multline*}
\norm{w^{(p,1)}}_{p_0}^{p_0} \;=\; \int \dd x \; \absb{w^{(p)}}^p \, \absb{w^{(p)}}^{p_0 - p} 
\, \umat_{\{\abs{w^{(p)}} > a\}} \\
\leq\; a^{p_0-p} \int \dd x \; \absb{w^{(p)}}^p \, \umat_{\{\abs{w^{(p)}} > a\}} \;\leq\; 
a^{p_0-p} \int \dd x \; \absb{w^{(p)}}^p\,.
\end{multline*}

Let us start with $\mathrm{(III)}^{(p,1)}$. As in \eqref{w as divergence of xi}, we use the 
representation
\begin{equation*}
w^{(p,1)} \;=\; \nabla \cdot \xi\,.
\end{equation*}
Then \eqref{bound on xi} and \eqref{estimate on w^{(1)} for p_0} imply that
\begin{equation} \label{estimate on xi_2}
\norm{\xi}_2 \;\lesssim\; \norm{w^{(p,1)}}_{p_0} \;\lesssim\; a^{1 - p/p_0}\,.
\end{equation}
Integrating by parts, we find
\begin{align}
2 \absb{\mathrm{(III)}^{(p,1)}} &\;=\; \absb{\scalarb{\Psi}{p_1 p_2 W_{12}^{(p,1)} \, 
\widehat{\nu}\, q_1 q_2\Psi}}
\notag \\
&\;=\; \absb{\scalarb{\Psi}{p_1 p_2 (\nabla_1^\rho X_{12}^\rho) \, \widehat{\nu}\, q_1 
q_2\Psi}}
\notag \\ \label{(III)^{(p,1)} as two terms}
&\;\leq\; \absb{\scalarb{\nabla_1^\rho p_1 p_2 \Psi}{X_{12}^\rho \, \widehat{\nu} \, q_1 q_2 
\Psi}} + \absb{\scalarb{p_1 p_2 \Psi}{X_{12}^\rho \nabla_1^\rho \, \widehat{\nu} \, q_1 q_2 
\Psi}}\,.
\end{align}
Using $\norm{\nabla p} = \norm{\nabla \varphi}$ and Lemma \ref{replacing q's with n's} we find 
that the first term of \eqref{(III)^{(p,1)} as two terms} is bounded by
\begin{align*}
\sqrt{\scalarb{\nabla_1^\rho p_1 \Psi}{p_2 X_{12}^\rho X_{12}^\sigma p_2 \nabla_1^\sigma p_1 
\Psi}} \, \sqrt{\scalarb{\Psi}{\widehat{\nu}^2 q_1 q_2 \Psi}} &\;\lesssim\; \norm{\nabla p} \, 
\norm{\varphi}_\infty \norm{\xi}_2 \, \sqrt{\alpha}
\\
&\;\leq\; \norm{\nabla \varphi} \, \norm{\varphi}_\infty \, a^{1 - p/p_0} \, \sqrt{\beta}
\\
&\;\leq\; \norm{\nabla \varphi} \, \norm{\varphi}_\infty \pb{\beta + a^{2 - 2 p /p_0}}\,,
\end{align*}
where in the second step we used the estimate \eqref{estimate on xi_2}. Next, using Lemma 
\ref{pulling projectors through}, we find that the second term of \eqref{(III)^{(p,1)} as two 
terms} is equal to
\begin{multline*}
\absb{\scalarb{p_1 p_2 \Psi}{X_{12}^\rho (p_1 + q_1) \nabla_1^\rho \, \widehat{\nu} \, q_1 q_2 
\Psi}}
\\
\leq\;
\absb{\scalarb{p_1 p_2 \Psi}{X_{12}^\rho p_1 \, \widehat{\tau_1 \nu} \, \nabla_1^\rho q_1 q_2 
\Psi}} + \absb{\scalarb{p_1 p_2 \Psi}{X_{12}^\rho q_1 \, \widehat{\nu} \, \nabla_1^\rho q_1 q_2 
\Psi}}\,.
\end{multline*}
We estimate the first term (the second is dealt with in exactly the same way):
\begin{align*}
\absb{\scalarb{p_1 p_2 \Psi}{X_{12}^\rho p_1 \, \widehat{\tau_1 \nu} \, \nabla_1^\rho q_1 q_2 
\Psi}} &\;\leq\; \sqrt{\scalarb{\Psi}{p_1 p_2 X_{12}^2 p_1 p_2 \Psi}} \, 
\sqrt{\scalarb{\nabla_1 q_1 \Psi}{\widehat{\tau_1 \nu}^2 \, q_2 \nabla_1 q_1 \Psi}}
\\
&\;\leq\; \sqrt{\norm{p_2 X_{12}^2 p_2}} \sqrt{\frac{1}{N-1} \sum_{i = 2}^N \scalarb{\nabla_1 
q_1 \Psi}{\widehat{n}^{-2} \, q_i \nabla_1 q_1 \Psi}}
\\
&\;\leq\; \norm{\xi}_2 \, \norm{\varphi}_\infty \sqrt{\frac{1}{N-1} \sum_{i = 1}^N 
\scalarb{\nabla_1 q_1 \Psi}{\widehat{n}^{-2} \, q_i \nabla_1 q_1 \Psi}}
\\
&\;\lesssim\; a^{1 - p/p_0} \norm{\varphi}_\infty \sqrt{\frac{N}{N-1}\, \scalarb{\nabla_1 q_1 
\Psi}{\nabla_1 q_1 \Psi}}
\\
&\;\leq\; \norm{\varphi}_\infty \pb{a^{2 - 2 p/p_0} + \norm{\nabla_1 q_1 \Psi}^2}\,.
\end{align*}
Summarizing,
\begin{equation*}
\absb{\mathrm{(III)}^{(p,1)}} \;\lesssim\; \norm{\varphi}_\infty \pB{\beta \norm{\varphi}_{X_1} 
+ \norm{\nabla_1 q_1 \Psi}^2 + a^{2 - 2 p/p_0} \norm{\varphi}_{X_1}}\,.
\end{equation*}

Finally, we estimate \begin{equation}
\mathrm{(III)}^{(p,2)} \;=\; \absb{\scalarb{\Psi}{p_1 p_2 W_{12}^{(p,2)} \, \widehat{\nu}\, q_1 
q_2\Psi}} \;=\; \absb{\scalarb{\Psi}{p_1 p_2 W_{12}^{(p,2)} \, \widehat{\nu} \, 
(\widehat{\chi^{(1)}} + \widehat{\chi^{(2)}}) q_1 q_2\Psi}}\,,
\end{equation}
where
\begin{equation*}
1 \;=\; \chi^{(1)} + \chi^{(2)}\,, \qquad \chi^{(1)}, \chi^{(2)} \;\in\; \{0,1\}^{\{0, 
\dots,N\}}\,,
\end{equation*}
is some partition of the unity to be chosen later.  The need for this partitioning will soon 
become clear.
In order to bound the term with $\chi^{(1)}$, we note that the operator norm of $p_1 p_2 
W_{12}^{(p,2)} q_1 q_2$ on the full space $L^2(\R^{dN})$ is much larger than on its symmetric 
subspace. Thus, as a first step, we symmetrize the operator $p_1 p_2 W_{12}^{(p,2)} q_1 q_2$ in 
coordinate $2$. We get the bound
\begin{align*}
\absb{\scalarb{\Psi}{p_1 p_2 W_{12}^{(p,2)} \, \widehat{\nu} \, \widehat{\chi^{(1)}} \, q_1 
q_2\Psi}} &\;=\; \frac{1}{N-1} \absbb{\scalarbb{\Psi}{\sum_{i = 2}^N p_1 p_i W_{1i}^{(p,2)} \, 
q_i q_1 \, \widehat{\chi^{(1)}}\, \widehat{\nu} \, q_1 \Psi}}
\\
&\;\leq\; \frac{1}{N-1} \, \normb{\widehat{\nu} \, q_1 \Psi} \sqrt{\sum_{i,j = 2}^N 
\scalarb{\Psi}{p_1 p_i W^{(p,2)}_{1i} q_1 q_i \, \widehat{\chi^{(1)}} \, q_1 q_j W^{(p-2)}_{1j} 
p_j p_1 \Psi}}\,.
\end{align*}
Using \begin{equation*}
\normb{\widehat{\nu} \, q_1 \Psi} \;\leq\; \norm{\widehat{n}^{-1} q_1 \Psi} \;\leq\; 1
\end{equation*}
we find
\begin{equation} \label{splitting in A+B for singular potentials}
\absb{\scalarb{\Psi}{p_1 p_2 W_{12}^{(p,2)} \, \widehat{\nu} \, \widehat{\chi^{(1)}} \, q_1 
q_2\Psi}} \;\leq\; \frac{1}{N - 1} \sqrt{A + B}\,,
\end{equation}
where
\begin{align*}
A &\;\deq\; \sum_{2 \leq i \neq j \leq N} \scalarb{\Psi}{p_1 p_i W^{(p,2)}_{1i} q_1 q_i \, 
\widehat{\chi^{(1)}} \, q_j W^{(p,2)}_{1j} p_j p_1 \Psi}\,,
\\
B &\;\deq\; \sum_{i = 2}^N \scalarb{\Psi}{p_1 p_i W^{(p,2)}_{1i} q_1 q_i \, 
\widehat{\chi^{(1)}} \, W^{(p,2)}_{1i} p_i p_1 \Psi}\,.
\end{align*}

The easy part is
\begin{align*}
B &\;\leq\; \sum_{i = 2}^N \scalarb{\Psi}{p_1 p_i \pb{W^{(p,2)}_{1i}}^2 p_i p_1 \Psi} \\
&\;\leq\; \sum_{i = 2}^N \normb{\pb{w^{(p,2)}}^2 * \abs{\varphi}^2}_\infty \scalar{\Psi}{p_1 
p_i \Psi} \\
&\;\leq\; (N-1) \norm{\varphi}_\infty^2 \norm{w^{(p,2)}}_2^2
\\
&\;\lesssim\; N \, a^{2-p} \, \norm{\varphi}_{\infty}^2\,.
\end{align*}
Let us therefore concentrate on
\begin{align*}
A &\;=\; \sum_{2 \leq i \neq j \leq N} \scalarb{\Psi}{p_1 p_i W^{(p,2)}_{1i} q_1 q_i \, 
\widehat{\chi^{(1)}} \, \widehat{\chi^{(1)}}\, q_j W^{(p,2)}_{1j} p_j p_1 \Psi}
\\
&\;=\; \sum_{2 \leq i \neq j \leq N} \scalarb{\Psi}{p_1 p_i q_j \, \widehat{\tau_2 \chi^{(1)}} 
\, W^{(p,2)}_{1i} q_1 W^{(p,2)}_{1j} \, \widehat{\tau_2 \chi^{(1)}} \, q_i p_j p_1 \Psi}
\\
&\;=\; A_1 + A_2\,,
\end{align*}
with $A = A_1 + A_2$ arising from the splitting $q_1 = \umat - p_1$.  We start with
\begin{align*}
\abs{A_1} &\;\leq\; \sum_{2 \leq i \neq j \leq N} \absb{\scalarb{\Psi}{p_1 p_i q_j \, 
\widehat{\tau_2 \chi^{(1)}} \, W^{(p,2)}_{1i} W^{(p,2)}_{1j} \, \widehat{\tau_2 \chi^{(1)}} \, 
q_i p_j p_1 \Psi}}
\\
&\;=\; \sum_{2 \leq i \neq j \leq N} \absb{\scalarb{\Psi}{p_1 p_i q_j \, \widehat{\tau_2 
\chi^{(1)}} \, \sqrt{W^{(p,2)}_{1i}} \sqrt{W^{(p,2)}_{1j}}  \sqrt{W^{(p,2)}_{1i}} 
\sqrt{W^{(p,2)}_{1j}} \, \widehat{\tau_2 \chi^{(1)}} \, q_i p_j p_1 \Psi}}
\\
&\;\leq\; \sum_{2 \leq i \neq j \leq N} \scalarb{\Psi}{\widehat{\tau_2 \chi^{(1)}} \, q_j p_1 
p_i \absb{W_{1i}^{(p,2)}} \absb{W_{1j}^{(p,2)}} p_1 p_i q_j \, \widehat{\tau_2 \chi^{(1)}} \, 
\Psi}\,,
\end{align*}
by Cauchy-Schwarz and symmetry of $\Psi$. Here $\sqrt{\cdot}$ is any complex square root.

In order to estimate this we claim that, for $i \neq j$,
\begin{equation} \label{estimate of two W's}
\normB{p_1 p_i \absb{W^{(p,2)}_{1i}} \absb{W^{(p,2)}_{1j}} p_1 p_i} \;\leq\; 
\normb{\absb{w^{(p,2)}} * \abs{\varphi}^2}_\infty^2\,.
\end{equation}
Indeed, by \eqref{W sandwiched between p's}, we have
\begin{equation*}
p_1 p_i \absb{W^{(p,2)}_{1i}} \absb{W^{(p,2)}_{1j}} p_1 p_i \;=\; p_1 p_i \absb{W^{(p,2)}_{1i}} 
p_i \absb{W^{(p,2)}_{1j}} p_1 \;=\; p_1 p_i \pb{\absb{w^{(p,2)}} * \abs{\varphi}^2}_1 
\absb{W^{(p,2)}_{1j}} p_1\,.
\end{equation*}
The operator $p_1 \pb{\absb{w^{(p,2)}} * \abs{\varphi}^2}_1 \absb{W^{(p,2)}_{1j}} p_1$ is equal 
to $f_j \, p_1$, where
\begin{equation*}
f(x_j) \;=\; \int \dd x_1 \; \ol{\varphi(x_1)} \pb{\absb{w^{(p,2)}} * \abs{\varphi}^2}(x_1) 
\absb{w^{(p,2)}(x_1 - x_j)}  \varphi(x_1)\,.
\end{equation*}
Thus,
\begin{equation*}
\norm{f}_\infty \;\leq\; \normb{\absb{w^{(p,2)}} * \abs{\varphi}^2}_\infty^2\,,
\end{equation*}
from which \eqref{estimate of two W's} follows immediately.

Using \eqref{estimate of two W's}, we get
\begin{align*}
\abs{A_1} &\;\leq\; \sum_{2 \leq i \neq j \leq N} \normb{\absb{w^{(p,2)}} * 
\abs{\varphi}^2}_\infty^2 \normb{\widehat{\tau_2 \chi^{(1)}} q_1 \Psi}^2 \\
&\;\leq\; N^2 \norm{w^{(p)}}_p^2 \, \norm{\varphi}_{L^2 \cap L^\infty}^4 \, 
\scalarb{\Psi}{\widehat{\tau_2 \chi^{(1)}} \, q_1 \Psi}
\\
&\;\lesssim\; N^2 \, \norm{\varphi}_{L^2 \cap L^\infty}^4\, \scalarb{\Psi}{\widehat{\tau_2 
\chi^{(1)}} \, \widehat{n}^2\, \Psi}\,.
\end{align*}
Now let us choose
\begin{equation}
\chi^{(1)}(k) \;\deq\; \umat_{\{k \leq N^{1-\delta}\}}
\end{equation}
for some $\delta \in (0,1)$. Then
\begin{equation*}
(\tau_2 \chi^{(1)}) \, n^2 \;\leq\; N^{-\delta} \end{equation*}
implies
\begin{equation*}
\abs{A_1} \;\lesssim\; \norm{\varphi}_{L^2 \cap L^\infty}^4 \, N^{2 - \delta}\,.
\end{equation*}
Similarly, we find
\begin{align*}
\abs{A_2} &\;\leq\; \sum_{2 \leq i \neq j \leq N} \absb{\scalarb{\Psi}{ q_j \, \widehat{\tau_2 
\chi^{(1)}} \, p_i p_1 W^{(p,2)}_{1i} p_1 W^{(p,2)}_{1j} p_1 p_j \, \widehat{\tau_2 \chi^{(1)}} 
\, q_i \Psi}} \\
&\;\leq\; \sum_{2 \leq i \neq j \leq N} \normb{w^{(p,2)} * \abs{\varphi}^2}_\infty^2 
\scalar{\Psi}{\widehat{\tau_2 \chi^{(1)}} \, q_1 \Psi}
\\
&\;\lesssim\; N^2 \norm{\varphi}_{L^2 \cap L^\infty}^4 N^{-\delta}
\\
&\;=\; \norm{\varphi}_{L^2 \cap L^\infty}^4 N^{2-\delta}\,.
\end{align*}
Thus we have proven
\begin{equation*}
\abs{A} \;\lesssim\; \norm{\varphi}_{L^2 \cap L^\infty}^4 N^{2-\delta}\,.
\end{equation*}
Going back to \eqref{splitting in A+B for singular potentials}, we see that
\begin{equation*}
\absb{\scalarb{\Psi}{p_1 p_2 W_{12}^{(p,2)} \, \widehat{\nu} \, \widehat{\chi^{(1)}} \, q_1 
q_2\Psi}} \;\lesssim\;
\norm{\varphi}_{L^2 \cap L^\infty}^2 N^{-\delta/2} + \norm{\varphi}_{\infty} N^{-1/2}\, a^{1 - 
p/2}\,.
\end{equation*}

What remains is to estimate is the term of $\mathrm{(III)}^{(p,2)}$ containing $\chi^{(2)}$,
\begin{multline*}
\absb{\scalarb{\Psi}{p_1 p_2 W_{12}^{(p,2)} \, \widehat{\nu} \, \widehat{\chi^{(2)}} \, q_1 
q_2\Psi}} \;=\; \frac{1}{N-1} \absbb{\scalarbb{\Psi}{\sum_{i = 2}^N p_1 p_i W_{1i}^{(p,2)} \, 
q_i q_1 \, \widehat{\chi^{(2)}}\, \widehat{\nu}^{1/2} \, \widehat{\nu}^{1/2} \, q_1 \Psi}}
\\
\leq\; \frac{1}{N-1} \, \normb{\widehat{\nu}^{1/2} \, q_1 \Psi} \sqrt{\sum_{i,j = 2}^N 
\scalarb{\Psi}{p_1 p_i W^{(p,2)}_{1i} q_1 q_i \, \widehat{\chi^{(2)}} \, \widehat{\nu} \, q_1 
q_j W^{(p-2)}_{1j} p_j p_1 \Psi}}\,.
\end{multline*}
Using \begin{equation*}
\normb{\widehat{\nu}^{1/2} \, q_1 \Psi} \;\leq\; \sqrt{\scalar{\Psi}{\widehat{n}^{-1} \, 
\widehat{n}^2 \, \Psi}} \;=\; \sqrt{\beta}
\end{equation*}
we find
\begin{equation} \label{splitting in A+B for singular potentials for chi_2}
\absb{\scalarb{\Psi}{p_1 p_2 W_{12}^{(p,2)} \, \widehat{\nu} \, \widehat{\chi^{(2)}} \, q_1 
q_2\Psi}} \;\leq\; \frac{\sqrt{\beta}}{N - 1} \sqrt{A + B}\,,
\end{equation}
where
\begin{align*}
A &\;\deq\; \sum_{2 \leq i \neq j \leq N} \scalarb{\Psi}{p_1 p_i W^{(p,2)}_{1i} q_1 q_i \, 
\widehat{\chi^{(2)}} \, \widehat{\nu} \, q_j W^{(p,2)}_{1j} p_j p_1 \Psi}\,,
\\
B &\;\deq\; \sum_{i = 2}^N \scalarb{\Psi}{p_1 p_i W^{(p,2)}_{1i} q_1 q_i \, 
\widehat{\chi^{(2)}} \, \widehat{\nu} \, W^{(p,2)}_{1i} p_i p_1 \Psi}\,.
\end{align*}
Since
\begin{equation*}
\chi^{(2)}(k) \;=\; \umat_{\{k > N^{1-\delta}\}}
\end{equation*}
we find
\begin{equation*}
\chi^{(2)} \, \nu \;\leq\; \chi^{(2)} \, n^{-1}\;\leq\; N^{\delta/2}\,.
\end{equation*}
Thus, $\norm{q_1 q_i \, \widehat{\chi^{(2)}} \, \widehat{\nu}} \;\leq\; N^{\delta/2}$ and we 
get
\begin{multline*}
B \;\leq\; N^{\delta/2}\sum_{i = 2}^N \scalarb{\Psi}{p_1 p_i \pb{W^{(p,2)}_{1i}}^2 p_i p_1 
\Psi} \;\leq\; N^{1 + \delta/2} \, \normb{\pb{w^{(p,2)}}^2 * \abs{\varphi}^2}_\infty
\\
\;\leq\; N^{1 + \delta/2} \, \norm{w^{(p,2)}}_2^2 \, \norm{\varphi}_\infty^2 \;\lesssim\; N^{1 
+ \delta/2} \, a^{2 - p}\, \norm{\varphi}_\infty^2\,,
\end{multline*}
by \eqref{estimate on w^{(2)} for p_0}.

Next, using Lemma \ref{pulling projectors through}, we find
\begin{align*}
A &\;=\; \sum_{2 \leq i \neq j \leq N} \scalarb{\Psi}{p_1 p_i q_j W^{(p,2)}_{1i} \, 
\widehat{\chi^{(2)}} \, \widehat{\nu}^{1/2} \, q_1 \, \widehat{\chi^{(2)}} \, 
\widehat{\nu}^{1/2} \, W^{(p,2)}_{1j} q_i p_j p_1 \Psi}
\\
&\;=\; \sum_{2 \leq i \neq j \leq N} \scalarb{\Psi}{p_1 p_i q_j \, \widehat{\tau_2 \chi^{(2)}} 
\, \widehat{\tau_2 \nu}^{1/2} \, W^{(p,2)}_{1i} q_1 W^{(p,2)}_{1j} \, \widehat{\tau_2 
\chi^{(2)}} \, \widehat{\tau_2 \nu}^{1/2} \, q_i p_j p_1 \Psi}
\\
&\;=\; A_1 + A_2\,,
\end{align*}
where, as above, the splitting $A = A_1 + A_2$ arises from writing $q_1 = \umat - p_1$.
Thus,
\begin{align*}
\abs{A_1} &\;\leq\; \sum_{2 \leq i \neq j \leq N} \absb{\scalarb{\Psi}{p_1 p_i q_j \, 
\widehat{\tau_2 \chi^{(2)}} \, \widehat{\tau_2 \nu}^{1/2} \, W^{(p,2)}_{1i} W^{(p,2)}_{1j} \, 
\widehat{\tau_2 \chi^{(2)}} \, \widehat{\tau_2 \nu}^{1/2} \, q_i p_j p_1 \Psi}}
\\
&\;=\; \sum_{2 \leq i \neq j \leq N} \absb{\scalarb{\Psi}{p_1 p_i q_j \, \widehat{\tau_2 
\chi^{(2)}} \, \widehat{\tau_2 \nu}^{1/2} \, \sqrt{W^{(p,2)}_{1i}} \sqrt{W^{(p,2)}_{1j}} 
\sqrt{W^{(p,2)}_{1i}} \sqrt{W^{(p,2)}_{1j}} \, \widehat{\tau_2 \chi^{(2)}} \, \widehat{\tau_2 
\nu}^{1/2} \, q_i p_j p_1 \Psi}}
\\
&\;\leq\; \sum_{2 \leq i \neq j \leq N} \scalarb{\Psi}{q_j \, \widehat{\tau_2 \chi^{(2)}} \, 
\widehat{\tau_2 \nu}^{1/2} \, p_1 p_i \absb{W^{(p,2)}_{1i}} \absb{W^{(p,2)}_{1j}} p_i p_1 \, 
\widehat{\tau_2 \chi^{(2)}} \, \widehat{\tau_2 \nu}^{1/2} \, q_j  \Psi}\,,
\end{align*}
by Cauchy-Schwarz and symmetry of $\Psi$. Using \eqref{estimate of two W's} we get
\begin{align*}
\abs{A_1} &\;\leq\; N^2 \, \normb{\absb{w^{(p,2)}} * \abs{\varphi}^2}_\infty^2 \, 
\scalarb{\Psi}{\widehat{\tau_2 \nu} \, q_1 \Psi}
\\
&\;\leq\; N^2 \norm{w^{(p,2)}}_p^2 \, \norm{\varphi}_{L^2 \cap L^\infty}^4 \, 
\scalar{\Psi}{\widehat{n} \, \Psi}
\\
&\;\lesssim\; N^2 \, \norm{\varphi}_{L^2 \cap L^\infty}^4\beta\,.
\end{align*}
Similarly,
\begin{align*}
\abs{A_2} &\;\leq\; \sum_{2 \leq i \neq j \leq N} \absb{\scalarb{\Psi}{p_i q_j \, 
\widehat{\tau_2 \chi^{(2)}} \, \widehat{\tau_2 \nu}^{1/2} \,p_1 W^{(p,2)}_{1i} p_1 
W^{(p,2)}_{1j} p_1 \, \widehat{\tau_2 \chi^{(2)}} \, \widehat{\tau_2 \nu}^{1/2} \, q_i p_j 
\Psi}}
\\
&\;\leq\; \sum_{2 \leq i \neq j \leq N} \normb{w^{(p,2)} * \abs{\varphi}^2}_\infty^2 
\scalarb{\Psi}{\widehat{\tau_2 \nu} \, q_1 \Psi}
\\
&\;\leq\; N^2 \norm{w^{(p)}}_p^2 \, \norm{\varphi}_{L^2 \cap L^\infty}^4 \, 
\scalar{\Psi}{\widehat{n} \, \Psi}
\\
&\;\lesssim\; N^2 \, \norm{\varphi}_{L^2 \cap L^\infty}^4 \, \beta\,.
\end{align*}
Plugging all this back into \eqref{splitting in A+B for singular potentials for chi_2}, we find 
that
\begin{equation*}
\absb{\scalarb{\Psi}{p_1 p_2 W_{12}^{(p,2)} \, \widehat{\nu} \, \widehat{\chi^{(2)}} \, q_1 
q_2\Psi}} \;\lesssim\;
\beta \pb{\norm{\varphi}_{L^2 \cap L^\infty}^2 + \norm{\varphi}_\infty} + \norm{\varphi}_\infty 
a^{2 - p} N^{\delta/2 - 1}
\end{equation*}

Summarizing:
\begin{equation*}
\absb{\mathrm{(III)}^{(p,2)}} \;\lesssim\; \pb{1 + \norm{\varphi}_{L^2 \cap L^\infty}^2} 
\pB{\beta + a^{2 - p} \, N^{\delta/2 - 1} + N^{-\delta / 2} + N^{-1/2} a^{1 - p/2}}\,,
\end{equation*}
from which we deduce
\begin{multline*}
\absb{\mathrm{(III)}^{(p)}} \;\lesssim\; \norm{\varphi}_\infty \norm{\nabla_1 q_1 \Psi}^2 \\
{}+{} \pb{1 + \norm{\varphi}_{X_1 \cap L^\infty}} \pB{\beta + a^{2 - p} \, N^{\delta/2 - 1} + 
N^{-\delta / 2} + N^{-1/2} a^{1 - p/2} + a^{2 - 2 p / p_0}}\,.
\end{multline*}
Let us set $a \equiv a_N = N^\zeta$ and optimize in $\delta$ and $\zeta$. This yields the 
relations
\begin{equation*}
\zeta (2 - p) + \delta \;=\; 1 \,,\qquad -\frac{\delta}{2} \;=\; 2 \zeta \pbb{1 - 
\frac{p}{p_0}}\,,
\end{equation*}
which imply
\begin{equation*}
\frac{\delta}{2} \;=\; \frac{p/p_0 - 1}{2 p/p_0 - p/2 - 1}\,,
\end{equation*}
with $\delta \leq 1$.
Thus,
\begin{equation*}
\absb{\mathrm{(III)}^{(p)}} \;\lesssim\; \norm{\varphi}_\infty \norm{\nabla_1 q_1 \Psi}^2 + 
\pb{1 + \norm{\varphi}_{X_1 \cap L^\infty}} \pB{\beta + N^{-\eta}}\,,
\end{equation*}
where $\eta = \delta/2$ satisfies \eqref{definition of eta}.

\step{Conclusion of the proof.} We have shown that
\begin{equation*}
\dot{\beta} \;\lesssim\; \norm{\varphi}_{L^2 \cap L^\infty} \norm{\nabla_1 q_1 \Psi}^2 + \pb{1 
+ \norm{\varphi}_{X_1 \cap L^\infty}} \pb{\beta + N^{-\eta}}\,.
\end{equation*}
Using Lemma \ref{lemma: energy estimate} we find
\begin{equation} \label{final bound on beta}
\dot{\beta} \;\lesssim\; \pB{1 + \norm{\varphi}^3_{X_1^2 \cap L^\infty}} \pbb{\beta + E^\Psi - 
E^\varphi + \frac{1}{N^\eta}}\,.
\end{equation}
The claim then follows from the Gr\"onwall estimate \eqref{Gronwall}.

\subsection{A remark on time-dependent external potentials} \label{section: remark on 
time-dependent potentials}
Theorem \ref{theorem for singular potentials} can be extended to time-dependent external 
potentials $h(t)$ without too much sweat. The only complication is that energy is no longer 
conserved. We overcome this problem by observing that, while the energies $E^\Psi(t)$ and 
$E^\varphi(t)$ exhibit large variations in $t$, their difference remains small. In the 
following we estimate the quantity $E^\Psi(t) - E^\varphi(t)$ by controlling its time 
derivative.

We need the following assumptions, which replace Assumptions (B1) -- (B3).

\begin{itemize}
\item[(B1')]
The Hamiltonian $h(t)$ is self-adjoint and bounded from below. We assume that there is an 
operator $h_0 \geq 0$ that  such that $0 \leq h(t) \leq h_0$ for all $t$. We define the Hilbert 
space $X_N \;=\; \mathcal{Q}\pb{\sum_i (h_0)_i}$ as in (A1), and the space $X_1^2 = \cal 
Q(h_0^2)$ as in (B5) using $h_0$. We also assume that there are time-independent constants 
$\kappa_1, \kappa_2 > 0$ such that
\begin{equation*}
-\Delta \;\leq\; \kappa_1\, h(t) + \kappa_2
\end{equation*}
for all $t$.

We make the following assumptions on the differentiability of $h(t)$. The map $t \mapsto 
\scalar{\psi}{h(t) \psi}$ is continuously differentiable for all $\psi \in X_1$, with 
derivative $\scalar{\psi}{\dot{h}(t) \psi}$ for some self-adjoint operator $\dot{h}(t)$. 
Moreover, we assume that the quantities
\begin{equation*}
\scalar{\varphi(t)}{\dot{h}(t)^2 \varphi(t)}\,, \qquad \normb{(\umat + h(t))^{-1/2} \, 
\dot{h}(t) \, (\umat + h(t))^{-1/2}}
\end{equation*}
are continuous and finite for all $t$.
\item[(B2')]
The Hamiltonian $H_N(t)$ is self-adjoint and bounded from below.
We assume that $\mathcal{Q}(H_N(t)) \subset X_N$ for all $t$.
We also assume that the $N$-body propagator $U_N(t,s)$, defined by
\begin{equation*}
\ii \partial_t U_N(t,s) = H_N(t) U_N(t,s) \,, \qquad U_N(s,s) = \umat\,,
\end{equation*}
exists and satisfies $U_N(t,0)\Psi_{N,0} \in \mathcal{Q}(H_N(t))$ for all $t$.
\item[(B3')]
There is a time-independent constant $\kappa_3 \in (0,1)$ such that
\begin{equation*}
0 \;\leq\; (1 - \kappa_3) (h_1(t) + h_2(t)) + W_{12}
\end{equation*}
for all $t$.
\end{itemize}

\begin{theorem}
Assume that Assumptions (B1') -- (B3'), (B4), and (B5) hold. Then there is a continuous 
nonnegative function $\phi$, independent of $N$ and $\Psi_{N,0}$, such that
\begin{equation*}
\beta_N(t) \;\leq\; \phi(t) \pbb{\beta_N(0) + E^\Psi_N(0) - E^\varphi(0) + \frac{1}{N^\eta}}\,,
\end{equation*}
with $\eta$ defined in \eqref{definition of eta}.
\end{theorem}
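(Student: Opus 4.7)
The plan is to adapt the proof of Theorem~\ref{theorem for singular potentials}, replacing conservation of the energy difference $D(t) \deq E^\Psi_N(t) - E^\varphi(t)$, which is unavailable in the time-dependent case, by a quantitative estimate on its growth. I first observe that the derivation of the differential inequality \eqref{final bound on beta} for $\dot{\beta}_N$ and the statement of Lemma~\ref{lemma: energy estimate} carry through verbatim when $h$ is replaced by $h(t)$. Both proofs rely only on the Hartree equation (valid for $h(t)$), pointwise-in-$t$ versions of (B1')--(B3'), and symmetry manipulations independent of the form of $h$. Accordingly, at each $t$,
\[
\dot{\beta}_N \;\lesssim\; \pb{1 + \norm{\varphi}^3_{X_1^2 \cap L^\infty}}\pbb{\beta_N + D + \frac{1}{N^\eta}}\,,\qquad \norm{\nabla_1 q_1 \Psi_N}^2 \;\lesssim\; D + \pb{1 + \norm{\varphi}^2_{X_1^2\cap L^\infty}}\pbb{\beta_N + \frac{1}{\sqrt N}}\,.
\]

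Next, I compute $\dot{D}$. Since $w$ is time-independent, the $w$-dependent contributions to $\dot{E}^\Psi_N$ and $\dot{E}^\varphi$ cancel via unitarity of the Schr\"odinger and Hartree flows, leaving $\dot{D}(t) = \scalar{\Psi_N}{\dot{h}_1 \Psi_N} - \scalar{\varphi}{\dot{h}\varphi}$. Decomposing $\umat = p_1 + q_1$ on both sides of $\dot{h}_1$ in the first term and using $p_1 \dot{h}_1 p_1 = \scalar{\varphi}{\dot{h}\varphi}\, p_1$, this rewrites as
\[
\dot{D} \;=\; -\alpha_N \scalar{\varphi}{\dot{h}\varphi} + 2\,\Re\scalar{p_1\Psi_N}{\dot{h}_1 q_1\Psi_N} + \scalar{q_1\Psi_N}{\dot{h}_1 q_1\Psi_N}\,.
\]

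To estimate the three terms I would factor $\dot{h} = (\umat+h)^{1/2}\widetilde{h}(\umat+h)^{1/2}$ with $\widetilde{h}$ bounded by (B1'). The first and third terms are then $O(\beta_N)$ and $O(\beta_N + \norm{\sqrt{h_1}\, q_1\Psi_N}^2)$ respectively, while the cross term is controlled by Cauchy--Schwarz with a $(\umat + h_1)^{\pm 1/2}$ insertion,
\[
|\scalar{p_1\Psi_N}{\dot{h}_1 q_1\Psi_N}| \;\leq\; \norm{(\umat+h)^{-1/2}\dot{h}\,\varphi}\, \norm{(\umat+h_1)^{1/2} q_1 \Psi_N}\,,
\]
both factors being finite by (B1') and by the energy estimate from Step~1. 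Substituting and using $\eta \leq 1/2$ throughout the admissible range of $p$, so that $N^{-1/2}$ is dominated by $N^{-\eta}$, I obtain
\[
|\dot{D}(t)| \;\lesssim\; C(t)\pbb{\sqrt{\beta_N + |D| + N^{-\eta}} + \beta_N + |D| + \frac{1}{N^\eta}}
\]
for a continuous $C(t)$. Combining with the inequality for $\dot{\beta}_N$, the quantity $F(t) \deq \beta_N(t) + |D(t)| + N^{-\eta}$ satisfies a Bernoulli-type estimate $\dot{F} \leq A(t) F + B(t)\sqrt{F}$, linearised by the substitution $G \deq \sqrt{F}$ to $\dot{G} \leq (A/2)G + B/2$; integrating and squaring then yields the claimed bound with $\phi$ independent of $N$ and $\Psi_{N,0}$.

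The principal obstacle is the square-root dependence of $|\dot{D}|$ on $F$, ultimately reflecting the observation (Lemma~\ref{lemma: estimates between R and E}) that $R_N^{(1)}$ is only of order $\sqrt{E_N^{(1)}}$ in general and cannot be strengthened to $E_N^{(1)}$. Its presence forces a Bernoulli rather than linear Gr\"onwall argument in the final step, and some care is needed to verify that the resulting estimate retains the linear-in-initial-data structure demanded by the theorem.
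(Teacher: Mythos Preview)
Your overall strategy is right, and the observation that Lemma~\ref{lemma: energy estimate} and the estimate \eqref{final bound on beta} carry over verbatim under (B1')--(B3') is exactly what the paper uses. The gap is in your treatment of the cross term $2\,\Re\scalar{p_1\Psi_N}{\dot{h}_1 q_1\Psi_N}$. Your Cauchy--Schwarz with a $(\umat+h)^{\pm 1/2}$ insertion gives one factor that is $O(1)$ and one that is $O(\sqrt{F})$, so the bound on $\dot D$ contains a genuine $B(t)\sqrt{F}$ term with $B$ independent of $N$. The Bernoulli substitution $G=\sqrt F$ then yields
\[
G(t) \;\leq\; G(0)\,\ee^{\int_0^t A/2} + \tfrac12\int_0^t B(s)\,\ee^{\int_s^t A/2}\,\dd s\,,
\]
and after squaring the second summand contributes an $N$-independent term that does not vanish with the initial data. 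In particular, for $\Psi_{N,0}=\varphi_0^{\otimes N}$ one has $F(0)\sim N^{-\eta}\to 0$, yet your bound gives $F(t)\geq c(t)>0$. So the ``care'' you flag at the end cannot be supplied: the square-root term destroys the linear-in-initial-data structure the theorem asserts.

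The paper avoids this by estimating the cross term with the $\widehat n$-weights of Lemmas~\ref{replacing q's with n's} and~\ref{pulling projectors through} rather than with $(\umat+h)^{\pm1/2}$. Writing $q_1 = \widehat{n}^{1/2}\,\widehat{n}^{-1/2}q_1$ and pulling the weight through $p_1$ gives
\[
\absb{\scalar{\Psi}{p_1 \dot{h}_1 q_1 \Psi}} \;\leq\; \sqrt{\scalar{\varphi}{\dot{h}^2\varphi}}\,\sqrt{\beta+N^{-1/2}}\,\sqrt{\beta} \;\lesssim\; \sqrt{\scalar{\varphi}{\dot{h}^2\varphi}}\,\pb{\beta+N^{-1/2}}\,,
\]
which is \emph{linear} in $\beta+N^{-1/2}$; this is precisely why (B1') includes the hypothesis that $\scalar{\varphi}{\dot h^2\varphi}$ is finite, an assumption your argument never invokes. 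With this linear bound one obtains $G(t)\leq C(t)\pb{\beta+\cal E+N^{-1/2}}$, and a standard (linear) Gr\"onwall applied to the sum of the integral inequalities for $\beta$ and $\cal E$ gives the claim directly. So the fix is not a different integration scheme but a sharper estimate on the cross term, using the same projector calculus already employed throughout the paper.
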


\begin{proof}
We start by deriving an upper bound on the energy difference $\cal E(t) \deq E^\Psi(t) - 
E^\varphi(t)$. Assumptions (B1') and (B2') and the fundamental theorem of calculus imply
\begin{equation*}
\cal E(t) \;=\; \cal E(0) + \int_0^t \dd s \, \pB{\underbrace{\scalar{\Psi(s)}{\dot{h}_1(s) 
\Psi(s)} - \scalar{\varphi(s)}{\dot{h}(s) \varphi(s)}}_{\eqd \, G(s)}}\,.
\end{equation*}
By inserting $\umat = p_1(s) + q_1(s)$ on both sides of $\dot{h}_1(s)$ we get (omitting the 
time argument $s$)
\begin{equation} \label{derivative of the energy expanded}
G \;=\; \scalar{\Psi}{p_1 \dot{h}_1 p_1 \Psi} - \scalar{\varphi}{\dot{h} \varphi} + 2 \re 
\scalar{\Psi}{p_1 \dot{h}_1 q_1 \Psi} + \scalar{\Psi}{q_1 \dot{h}_1 q_1 \Psi}\,.
\end{equation}
The first two terms of \eqref{derivative of the energy expanded} are equal to
\begin{equation*}
\pb{\scalar{\Psi}{p_1 \Psi} - 1}\scalar{\varphi}{\dot{h} \varphi} \;=\; \alpha 
\scalar{\varphi}{\dot{h} \varphi} \;\leq\; \beta \abs{\scalar{\varphi}{\dot{h} \varphi}} \,.  
\end{equation*}
The third term of \eqref{derivative of the energy expanded} is bounded, using Lemmas 
\ref{replacing q's with n's} and \ref{pulling projectors through}, by
\begin{align*}
2 \absb{\scalarb{\Psi}{p_1 \dot{h}_1 \, \widehat{n}^{1/2}\, \widehat{n}^{-1/2}\, q_1 \Psi}} 
&\;=\;
2 \absb{\scalarb{\dot{h}_1 p_1 \, \widehat{\tau_1 n}^{1/2}\, \Psi}{\widehat{n}^{-1/2}\, q_1 
\Psi}}
\\
&\;\leq\; \sqrt{\scalarb{\widehat{\tau_1 n}^{1/2} \, \Psi}{p_1 \dot{h}_1^2 p_1 \, 
\widehat{\tau_1 n}^{1/2} \, \Psi}}\, \normb{\widehat{n}^{-1/2} \, q_1 \Psi}
\\
&\;\leq\; \sqrt{\abs{\scalar{\varphi}{\dot{h}^2 \varphi}}} \, 
\sqrt{\scalarb{\Psi}{\widehat{\tau_1 n} \, \Psi}} \, \sqrt{\scalarb{\Psi}{ \widehat{n}^{-1} \, 
q_1 \Psi}}
\\
&\;\leq\; \sqrt{\abs{\scalar{\varphi}{\dot{h}^2 \varphi}}}\sqrt{\beta + \frac{1}{\sqrt{N}}} 
\sqrt{\beta}\,,
\\
&\;\lesssim\; \sqrt{\abs{\scalar{\varphi}{\dot{h}^2 \varphi}}} \pbb{\beta + 
\frac{1}{\sqrt{N}}}\,.
\end{align*}
The last term of \eqref{derivative of the energy expanded} is equal to
\begin{multline*}
\scalarb{\Psi}{q_1 (\umat + h_1)^{1/2} (\umat + h)^{-1/2} \dot{h}_1 (\umat + h_1)^{-1/2} (\umat 
+ h)^{1/2} q_1 \Psi} \\
\leq\; \normb{(\umat + h)^{-1/2} \dot{h} (\umat + h)^{-1/2}} \, \normb{(\umat + h_1)^{1/2} q_1 
\Psi}^2\,.
\end{multline*}
Thus, using Assumption (B1') we conclude that
\begin{equation} \label{first estimate for G}
G(t) \;\leq\; C(t) \pbb{\beta(t) + \frac{1}{\sqrt{N}} + \normb{h_1(t)^{1/2} q_1(t) \Psi(t)}^2}
\end{equation}
for all $t$. Here, and in the following, $C(t)$ denotes some continuous nonnegative function 
that does not depend on $N$. 

Next, we observe that, under Assumptions (B1') -- (B3'), the proof of Lemma \ref{lemma: energy 
estimate} remains valid for time-dependent one-particle Hamiltonians. Thus, \eqref{energy 
estimate for h} implies
\begin{equation*}
\normb{h_1(t)^{1/2} q_1(t) \Psi(t)}^2 \;\lesssim\; \cal E(t) + \pb{1 + \norm{\varphi(t)}_{X_1^2 
\cap L^\infty}^2} \pbb{\beta(t) + \frac{1}{\sqrt{N}}}\,.
\end{equation*}
Plugging this into \eqref{first estimate for G} yields
\begin{equation*}
G(t) \;\leq\; C(t) \pbb{\beta(t) + \frac{1}{\sqrt{N}} + \cal E(t)}\,.
\end{equation*}
Therefore,
\begin{equation} \label{main estimate on energy difference}
\cal E(t) \;\leq\; \cal E(0) + \int_0^t \dd s \, C(s) \pbb{\beta(s) + \cal E(s) + 
\frac{1}{\sqrt{N}}}\,,
\end{equation}

Next, we observe that, under Assumptions (B1') -- (B3'), the derivation of the estimate 
\eqref{final bound on beta} in the proof of Theorem \ref{theorem for singular potentials} 
remains valid for time-dependent one-particle Hamiltonians. Therefore,
\begin{equation} \label{main estimate on beta for time-dependent case}
\beta(t) \;\leq\; \beta(0) + \int_0^t \dd s \, C(s) \pbb{\beta(s) + \cal E(s) + 
\frac{1}{N^\eta}}\,.
\end{equation}
Applying Gr\"onwall's lemma to the sum of \eqref{main estimate on energy difference} and 
\eqref{main estimate on beta for time-dependent case} yields
\begin{equation*}
\beta(t) + \cal E(t) \;\leq\; \pb{\beta(0) + \cal E(0)} \, \ee^{\int_0^t C} + \frac{1}{N^\eta} 
\int_0^t \dd s \, C(s) \, \ee^{\int_0^t C}\,.
\end{equation*}
Plugging this back into \eqref{main estimate on beta for time-dependent case} yields
\begin{equation*}
\beta(t) \;\leq\; C(t) \pbb{\beta(0) + \cal E(0) + \frac{1}{N^\eta}}\,,
\end{equation*}
which is the claim.
\end{proof}

\providecommand{\bysame}{\leavevmode\hbox to3em{\hrulefill}\thinspace}
\providecommand{\MR}{\relax\ifhmode\unskip\space\fi MR }
% \MRhref is called by the amsart/book/proc definition of \MR.
\providecommand{\MRhref}[2]{%
  \href{http://www.ams.org/mathscinet-getitem?mr=#1}{#2}
}
\providecommand{\href}[2]{#2}

\end{document}